\providecommand{\algorithmname}{Algorithm}
\newcommand{\nn}{\nonumber}
\DeclareMathOperator*{\argmax}{arg\,max}
\newcommand{\manuallabel}[2]{\def\@currentlabel{#2}\label{#1}}
\pgfplotsset{compat=1.14}
\newtheorem{theorem}{Theorem}
\newtheorem{lemma}[theorem]{Lemma}
\newtheorem{dfn}[theorem]{Definition}
\newtheorem{cor}[theorem]{Corollary}
\newtheorem{assump}[theorem]{Assumption}
\newtheorem{fact}[theorem]{Fact}
\newcommand{\Ptilde}{\widetilde{P}}
\newcommand{\p}{\mathbb{P}}
\newcommand{\e}{\mathbb{E}}
\newcommand{\calX}{\mathcal{X}}
\newcommand{\calY}{\mathcal{Y}}
\newcommand{\calZ}{\mathcal{Z}}
\newcommand{\calW}{\mathcal{W}}
\newcommand{\z}{\vec{w}}
\newcommand{\w}{\vec{w}}
\newcommand{\dc}{d_{\rm C}}
\newcommand{\yk}{y_{1\ldots k}}
\newcommand{\yn}{y_{1\ldots n}}
\newcommand{\wn}{w_{1\ldots n}}
\newcommand{\dq}[1]{D(Q_s\|Q_{#1})}
\newcommand{\dqs}[1]{D(Q_{1-s}\|Q_{#1})}
\newcommand{\dt}[1]{D(Q_t\|Q_{#1})}
\newcommand{\dqt}[1]{D(Q_{1-t}\|Q_{#1})}
\newcommand{\pyxi}[1]{P(y|\vec{x}_{#1}^{(i)})}
\newcommand{\pyxii}[1]{P(y_i|\vec{x}_{#1}^{(i)})}
\title{Optimal 1-bit Error Exponent for 2-hop \\ Relaying with Binary-Input Channels}
\author{Yan Hao Ling and Jonathan Scarlett\thanks{The authors are with the  Department of Computer Science, School of Computing, National University of Singapore (NUS). J.~Scarlett is also with the Department of Mathematics, NUS, and the Institute of Data  Science, NUS. Emails: \url{lingyh@nus.edu.sg};  \url{scarlett@comp.nus.edu.sg}}}
\begin{document}
\maketitle

\begin{abstract}
    In this paper, we study the problem of relaying a single bit over a tandem of binary-input channels, with the goal of attaining the highest possible error exponent in the exponentially decaying error probability.  Our previous work gave an exact characterization of the best possible error exponent in various special cases, including when the two channels are identical, but the general case was left as an open problem.  We resolve this open problem by deriving a new converse bound that matches our existing achievability bound.
\end{abstract}

\section{Introduction}

The problem of relaying a single bit over tandem of channels has recently gained attention from a number of perspectives, with Huleihel, Polyanskiy, and Shayevitz adopting conventional relaying-based terminology and connecting the problem to that of many-hop relaying (information velocity) \cite{onebit}, and Jog and Loh motivating the same problem from the perspective of multi-agent teaching and learning \cite{jog2020teaching} (see also earlier works such as \cite{jadbabaie2013information,molavi2017foundations}).  These works focused on a tandem of binary symmetric channels (BSCs), and our subsequent work \cite{teachlearn} showed that in this case, the optimal 2-hop exponent is identical to the 1-hop exponent of the weaker channel.

In \cite{teachlearn}, we additionally considered general binary-input discrete memoryless channels, which we believe to be a natural next step after BSCs given that channel symmetry is not always to be expected in practice (whether in the context of relay channels or teaching/learning).  Specifically, we gave achievability and converse bounds that match in certain special cases, with a notable example being that the ``1-hop = 2-hop'' finding remains true when the two channels are identical.  However, we also gave examples where such a result is provably false, and where gaps remained in our bounds.

In this paper, we close all of the remaining gaps from \cite{teachlearn} for the case of binary-input discrete memoryless channels.  To do so, we derive an improved converse that matches the achievability result therein.  This converse is proved via entirely different techniques to \cite{teachlearn}, where we used a genie argument and a reduction to communication with feedback that appears to be inherently loose.

\subsection{Problem Setup}

We first formalize the model, which is depicted in Figure \ref{fig:setup}. There are three agents: an encoder, relay, and decoder. The ``encoder $\to$ relay'' channel is a discrete memoryless channel (DMC) $P$ and the ``relay $\to$ decoder'' is another DMC $Q$.  The input alphabets and output alphabets of $P$ are denoted by $\calX$ and $\calY$ respectively, and the input alphabets and output alphabets of $Q$ are denoted by $\calW$ and $\calZ$ respectively.  We limit our attention to binary-input channels, meaning that $\calX = \calW = \{0,1\}$.
 The unknown message of interest is a random variable $\Theta$ drawn uniformly from $\{0,1\}$, i.e., the message is one bit.

     \begin{figure*}[!t]
    \centering
    \begin{tikzpicture}
\draw (0.75,1.25) node {$\Theta$};
\draw (3.25,1.25) node {$X$};
\draw (5.75,1.25) node {$Y$};
\draw (8.25,1.25) node {$W$};
\draw (10.75,1.25) node {$Z$};
\draw (13.25,1.25) node {$\hat{\Theta}$};
\draw[->] (0.5,1) -- (1,1);
\draw[thick] (1,0.5) -- (1,1.5) -- (3,1.5) -- (3,0.5) -- (1,0.5);
\node at (2, 1) {Encoder};
\draw[->] (3,1) -- (3.5,1);
\draw[thick] (3.5,0.5) -- (3.5,1.5) -- (5.5,1.5) -- (5.5,0.5) -- (3.5,0.5);
\node at (4.5, 1) {$P$};
\draw[->] (5.5,1) -- (6,1);
\draw[thick] (6,0.5) -- (6,1.5) -- (8,1.5) -- (8,0.5) -- (6,0.5);
\node at (7, 1) {Relay};
\draw[->] (8,1) -- (8.5,1);
\draw[thick] (8.5,0.5) -- (8.5,1.5) -- (10.5,1.5) -- (10.5,0.5) -- (8.5,0.5);
\node at (9.5, 1) {$Q$};
\draw[->] (10.5,1) -- (11,1);
\draw[thick] (11,0.5) -- (11,1.5) -- (13,1.5) -- (13,0.5) -- (11,0.5);
\node at (12, 1) {Decoder};
\draw[->] (13,1) -- (13.5,1);
\end{tikzpicture}
    \caption{Illustration of our problem setup.}
    \label{fig:setup}
\end{figure*}
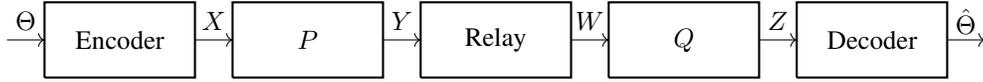

At time step $i \in \{1,\dotsc, n\}$, the following occurs (simultaneously):
    \begin{itemize} 
        \item The encoder transmits to the relay via one use of $P$. Let $X_i$ denote the input from the encoder and $Y_i$ denote the output received by the relay.
        \item The relay transmits to the decoder via one use of $Q$. Let $W_i$ denote the input from the relay and $Z_i$ denote the output received by the decoder.
    \end{itemize}
    Importantly, $W_i$ must only depend on $Y_1, \ldots, Y_{i-1}$ and not $Y_i,\ldots,Y_n$, i.e., the relay is not allowed to use information from the future.  At time $n$, having received $Z_1, \ldots, Z_n$, the decoder forms an estimate of $\Theta$, which we denote by $\hat{\Theta}_n$ (or sometimes simply $\hat{\Theta}$). 
    
    Let $P_e(n, P, Q) = \p(\hat{\Theta}_n \ne \Theta)$ be the error probability after $n$ time steps.  Then, the 2-hop error exponent is defined as
    \begin{equation}
        E(P,Q) = \sup_{{\rm protocols}} \liminf_{n\rightarrow \infty} \left\{ -\frac1n \log P_e(n,P,Q)\right\},
        \label{learning_rate}
    \end{equation}
    where ``protocols'' refers to the set of all possible designs of the encoder, relay, and decoder.

    We will often be interested in cases where the 2-hop error exponent is equal to a 1-hop exponent, where the latter is defined as follows: In a 1-hop setting where the encoder transmits directly to the decoder via the channel $P$ (respectively, $Q$), we define the corresponding highest possible error exponent as $E(P)$ (respectively, $E(Q)$).

    \subsection{Statement of Main Result}

    In our earlier work \cite{teachlearn}, we established the following achievability result.
    
    \begin{theorem}
        {\em (\cite[Thm.~2]{teachlearn})}  Let $P, Q$ be binary input DMCs, and let $P_0, P_1, Q_0, Q_1$ represent the conditional output distributions of $P$ and $Q$ for inputs $0,1$ respectively. Then\footnote{Having two inner ``max'' operations in \eqref{eq:achievability} is redundant, because if we replace one of them (but not the other) by their first argument, the overall quantity is unchanged.  However, this form will be slightly more convenient to work with.}
        \begin{align}
            E(P,Q) \geq &\max_{0\leq s \leq 1} \bigg(\min\Big( \nn \\
            &\max\big(\dc(P_0, P_1, s), \dc(P_0, P_1, 1-s)\big), \nn \\ &\max\big(\dc(Q_0, Q_1, s), \dc(Q_0, Q_1, 1-s)\big)\Big)\bigg), \label{eq:achievability}
        \end{align}
        where $\dc(P_0.P_1,s) = -\log \sum_{x \in \calX} P_0(x)^{1-s}P_1(x)^{s}$ for $s \in (0,1)$, and the endpoints $s \in \{0,1\}$ are defined via continuity (and similarly for the other $\dc$ terms in \eqref{eq:achievability}).
        \label{thm:teachlearn}
    \end{theorem}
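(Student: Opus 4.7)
The plan is to prove the bound constructively: for each $s\in[0,1]$, I exhibit a protocol attaining the inner expression of \eqref{eq:achievability}, so that the supremum over $s$ yields the claimed bound. The protocol is a decode-and-forward scheme with two optional ``inversion'' bits $a,b\in\{0,1\}$ that may be chosen as functions of $(P,Q,s)$: the encoder sends $X_i=\Theta\oplus a$, while the relay, after computing its current maximum-likelihood estimate $\hat{\Theta}_R^{(i)}$ of $\Theta$ from $Y_1,\ldots,Y_{i-1}$, transmits $W_i=\hat{\Theta}_R^{(i)}\oplus b$. Optimizing $a$ and $b$ corresponds to relabeling the input distributions of each channel, which swaps the roles of the two conditional distributions and precisely accounts for the two inner ``max'' operations in \eqref{eq:achievability}: the choice of $a$ selects the larger of $\dc(P_0,P_1,s)$ and $\dc(P_0,P_1,1-s)$, and similarly $b$ on the $Q$-side. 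The decoder performs an ordinary likelihood-ratio test on $Z_1,\ldots,Z_n$.

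The analysis has two parts. First, a Chernoff bound on the first-hop hypothesis test yields $\p(\hat{\Theta}_R^{(i)}\neq\Theta)\lesssim e^{-(i-1)\max(\dc(P_0,P_1,s),\dc(P_0,P_1,1-s))}$, so the number $S$ of time steps at which the relayed symbol is incorrect has $O(1)$ mean and exponentially decaying tails. Second, applying a Chernoff bound to the decoder's LRT at tilt $s$ and conditioning on the trajectory of $\hat{\Theta}_R^{(i)}$, the moment generating function of the test statistic splits into a product over correct and incorrect relay steps, giving a bound of the form $e^{-n\,\dc(Q_0,Q_1,s)}\cdot \e\bigl[(Z_{-s}/Z_s)^{S}\bigr]$, where $Z_s=\sum_w Q_0(w)^{1-s}Q_1(w)^s$.

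The main technical hurdle is showing that the factor $\e[(Z_{-s}/Z_s)^S]$ is only $O(1)$ as $n\to\infty$, so that the decoder's effective exponent matches $\max(\dc(Q_0,Q_1,s),\dc(Q_0,Q_1,1-s))$ after the optimal choice of $b$. This requires pairing the first-hop tail bound on $S$ against the per-error growth factor $Z_{-s}/Z_s$, and it is here that the coupling between the two hops through a common tilt $s$ becomes essential---a naive two-phase scheme with independent parameters yields only the substantially weaker exponent $E(P)E(Q)/(E(P)+E(Q))$, so the gain in the bound comes entirely from this joint analysis. Once this step is carried out, combining the two hops' contributions gives $P_e\lesssim e^{-n\min(\max(\dc(P_0,P_1,s),\dc(P_0,P_1,1-s)),\,\max(\dc(Q_0,Q_1,s),\dc(Q_0,Q_1,1-s)))}$ for each $s$, and maximizing over $s$ completes the proof.
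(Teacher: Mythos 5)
The statement you are proving is not proved in this paper: it is Theorem~2 of the earlier work \cite{teachlearn}, cited here only as a known achievability result, so there is no ``paper proof'' to compare against. Evaluating your proposal on its own terms, it contains a genuine gap at exactly the step you flag as the main hurdle, and I do not believe it can be repaired without changing the scheme.

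The quantity you call $Z_{-s}/Z_s$ --- the per-relay-error multiplicative penalty in the decoder's Chernoff bound --- is the ratio $\sum_z Q_0(z)^{-s}Q_1(z)^{1+s}\big/\sum_z Q_0(z)^{1-s}Q_1(z)^s$ (or its input-swapped analogue). For any binary-input $Q$ with a zero-probability transition, e.g.\ a Z-channel with $Q_0(1)=0<Q_1(1)$, the numerator is infinite for every $s\in(0,1)$, so the claimed factor $\e[(Z_{-s}/Z_s)^S]$ is $+\infty$ the moment $\p(S\ge1)>0$ --- which always holds, since the relay has no information at step~1. Concretely, a single early relay error followed by the observation $Z_i=1$ makes the naive LRT statistic $\sum_i\log\frac{Q_0(Z_i)}{Q_1(Z_i)}$ equal $-\infty$, forcing a decoding error; this event has $\Theta(1)$ probability, so the error probability does not even vanish. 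Since the present paper explicitly treats Z-channels (they are the motivating example for why the earlier converse of \cite{teachlearn} is loose), this is not an excludable corner case. Even when all transition probabilities are strictly positive, your argument sketch does not close: the only tail bound you invoke on $S$ (number of relay mistakes) is the Chernoff bound $\p(\hat\Theta_R^{(i)}\neq\Theta)\lesssim e^{-(i-1)E(P)}$, which via a union bound gives $\p(S\ge k)\lesssim e^{-kE(P)}$; for the symmetric case $P=Q=\mathrm{BSC}(0.1)$ at $s=\tfrac12$ one computes $\log(Z_{3/2}/Z_{1/2})=\log\frac{p^2+(1-p)^2}{2p(1-p)}\approx1.52$ while $E(P)=-\log 2\sqrt{p(1-p)}\approx0.51$, so the geometric series in $\e[(Z_{3/2}/Z_{1/2})^S]$ would diverge under that bound. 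A sharper treatment of the relay's log-likelihood random walk (its return probabilities, not just marginal error rates) would be required to decide whether the factor is finite, and you have not supplied it.

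There is also a minor conceptual slip: the inversion bits $a,b$ do not ``select'' the larger of $\dc(\cdot,\cdot,s)$ and $\dc(\cdot,\cdot,1-s)$, because both the relay's ML estimate and a threshold-zero LRT decoder are invariant under relabelling the channel input (the label swap simply negates the log-likelihood ratio, leaving the ML/threshold-zero decision unchanged). This does not by itself break the bound, since the ML relay achieves the stronger Chernoff-information exponent $\max_{s'}\dc(P_0,P_1,s')\ge\max(\dc(P_0,P_1,s),\dc(P_0,P_1,1-s))$, but it means the role you assign to $a,b$ is not doing the work you say it does. More importantly, the structure of \eqref{eq:achievability} --- a Chernoff divergence evaluated at a fixed $s$ on the $Q$-side rather than at the optimizing $s$ --- strongly suggests that the intended decoder is not a threshold-zero LRT but a tilted one, and that the decoder (or the relay's decision rule) must be designed to be robust to the relay's early mistakes rather than treating the relayed stream as if it were error-free. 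As written, your argument has a concrete failure mode and a missing key lemma, so it does not establish the theorem.
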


    We showed in \cite{teachlearn} that Theorem \ref{thm:teachlearn} is tight in certain special cases, including when $P=Q$ or when $Q$ is a binary symmetric channel.  However, its general tightness was left as an open problem.  In this paper, we resolve this open problem by showing that Theorem \ref{thm:teachlearn} is tight with no further assumptions.
    
    \begin{theorem}
        The inequality in Theorem \ref{thm:teachlearn} is tight for any binary-input channels $P,Q$, i.e.
        \begin{align}
            E(P,Q) = &\max_{0\leq s \leq 1} \bigg(\min\Big( \nn \\
            &\max\big(\dc(P_0, P_1, s), \dc(P_0, P_1, 1-s)\big), \nn \\
            & \max\big(\dc(Q_0, Q_1, s), \dc(Q_0, Q_1, 1-s)\big)\Big)\bigg). 
        \label{eq:2hop_optimal}
        \end{align}
        \label{thm:binary_optimal}
    \end{theorem}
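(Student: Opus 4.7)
My approach to the converse is to fix any protocol, let \(q_0, q_1\) denote the distributions of \(Z^n\) under \(\Theta = 0, 1\), and analyze the Bhattacharyya functional \(B_s(q_0,q_1) := \sum_{z^n} q_0(z^n)^{1-s}q_1(z^n)^s\). Writing \(A(s) := \max(\dc(P_0,P_1,s),\dc(P_0,P_1,1-s))\), defining \(B(s)\) analogously for \(Q\), and setting \(\rate^{\mathrm{out}}(s) := -\log B_s(q_0,q_1)\), the first step is to establish \(\rate^{\mathrm{out}}(s) \le n\min(A(s),B(s))\) for every \(s \in [0,1]\). The bound \(\rate^{\mathrm{out}}(s) \le nA(s)\) follows from the data processing inequality for Bhattacharyya (a consequence of the reverse H\"older inequality): since \(Z^n\) is a post-processing of \(Y^n\) via the causal relay strategy and the channel \(Q^n\), one has \(B_s(q_0,q_1) \ge B_s(\p_{Y^n\mid 0},\p_{Y^n\mid 1})\), and the right-hand side factorizes by the memorylessness of \(P\) into \(\prod_i B_s(P(\cdot|x^0_i),P(\cdot|x^1_i))\), with each per-symbol factor at least \(e^{-A(s)}\). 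The bound \(\rate^{\mathrm{out}}(s) \le nB(s)\) exploits the binary bottleneck: because \(B_s\) is concave in each argument, minimizing \(B_s(q_0,q_1)\) over all (not necessarily causal) pairs of input distributions \((\p_{W^n\mid 0}, \p_{W^n\mid 1})\) reduces by an extreme-point argument to deltas, and a direct computation gives minimum value \(e^{-nB(s)}\), attained when \(W^n\) is deterministically flipped between the two hypotheses.

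The second step converts the bound on \(\rate^{\mathrm{out}}\) into a bound on \(E(P,Q)\) via the decoder log-likelihood ratio \(\Lambda_n := \log(q_1(Z^n)/q_0(Z^n))\). Let \(s^\star\) maximize \(\rate^{\mathrm{out}}(s)\); for the Bayes-optimal decoder, \(P_e \ge \tfrac{1}{2}\p_0(\Lambda_n \ge 0)\). Introducing the tilted measure \(q_{s^\star} := q_0^{1-s^\star}q_1^{s^\star}/B_{s^\star}(q_0,q_1)\), the exact change-of-measure identity gives \(\p_0(\Lambda_n \ge 0) = B_{s^\star}(q_0,q_1) \cdot \e_{q_{s^\star}}\!\bigl[e^{-s^\star \Lambda_n}\mathbf{1}\{\Lambda_n \ge 0\}\bigr]\). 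Since \(s^\star\) is a critical point of \(\rate^{\mathrm{out}}\), \(\e_{q_{s^\star}}[\Lambda_n] = 0\); combined with bounded per-symbol LLR increments (from the finite channel alphabets), a concentration argument for \(\Lambda_n\) under \(q_{s^\star}\) should yield \(\e_{q_{s^\star}}\!\bigl[e^{-s^\star \Lambda_n}\mathbf{1}\{\Lambda_n \ge 0\}\bigr] \ge e^{-o(n)}\). Combining with the first step, \(P_e \ge e^{-n\min(A(s^\star),B(s^\star)) - o(n)}\), and hence \(E(P,Q) \le \min(A(s^\star),B(s^\star)) \le \max_s \min(A(s),B(s))\).

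The main obstacle is the concentration step within the second stage. A naive Bhattacharyya-style argument such as \(P_e \ge \tfrac{1}{4}B_{1/2}(q_0,q_1)^2\) only yields \(E(P,Q) \le 2\min(A(1/2),B(1/2))\), off by a factor of two. Attaining the tight constant requires a Cram\'er/G\"artner--Ellis large-deviation treatment of \(\Lambda_n\) under \(q_{s^\star}\); since \(q_{s^\star}\) lacks i.i.d.\ structure (the relay's causal dependence does not factor cleanly through the exponential tilt), the variance bound needed for Chebyshev-style concentration must be argued from the bounded per-symbol LLR increments together with the specific protocol structure, which I expect to be the main technical hurdle in carrying out this plan.
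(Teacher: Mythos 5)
Your step~1 is correct: $\rate^{\mathrm{out}}(s) := -\log B_s(q_0,q_1) \le n\min(A(s),B(s))$ does follow from the data-processing inequality for the Bhattacharyya/R\'enyi quantity on the $P$ side and from concavity of $B_s(\cdot,\cdot)$ in each argument plus tensorization on the $Q$ side. The gap you flag in step~2, however, is not a fixable technical detail but the point at which the argument breaks. Two concrete obstructions. First, the premise of ``bounded per-symbol LLR increments'' for the decoder's observations is false in general: writing $q_j(z_k\mid z^{k-1}) = \sum_{w\in\{0,1\}}\p(W_k=w\mid Z^{k-1}=z^{k-1},\Theta=j)\,Q(z_k\mid w)$, when $Q_0$ and $Q_1$ have non-identical (but overlapping) supports --- which the paper's setup allows --- this mixture can be arbitrarily small relative to $q_{1-j}(z_k\mid z^{k-1})$ depending on the relay's conditional posterior, so the per-step LLR is not bounded by $\log(1/p_{\min})$. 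Second, and more fundamentally, even with bounded increments the tilted output $Z^n\sim q_{s^\star}$ has no i.i.d.\ or martingale-difference structure, and $-\dc''(q_0,q_1,s^\star)=\mathrm{Var}_{q_{s^\star}}(\Lambda_n)$ can be $\Theta(n^2)$. Concretely, a (suboptimal) relay that commits to a hard decision $\hat{\Theta}_R$ at time $n/2$ and repeats it for the remaining steps makes $\Lambda_n$ bimodal with modes near $\pm cn$; a short computation with $\p_0(\Lambda_n = cn)=e^{-an}$ and $s^\star \approx a/(2c)$ gives $B_{s^\star}(q_0,q_1)\approx 2e^{-an/2}$ while $\p_0(\Lambda_n\ge 0)=e^{-an}$, so $\e_{q_{s^\star}}[e^{-s^\star\Lambda_n}\mathbf{1}\{\Lambda_n\ge0\}]\approx \tfrac12 e^{-an/2}$ is \emph{not} $e^{-o(n)}$, and the inequality $P_e\ge B_{s^\star}(q_0,q_1)e^{-o(n)}$ that your plan requires is simply false for such protocols. (This is precisely the classical factor-of-two Bhattacharyya slack you mention; the critical-tilt trick removes it only when the $\sqrt{-\dc''}$ term is $o(n)$, which there is no reason to expect here.)

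This is exactly why the paper's proof never applies a Berlekamp-type bound to the non-product pair $(q_0,q_1)$. Instead it conditions on the relay's observation $\yk$, applies the bound only to the \emph{product} channel $Q^n$ (for which Lemma~\ref{lem:d2c} gives $-\dc''\le n(\log(1/p_{\min}))^2$ uniformly), and then spends the bulk of the effort on the machinery that makes this conditioning single-letterize: choosing the transition index $k$ where the two sides of \eqref{eq:wishful} roughly balance (Lemmas~\ref{lem:l_count} and \ref{lem:perr_likelihood}), the prefix-free decomposition of the error probability (Lemmas~\ref{lem:prefix_free}--\ref{lem:pe01_lb}), and a Chebyshev bound under the tilted $P^n$ law --- not the tilted output law --- to control the bad event (Lemma~\ref{lem:finish}). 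Your outer change-of-measure identity and the critical-point cancellation $\e_{q_{s^\star}}[\Lambda_n]=0$ are fine as far as they go, but as stated the remaining concentration claim is false for some protocols, so the approach does not yield the converse without an entirely new way to control the output-side variance.
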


    Note that if $P_0$ and $P_1$ have disjoint support, then it is possible for the encoder to convey a single bit to the relay in a single time step, after which the relay can then execute any 1-hop optimal protocol.  Similarly, if $Q_0$ and $Q_1$ have disjoint support, then the relay can wait until the last time step, make the best guess of $\Theta$, and then transmit 1 bit to the decoder with absolute certainty.  In such scenarios, the right-hand side of \eqref{eq:2hop_optimal} reduces to a 1-hop exponent $E(P)$ or $E(Q)$, and the result immediately from known 1-hop results \cite{berlekampI}.  We therefore exclude the above scenarios throughout the rest of these paper; it is then easily checked that $\dc(P_0, P_1, s)$ and $\dc(Q_0, Q_1, s)$ are finite for all $0\leq s \leq 1$.

    We will in fact show a more precise error bound (stated below), which immediately implies the upper bound on $E(P,Q)$ required to establish Theorem \ref{thm:binary_optimal}.
    Before stating the more precise bound, we introduce the following definition, which will play a crucial role in our analysis.
    
    \begin{dfn}
        $p_{\min}$ denotes the smallest non-zero transition probability among the channels $P$ and $Q$.
    \end{dfn}

    Note that we always have $p_{\min} > 0$, and this definition does not preclude the possibility of zero-probability transitions in $P$ and/or $Q$.
    
    \begin{theorem}
        Under the conditions of Theorem \ref{thm:teachlearn}, let $p_{e,0}$ and $p_{e,1}$ be the error probability conditioned on $\Theta=0$ and $\Theta=1$ respectively, and let $E^*$ denote the right-hand side of \eqref{eq:achievability}. Then, regardless of the choice of encoder, relay, and decoder, we have
    \begin{equation}
        -\log (p_{e,0} + p_{e,1}) \leq nE^* + (\sqrt{2n} + 4)\log \frac{1}{p_{\min}} + \log 8.
    \end{equation}
    \label{thm:binary_optimal_exact}
    \end{theorem}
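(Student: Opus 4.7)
My plan is to lower bound the Bhattacharyya-type quantity $C(s)=\sum_{z^n}\p_0(z^n)^{1-s}\p_1(z^n)^s$, where $\p_b$ denotes the law of $Z^n$ under $\Theta=b$, and then convert this into a lower bound on $p_{e,0}+p_{e,1}$ via a tilting (change-of-measure) argument. A routine derandomization (for any randomized protocol, conditioning on whichever realization of the internal randomness minimizes $p_{e,0}+p_{e,1}$ yields a deterministic protocol no worse than it) lets me assume throughout that the encoder, relay, and decoder are deterministic.

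The central estimate is that, for every $s\in[0,1]$,
\[
C(s)\;\geq\;2^{-n\min(D_P(s),\,D_Q(s))},
\]
where $D_P(s)=\max(\dc(P_0,P_1,s),\dc(P_0,P_1,1-s))$ and $D_Q(s)$ is defined analogously. I would prove this via two complementary bounds. For the $P$-bound, compute the Bhattacharyya quantity on the joint tuple $(Y^n,W^n,Z^n)$: the deterministic encoder/relay makes the joint sum factor as $\prod_i\sum_{y_i}P_{f_i(0)}(y_i)^{1-s}P_{f_i(1)}(y_i)^s$, each factor being $2^{-\dc(P_0,P_1,s)}$, $2^{-\dc(P_0,P_1,1-s)}$, or $1$ depending on $(f_i(0),f_i(1))$; data processing from $(Y^n,W^n,Z^n)$ to $Z^n$ can only increase the Bhattacharyya sum, so $C(s)\geq 2^{-nD_P(s)}$. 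For the $Q$-bound, start from $\p_b(z^n)=\sum_{w^n}\p_b(w^n)\prod_i Q_{w_i}(z_i)$ and apply Jensen to each of the factors $\p_0(z^n)^{1-s}$ and $\p_1(z^n)^{s}$ (concavity of $x^\alpha$ for $\alpha\in(0,1)$) to obtain $C(s)\geq \e\bigl[\prod_i 2^{-\dc(Q_{W_{0,i}},Q_{W_{1,i}},s)}\bigr]\geq 2^{-nD_Q(s)}$, where $W_0^n,W_1^n$ are independent draws from $\p_0(W^n),\p_1(W^n)$.

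Let $\tilde{s}$ minimize $C(s)$. The estimate above gives $C(\tilde{s})\geq 2^{-nE^*}$, and at an interior minimum the condition $C'(\tilde{s})=0$ is exactly $\e_R[L]=0$, where $R(z^n)=\p_0(z^n)^{1-\tilde{s}}\p_1(z^n)^{\tilde{s}}/C(\tilde{s})$ and $L=\log(\p_0(Z^n)/\p_1(Z^n))$. The identities $\p_0=R\cdot C(\tilde{s})\,e^{\tilde{s}L}$ and $\p_1=R\cdot C(\tilde{s})\,e^{-(1-\tilde{s})L}$ then yield
\[
p_{e,0}+p_{e,1}\;\geq\;C(\tilde{s})\,\e_R\!\bigl[\min(e^{\tilde{s}L},e^{-(1-\tilde{s})L})\bigr]\;\geq\;C(\tilde{s})\,\e_R\!\bigl[e^{-|L|}\bigr].
\]
Writing $L=\sum_i L_i$ with $L_i=\log(\p_0(Z_i\mid Z^{i-1})/\p_1(Z_i\mid Z^{i-1}))$ and observing that on the $R$-support each conditional $\p_b(z_i\mid z^{i-1})=\sum_w\p_b(W_i=w\mid z^{i-1})Q_w(z_i)$ lies in $[p_{\min},1]$, we obtain $|L_i|\leq \log(1/p_{\min})$. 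The conditionals of $R$ are themselves per-step Chernoff tilts of those of $\p_0,\p_1$, so after centering the $L_i$ form a bounded martingale-difference sequence under $R$; Azuma--Hoeffding then gives $\p_R(|L|\geq\sqrt{2n}\log(1/p_{\min}))\leq 1/2$, whence $\e_R[e^{-|L|}]\geq\tfrac12 e^{-\sqrt{2n}\log(1/p_{\min})}$. Combining all of the above and taking $-\log$ yields $-\log(p_{e,0}+p_{e,1})\leq nE^* + \sqrt{2n}\log(1/p_{\min})+O(\log(1/p_{\min}))$, matching the theorem's form.

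The most delicate step will be the concentration argument. When $Q_0$ and $Q_1$ have disjoint support on some output symbol, the per-step likelihood ratio can involve $1/\min_b\p_b(W_i=1\mid z^{i-1})$ rather than $1/p_{\min}$, so the clean bound $|L_i|\leq\log(1/p_{\min})$ requires more careful justification; this is presumably where the extra additive $4\log(1/p_{\min})$ slack in the statement is absorbed. One also needs to handle the boundary cases $\tilde{s}\in\{0,1\}$, in which $C\equiv 1$ (forcing $\p_0=\p_1$) and the bound is trivially true.
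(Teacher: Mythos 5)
The reduction to a lower bound on the Chernoff quantity $C(s)=\sum_{z^n}\p_0(z^n)^{1-s}\p_1(z^n)^s$ is an appealing and genuinely different strategy from the paper's, and the two complementary estimates $C(s)\ge 2^{-nD_P(s)}$ (data processing through $(Y^n,W^n,Z^n)$, then single-letterization of the $P$ codewords) and $C(s)\ge 2^{-nD_Q(s)}$ (Jensen on $x\mapsto x^{1-s}$ and $x\mapsto x^s$, then tensorization over the $Q$-channel uses) are both correct; combined they give $C(s)\ge 2^{-n\min(D_P(s),D_Q(s))}\ge 2^{-nE^*}$ for \emph{every} $s$, which is a clean way to obtain the exponent. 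The tilting identities $\p_0=R\,C(\tilde s)\,e^{\tilde s L}$, $\p_1=R\,C(\tilde s)\,e^{-(1-\tilde s)L}$, and the resulting bound $p_{e,0}+p_{e,1}\ge C(\tilde s)\,\e_R[e^{-|L|}]$ are also correct, and the stationarity condition $\e_R[L]=0$ at an interior minimizer $\tilde s$ is right.

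The gap is the concentration step, and I do not believe it is a matter of ``more careful justification'' that the additive slack absorbs: the per-step conditional log-likelihood ratio $L_i=\log\bigl(\p_0(Z_i\mid Z^{i-1})/\p_1(Z_i\mid Z^{i-1})\bigr)$ is \emph{not} bounded by $\log(1/p_{\min})$ whenever $Q_0$ and $Q_1$ have unequal supports, which the paper explicitly allows (only \emph{disjoint} supports are excluded). Concretely, if $Q$ is a $Z$-channel with $Q_0(1)=0$ and $Q_1(1)=q>0$, then observing $Z_i=1$ reveals $W_i=1$ and gives $L_i=\log\bigl(\p_0(W_i=1\mid z^{i-1})/\p_1(W_i=1\mid z^{i-1})\bigr)$. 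This ratio of posteriors is controlled only by how informative the protocol has been up to time $i$, not by $p_{\min}$; a good protocol will deliberately drive $\p_0(W_i=1\mid z^{i-1})$ and $\p_1(W_i=1\mid z^{i-1})$ apart, so $|L_i|$ can be $\Theta(1)$ with no dependence on $p_{\min}$, and this can happen at a constant fraction of time steps. Consequently neither the bounded-differences hypothesis for Azuma nor the variance bound $n(\log 1/p_{\min})^2$ for Chebyshev holds, and $\e_R[e^{-|L|}]$ is not controlled. (When $Q_0,Q_1$ share support, $\p_b(z_i\mid z^{i-1})$ is a convex combination of $Q_0(z_i),Q_1(z_i)$, both in $[p_{\min},1]$, and your bound does hold; but that special case was already settled in \cite{teachlearn}.) This is precisely the obstruction that forces the paper to run the concentration argument on the \emph{relay's} side, using the tilted distribution $\Ptilde_t$ on $\yn$ where the increments $\log\bigl(P(y_i|\vec x_0^{(i)})/P(y_i|\vec x_1^{(i)})\bigr)$ are bounded by $\log(1/p_{\min})$ on the support of $\Ptilde_t$, and then to relate the relay-side log-likelihood to the decoder's error via the prefix-free set and the sign-change/transition-point machinery of Lemmas~\ref{lem:l_count}--\ref{lem:finish}. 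To repair your approach you would need to control the fluctuations of $L$ under $R$ without a per-step bound, and I do not see how to do that.
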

    
    The rest of the paper will build towards the proof of Theorem \ref{thm:binary_optimal_exact}.  We introduce some preliminary notation and results in Section \ref{sec:prelim}, provide some useful 1-hop communication results in Section \ref{sec:1-hop}, give an overview of our main proof in Section \ref{sec:overview}, identify some simplifying conditions that can be assumed in Section \ref{sec:reduction}, and complete the proof under those conditions in Section \ref{sec:main}.

\subsection{Other Related Work}

Two prior works are by far the most related, namely, our previous work \cite{teachlearn} in which we provided the achievability result stated in Theorem \ref{thm:teachlearn}, and the early work of Shannon, Gallager, and Berlekamp \cite{berlekampI} on the optimal error exponent in the 1-hop setting, from which we borrow several technical tools.  Other works are less directly related, so we only provide a brief summary.

As mentioned in the introduction, the initial works \cite{onebit,jog2020teaching} focused on binary symmetric channels.  They studied various relaying strategies such as direct forwarding and forwarding of the best guess so far, and found that none of these dominate one another in general.  Our work \cite{teachlearn} solved the BSC case and studied more general binary-input channels, but only fully solved it in certain special cases.  For comparison to our main result, we outline two converse bounds that were given in \cite{teachlearn}:
\begin{itemize}
    \item The optimal error exponent is trivially upper bounded by the worse of the two 1-hop exponents:
    \begin{align}
        &E(P,Q) \le \min\big\{ E(P), E(Q) \big\} \nn \\ 
        &= \min\bigg( \max_{0 \le s \le 1} \dc(P_0,P_1,s), \max_{0 \le s \le 1} \dc(Q_0,Q_1,s) \bigg). \label{eq:trivial_converse}
    \end{align}
    Observe that if the order of the max (with respect to $s$) and min (of two terms) were swapped in \eqref{eq:achievability}, then the result would match \eqref{eq:trivial_converse}.  This implies that \eqref{eq:trivial_converse} is tight when the same $s$ value maximizes both terms, e.g., when $P=Q$.  However, in general, swapping the min-max order can change the value.
    \item An improved converse was given in \cite{teachlearn} based on a genie argument that gives the decoder direct access to the relay's output for the first $\gamma n$ time steps, and gives the relay direct access to $\Theta$ for the remaining $(1-\gamma)n$ time steps.  This reduces the problem to a point-to-point channel \emph{with feedback}, namely, $\gamma n$ uses of $P$, followed by full feedback of all outputs so far, followed by $(1-\gamma)n$ uses of $Q$.  It is shown in \cite{teachlearn} that when $P$ is a BSC and $Q$ is a Z-channel, there is still a gap between the resulting converse and the exponent in Theorem \ref{thm:teachlearn}.
\end{itemize}

Other directions in this line of works have included multi-bit relaying \cite{multibit} (where the transmitted message
$\Theta$ is no longer binary) and more general graphical network structures \cite{multirelay,optbiso,maxflow}.  As discussed in \cite{onebit}, the problem that we study also has connections to many-hop settings and information velocity, which was further studied in \cite{infovelocity,packeterasure,domanovitz2023information} (see also the much earlier work \cite{schulman_1994}).

The problem of characterizing the \emph{capacity} of relay channels is a substantially different one, for which we refer the reader to \cite{cover_thomas,gamalkim} for an overview.  Similarly, the study of \emph{positive-rate} error exponents for relay channels (e.g., see \cite{bradford2012error, highrates, ogbe2019optimal}) is largely distinct, relying on different techniques such as random coding.

\section{Preliminaries} \label{sec:prelim}

\subsection{Further Definitions and Notation}

The encoder, upon seeing $\Theta=0$ or $\Theta=1$, sends out a length-$n$ codeword, which we will denote as $\vec{x}_0$ or $\vec{x}_1$ respectively. We let $p_{\min} > 0$ denote the smallest nonzero transition probability of the channels $P$ and $Q$ (while still allowing zero-probability transitions in both channels).

\begin{dfn}
For any string $\w$ with symbols in $\{0,1\}$, we will use $n_0(\w)$ (and similarly $n_1(\w)$) to represent the number of 0s (and similarly number of 1s) in the string $\w$.
\label{dfn:n0_n1}
\end{dfn}

The following ``tilted'' distributions are widely used in the study of 1-hop error exponents \cite{berlekampI}, and will also be used frequently in our work.

\begin{dfn} \label{def:tilted}
For all $0<s<1$, define the distributions $P_s$ and $Q_s$ by
\begin{equation}
    P_s(y) = \frac{P_0(y)^{1-s}P_1(y)^s}{\sum_{y'} P_0(y')^{1-s}P_1(y')^s}
\end{equation}
and
\begin{equation}
    Q_s(z) = \frac{Q_0(z)^{1-s}Q_1(z)^s}{\sum_{z'} Q_0(z')^{1-s}Q_1(z')^s}.
    \label{eq:def_qs}
\end{equation}
\end{dfn}


We will use $\yk$ as a shorthand for $(y_1, y_2,\ldots, y_k)$, and similarly for $\yn$ and other sequences such as $\wn$.  We will use notation such as $P(\yk)$ as a shorthand for $\prod_{i=1}^k P(y_i)$ (and similarly for $Q$), or alternatively $P^k(\yk)$ when we want to make the dependence on the length explicit.
For all $\yk$, we define
\begin{align}
    p_{e,0}(\yk) = &~\p(\hat{\Theta} = 1 | \Theta=0 \hbox{ and the relay} \nn \\ &\hbox{ receives $\yk$ in the first $k$ time steps}) 
    \label{eq:pe0_def}
\end{align}
\begin{align}
    p_{e,1}(\yk) = &~\p(\hat{\Theta} = 0 | \Theta=1 \hbox{ and the relay} \nn \\ &\hbox{receives $\yk$ in the first $k$ time steps}),
    \label{eq:pe1_def}
\end{align}
which simplify to $p_{e,0}$ and $p_{e,1}$ (used in Theorem \ref{thm:binary_optimal_exact}) when $k=0$.

Even though our problem setup allows the use of randomized protocols, it is sufficient to prove the error bound for deterministic protocols. This is because any randomized protocol can be viewed as a distribution over deterministic protocols, and therefore no randomized protocol can attain a smaller average error probability than the best deterministic protocol.

Under the assumption that the encoder is deterministic, the two possible transmitted codewords are fixed sequences, say $\vec{x}_0$ and $\vec{x}_1$ for the cases $\Theta=0$ and $\Theta=1$ respectively.  
Similarly, under the assumption that the relay behaves deterministically, the first $k$ symbols sent out by the relay are given by a fixed function of $\yk$. We can therefore define\footnote{In fact, $\yk$ determines the first $k+1$ symbols sent by the relay, but for our purposes, using this definition with the first $k$ symbols is sufficient, and leads to slightly simpler notation.}
\begin{align}
    \w(\yk) = &\hbox{ The first $k$ symbols sent out } \nn \\ & \hbox{ by the relay after receiving $\yk$}. \label{eq:w_def}
\end{align}
When considering a partial received string $\yk$ at the relay, and hence fixed $\w(\yk)$, it will often be useful to consider the \emph{best case scenario} for the remaining $n-k$ entries of $\wn$ (since any given protocol will have performance no better than this best case, and we are seeking a converse result).  To formalise this, let $D_0$ and $D_1$ be decoding regions over the output of $Q^n$ (i.e., the decoder estimates $\hat{\Theta} = \nu$ when the output is in $D_{\nu}$).  
Then, we let $\w_1(\yk)$ be the extension of $\w(\yk)$ to a string $\wn$ of length $n$ which minimizes
\begin{equation}
    \p_{\vec{z} \sim Q^n(\cdot|\wn)}(\vec{z} \in D_0). \label{eq:w1_def}
\end{equation}
Observe that
\begin{equation}
    p_{e,1}(\yk) = \e (\p_{\vec{z} \sim Q^n(\cdot|\wn)}(\vec{z} \in D_0)),
\end{equation}
where the expectation is taken over some suitable distribution on $\wn$, each realization of which is an extension of $\w(\yk)$. Therefore, the definition of $\w_1(\yk)$ gives
\begin{equation}
    \p_{\vec{z} \sim Q^n(\w_1(\yk))}(\vec{z} \in D_0) \leq p_{e,1}(\yk).
    \label{eq:w1pe1}
\end{equation}
Similarly let $\w_0(\yk)$ be the extension of $\w(\yk)$ to a string of length $n$, $\wn$ which minimizes
\begin{equation}
    \p_{\vec{z} \sim Q^n(\cdot|\wn)}(\vec{z} \in D_1),
\end{equation}
again noting that
\begin{equation}
    \p_{\vec{z} \sim Q^n(\cdot|\w_0(\yk))}(\vec{z} \in D_1) \leq p_{e,0}(\yk).
    \label{eq:w0pe0}
\end{equation}
If $k=0$, then $\yk$ is the ``empty string'', and we will denote the corresponding strings $\w_0(\yk)$ and $\w_1(\yk)$ by $\w_0(\emptyset)$ and $\w_1(\emptyset)$.

The following observation relating the performance given $\yk$ vs.~$y_{1\ldots k+1}$ will be useful.  Here and throughout the rest of the paper, we let $(\cdot)^{(i)}$ denote the $i$-th element of a sequence.

\begin{lemma} \label{lem:pmin_bound}
    Fix any sequence $y_{1\dotsc k+1}$, and let $\vec{x}_1 = (\vec{x}_1^{(1)},\dotsc,\vec{x}_1^{(n)})$ be the codeword sent out by the encoder when $\Theta=1$.  If $P(y_{k+1}|\vec{x}_1^{(k+1)})>0$, then
    \begin{equation}
        p_{e,1}(\yk) \geq  p_{\min} p_{e,1}(y_{1\ldots k+1}).
        \label{eq:bigger_of_two}
    \end{equation}
\end{lemma}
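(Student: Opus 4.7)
The plan is to prove this by a one-step law-of-total-probability expansion, exploiting that the channel $P$ is memoryless and the encoder is deterministic. In more detail, fixing $\Theta=1$ so that the encoder sends the deterministic codeword $\vec{x}_1$, the relay input at time $k+1$ is $\vec{x}_1^{(k+1)}$, and the conditional law of $Y_{k+1}$ given $\Theta=1$ and $\yk$ is simply $P(\cdot \mid \vec{x}_1^{(k+1)})$ (independent of the past outputs by memorylessness).

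Next I would condition on the next received symbol to write
\begin{equation}
    p_{e,1}(\yk) \;=\; \sum_{y' \in \calY} P(y' \mid \vec{x}_1^{(k+1)}) \, p_{e,1}(y_{1\ldots k}, y'),
\end{equation}
which is an immediate consequence of the tower property together with the definitions \eqref{eq:pe1_def} of the conditional error probabilities (all events on the right-hand side are well-defined since $\Theta=1$ is fixed and the relay/decoder behavior is deterministic given the relay's observations). Since every summand is nonnegative, I may drop all terms except the one indexed by $y' = y_{k+1}$ to obtain
\begin{equation}
    p_{e,1}(\yk) \;\geq\; P(y_{k+1} \mid \vec{x}_1^{(k+1)}) \, p_{e,1}(y_{1\ldots k+1}).
\end{equation}

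Finally, under the hypothesis $P(y_{k+1} \mid \vec{x}_1^{(k+1)}) > 0$, the definition of $p_{\min}$ as the smallest nonzero transition probability of $P$ and $Q$ gives $P(y_{k+1} \mid \vec{x}_1^{(k+1)}) \geq p_{\min}$, which yields \eqref{eq:bigger_of_two}. There is no real obstacle here: the whole statement is essentially a monotonicity property of conditional probabilities combined with the uniform lower bound $p_{\min}$ on nonzero transitions. The only mild subtlety to keep in mind is that the equality expansion above uses deterministic encoder/relay (as already justified in the preliminaries), so that the conditioning event $\{\Theta=1,\,\yk\}$ fully pins down the encoder's next input $\vec{x}_1^{(k+1)}$ before the channel acts.
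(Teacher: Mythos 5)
Your proof is correct and essentially the same as the paper's: both arrive at the key intermediate inequality $p_{e,1}(\yk) \geq P(y_{k+1}\mid \vec{x}_1^{(k+1)})\,p_{e,1}(y_{1\ldots k+1})$ and then invoke $P(y_{k+1}\mid \vec{x}_1^{(k+1)}) \geq p_{\min}$. The paper writes the conditional error probability as a ratio of joint probabilities and uses event containment plus multiplying/dividing by $P(y_{k+1}\mid \vec{x}_1^{(k+1)})$, whereas you phrase it as a tower-property expansion over $Y_{k+1}$ and drop terms, but these are the same computation.
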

\begin{proof}
    We proceed via a straightforward computation:
    \begin{align}
    &p_{e,1}(\yk) \nn \\
    &= \frac{\p(\hat{\Theta}=0,\yk|\Theta=1)}{\p(\yk|\Theta=1)}\\
    &\geq \frac{\p(\hat{\Theta}=0,\yk,y_{k+1}|\Theta=1)}{\p(\yk|\Theta=1)}\\
    &= \frac{\p(\hat{\Theta}=0,\yk,y_{k+1}|\Theta=1)\cdot P(y_{k+1}|\vec{x}_1^{(k+1)})}{\p(\yk|\Theta=1)\cdot P(y_{k+1}|\vec{x}_1^{(k+1)})}\\
    &= \frac{\p(\hat{\Theta}=0,\yk,y_{k+1}|\Theta=1)\cdot P(y_{k+1}|\vec{x}_1^{(k+1)})}{\p(y_{1\ldots k+1}|\Theta=1)}\\
    &= P(y_{k+1}|\vec{x}_1^{(k+1)})p_{e,1}(y_{1\ldots k+1})\\
    &\geq p_{\min} p_{e,1}(y_{1\ldots k+1}).
    \end{align}
\end{proof}

\subsection{Chernoff Divergence and its Properties}

In this subsection, we let $Q$ be a generic binary-input discrete memoryless channel, and write $Q_x = Q(\cdot|x)$.  We also use $P$ (and sometimes $P'$) in a generic manner, either to mean a distribution on some alphabet $\calX$ or a conditional distribution on $\calY$ given $\calX$ (i.e., another discrete memoryless channel).  Accordingly, in this subsection, the notations $P,Q$ should not be taken to necessarily be related to the two channels in our problem setup.

For two probability distributions $P,P'$ over some finite set $\calX$, the Chernoff divergence with parameter $s$ ($0<s<1$) is given by
\begin{equation}
    \dc(P, P',s) =  -\log \sum_{x \in \calX}  P(x)^{1-s}P'(x)^{s}.
    \label{eq:dc}
\end{equation}
We also define $\dc(P,P',0)$ and $\dc(P,P',1)$ by continuity, e.g., $\dc(P, P', 0) = \lim_{s \rightarrow 0^+} \dc(P, P', s)$.  Recall from the discussion following Theorem \ref{thm:binary_optimal} that we only need to consider scenarios where $\dc$ is finite.

As an alternative for convenient notation, if $P$ is a memoryless channel then we define the Chernoff divergence associated with two channel inputs by 
\begin{gather}
    \dc(x, x', P,s) = \dc(P_x, P_{x'},s), \label{eq:dc_channel}
\end{gather}
where $P_x(\cdot) = P(\cdot|x)$.  For any positive integer $k$, we let $P^k$ denote the $k$-fold product of $P$, with probability mass function $P^k(\vec{y}|\vec{x}) = \prod_{i=1}^k P(y_i|x_i)$.  
For two sequences $\vec{x}, \vec{x}'$ of length $k$, we use the notation $\dc(\vec{x}, \vec{x}', P^k,s)$ similarly to \eqref{eq:dc_channel}, with the understanding that $\vec{x}, \vec{x}'$ are treated as inputs to $P^k$. 
We will use a well-known tensorization property of $\dc$, stated as follows.

\begin{lemma} 
{\em (e.g., see \cite[Lemma 8]{multibit})}
    For any DMC $P$ and any sequences $\vec{x} = (x_1,\dotsc,x_k)$ and $\vec{x}' = (x'_1,\dotsc,x'_k)$, we have
    \begin{equation}
        \dc(\vec{x}, \vec{x}', P^k, s) = \sum_{i=1}^k \dc(x_i, x'_i, P, s).
        \label{eq:iid1}
    \end{equation}
    \label{lem:iid}
\end{lemma}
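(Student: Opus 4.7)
The plan is to unfold the definition of Chernoff divergence directly on the product channel and exploit the multiplicative structure of $P^k$. Starting from \eqref{eq:dc} applied to the inputs $\vec{x},\vec{x}'$ and the channel $P^k$, I would write
\begin{equation}
\dc(\vec{x},\vec{x}',P^k,s) = -\log \sum_{\vec{y} \in \calY^k} P^k(\vec{y}|\vec{x})^{1-s} P^k(\vec{y}|\vec{x}')^s,
\end{equation}
and then substitute $P^k(\vec{y}|\vec{x}) = \prod_{i=1}^k P(y_i|x_i)$ to factor the summand as $\prod_{i=1}^k \bigl( P(y_i|x_i)^{1-s} P(y_i|x_i')^s \bigr)$.

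Next, I would apply the familiar interchange $\sum_{\vec{y}} \prod_i f_i(y_i) = \prod_i \sum_{y_i} f_i(y_i)$, which is valid for any nonnegative functions on a product set. The inner quantity then becomes $\prod_{i=1}^k \sum_{y_i \in \calY} P(y_i|x_i)^{1-s} P(y_i|x_i')^s$, and applying $-\log$ converts the product into a sum, yielding $\sum_{i=1}^k \dc(x_i,x_i',P,s)$ by a second appeal to \eqref{eq:dc} combined with the shorthand in \eqref{eq:dc_channel}. The endpoint cases $s \in \{0,1\}$ then follow by taking limits, using the continuity convention adopted just before the lemma.

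There is no real obstacle here: the lemma is a direct and well-known consequence of the independence across coordinates built into the memoryless product channel. The only point worth noting is that the standing assumption from the discussion following Theorem \ref{thm:binary_optimal} ensures that each per-coordinate term $\dc(x_i,x_i',P,s)$ is finite, so no care is needed about indeterminate $0^0$ or $0 \cdot \infty$ forms when exchanging sum and product or when passing to the logarithm.
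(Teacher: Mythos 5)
Your proof is correct and is the standard factorization argument; the paper itself does not prove this lemma but simply cites \cite[Lemma 8]{multibit}, and your derivation (expand the product-channel law, interchange the sum over $\vec{y}$ with the product over coordinates, then take $-\log$) is exactly the argument that reference gives. The remark about finiteness of each per-coordinate term is a reasonable extra precaution, though the sum-product interchange holds for arbitrary nonnegative terms even without it.
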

We will also make use of certain derivatives of $\dc$; in particular, $\dc'$ and $\dc''$ represent the first and second derivatives with respect to the final argument.

The following straightforward lemma relates the Chernoff divergence with inputs $0,1$ to that with inputs $1,0$.

\begin{lemma} For any binary-input channel $Q$ and $s \in (0,1)$, we have
    \begin{equation}
    \dc'(1,0,Q,s) = -\dc'(0,1,Q,1-s).
    \label{eq:reverse1s}
    \end{equation}
    \label{lem:reverse1s}
\end{lemma}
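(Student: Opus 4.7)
The plan is to reduce the claimed identity to a trivial symmetry of the Chernoff divergence with respect to swapping its two distributional arguments, and then differentiate.

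First I would observe that, directly from the definition \eqref{eq:dc_channel} together with \eqref{eq:dc}, we have
\begin{equation}
    \dc(1,0,Q,s) = -\log \sum_{z} Q_1(z)^{1-s} Q_0(z)^{s}.
\end{equation}
Substituting $t = 1-s$ in the exponents shows that the right-hand side equals $-\log \sum_z Q_0(z)^{1-t} Q_1(z)^t = \dc(0,1,Q,t) = \dc(0,1,Q,1-s)$. Hence, for every $s \in (0,1)$,
\begin{equation}
    \dc(1,0,Q,s) = \dc(0,1,Q,1-s). \label{eq:sym_plan}
\end{equation}

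Next I would differentiate both sides of \eqref{eq:sym_plan} with respect to $s$. The left side yields $\dc'(1,0,Q,s)$. For the right side, an application of the chain rule (with inner function $s \mapsto 1-s$, whose derivative is $-1$) yields $-\dc'(0,1,Q,1-s)$. Equating the two derivatives gives the claim.

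Since $\dc(P,P',s)$ is a finite linear combination of terms of the form $P(x)^{1-s}P'(x)^s$ inside a logarithm, and we are operating under the standing assumption (following Theorem \ref{thm:binary_optimal}) that $\dc$ is finite on $[0,1]$, smoothness of $\dc$ in $s$ on $(0,1)$ is immediate, so the differentiation step is fully justified. There is no real obstacle here; the only thing to be careful about is matching the roles of the two input arguments when writing the symmetry \eqref{eq:sym_plan}, since reversing these arguments is exactly what supplies the sign flip in the final identity.
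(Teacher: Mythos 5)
Your proof is correct and takes essentially the same route as the paper: both establish the symmetry $\dc(1,0,Q,s) = \dc(0,1,Q,1-s)$ from the definition and then differentiate via the chain rule, picking up the sign from $\frac{d}{ds}(1-s) = -1$.
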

\begin{proof}
Observe from \eqref{eq:dc} that
\begin{align}
    \dc'(1,0,Q,s) &= \frac{d}{dt} \dc(1,0,Q,t) |_{t=s}  \nn \\ &= \frac{d}{dt} \dc(0,1,Q,1-t) |_{t=s}.
\end{align}
Then, we can apply the chain rule to get $\frac{d}{dt} \dc(0,1,Q,1-t) |_{t=s} = -\dc'(0,1,Q,1-s)$.
\end{proof}
We will also use the following bound for $\dc''(Q_0,Q_1,s)$; while this result is known from \cite{berlekampI}, we provide a short proof because the intermediate calculations will also be useful. 

\begin{lemma}
{\em (Implicit in the proof of \cite[Thm.~5]{berlekampI})}
Let $p_{\min}$ be the smallest nonzero transition probability of $Q_0$ and $Q_1$. For all $s \in (0,1)$, we have
    \begin{equation}
        |\dc'(Q_0, Q_1, s) | \leq \log \frac{1}{p_{\min}}
        \label{eq:dc_d1}
    \end{equation}
    and
    \begin{equation}
        0 \leq -\dc''(Q_0, Q_1, s) \leq \left(\log \frac{1}{p_{\min}}\right)^2.
        \label{eq:dc_d2}
    \end{equation}
    \label{lem:d2c}
\end{lemma}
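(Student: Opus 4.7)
The plan is to compute $\dc'$ and $\dc''$ explicitly by expressing them in terms of the tilted distribution $Q_s$, and then bound the resulting quantities using the fact that log-likelihood ratios involving entries bounded below by $p_{\min}$ are uniformly bounded.

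First I would set $f(s) = \sum_z Q_0(z)^{1-s}Q_1(z)^s$ so that $\dc(Q_0,Q_1,s) = -\log f(s)$, and let $\calZ^* = \{z : Q_0(z) > 0 \text{ and } Q_1(z) > 0\}$; the terms with $Q_0(z)=0$ or $Q_1(z)=0$ contribute zero for $s\in(0,1)$. On $\calZ^*$, write $L(z) = \log \frac{Q_1(z)}{Q_0(z)}$, so that $Q_0(z)^{1-s}Q_1(z)^s = Q_0(z) e^{sL(z)}$, and observe that the tilted distribution from Definition \ref{def:tilted} becomes $Q_s(z) = Q_0(z)e^{sL(z)}/f(s)$ for $z \in \calZ^*$. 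A direct differentiation then gives $f'(s) = f(s)\,\e_{Q_s}[L]$ and $f''(s) = f(s)\,\e_{Q_s}[L^2]$, so
\begin{align}
\dc'(Q_0,Q_1,s) &= -\e_{Q_s}[L], \\
\dc''(Q_0,Q_1,s) &= -\bigl(\e_{Q_s}[L^2] - \e_{Q_s}[L]^2\bigr) = -\operatorname{Var}_{Q_s}(L).
\end{align}
The non-negativity of variance immediately yields $-\dc''(Q_0,Q_1,s) \ge 0$, giving one side of \eqref{eq:dc_d2}.

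Next I would bound $|L(z)|$ pointwise on $\calZ^*$. Since both $Q_0(z)$ and $Q_1(z)$ lie in $[p_{\min},1]$ for $z \in \calZ^*$, we have $\log Q_1(z) \le 0$ and $-\log Q_0(z) \le \log(1/p_{\min})$, hence $L(z) \le \log(1/p_{\min})$; symmetrically $L(z) \ge -\log(1/p_{\min})$. Therefore $|L(z)| \le \log(1/p_{\min})$ on the support of $Q_s$.

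Combining the computations with the pointwise bound gives $|\dc'(Q_0,Q_1,s)| = |\e_{Q_s}[L]| \le \log(1/p_{\min})$, proving \eqref{eq:dc_d1}, and $-\dc''(Q_0,Q_1,s) = \operatorname{Var}_{Q_s}(L) \le \e_{Q_s}[L^2] \le (\log(1/p_{\min}))^2$, completing \eqref{eq:dc_d2}. I do not anticipate a serious obstacle here; the only mild care needed is to correctly handle the case where $Q_0(z)$ or $Q_1(z)$ is zero (these $z$'s drop out of $f(s)$ and of $Q_s$ for $s \in (0,1)$), and to justify differentiation under the finite sum, which is immediate because $\calZ$ is finite and each summand is smooth in $s$ on the relevant interval.
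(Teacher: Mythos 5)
Your proof is correct and follows essentially the same route as the paper: compute $\dc'$ and $\dc''$ as the negative mean and negative variance of the log-likelihood ratio under the tilted distribution $Q_s$, then bound that log-likelihood ratio pointwise by $\log(1/p_{\min})$ on the support of $Q_s$. The only cosmetic difference is that you obtain the derivative formulas by writing $Q_0^{1-s}Q_1^s = Q_0 e^{sL}$ and differentiating the exponential, whereas the paper applies the quotient rule directly; the resulting expressions and bounds are identical.
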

\begin{proof}
    Let $x \in \{0,1\}$ denote a generic input to $Q$.  
    We compute $\dc'$ and $\dc''$ as follows: 
    \begin{align}
        \dc'(Q_0, Q_1, s) &= -\frac{\frac{d}{ds} \sum_x Q_0(x)^{1-s}Q_1(x)^s}{\sum_x Q_0(x)^{1-s}Q_1(x)^s} \nn \\ &= -\frac{\sum_x Q_0(x)^{1-s}Q_1(x)^s \log \frac{Q_1(x)}{Q_0(x)}}{\sum_x Q_0(x)^{1-s}Q_1(x)^s}
    \end{align}
    \begin{align}
        &\dc''(Q_0, Q_1, s) \nn \\ &= -\frac{\sum_x Q_0(x)^{1-s}Q_1(x)^s \big(\log \frac{Q_1(x)}{Q_0(x)}\big)^2}{\sum_z Q_0(x)^{1-s}Q_1(x)^s} + \dc'(Q_0, Q_1, s)^2.
    \end{align}
    Let $Q_s$ be the distribution defined in \eqref{eq:def_qs}; then, we can write the above as
    \begin{equation}
        \dc'(Q_0, Q_1, s) = -\e_{x \sim Q_s} \log \frac{Q_1(x)}{Q_0(x)}
        \label{eq:dc_expected}
    \end{equation}
    \begin{equation}
        \dc''(Q_0, Q_1, s) = -{\rm Var}_{x \sim Q_s} \log \frac{Q_1(x)}{Q_0(x)}.
        \label{eq:dc_var}
    \end{equation}
    From \eqref{eq:dc_var}, we conclude the following:
    \begin{itemize}
        \item Since $-\dc''(Q_0, Q_1, s)$ is the variance of some random variable, $-\dc''(Q_0, Q_1, s)\geq 0$.
        \item All $x$ with nonzero mass under $Q_s$ must satisfy $Q_0(x)>0$ and $Q_1(x) > 0$. Therefore, for all $Q_s(x)>0$, 
        \begin{equation}
            \Big|\log \frac{Q_1(x)}{Q_0(x)}\Big| \leq \log \frac{1}{p_{\min}},
            \label{eq:ll_ratio}
        \end{equation}
        which implies
        \begin{equation}
            \hspace*{-3ex}|\dc'(Q_0, Q_1, s)| \leq \max_{Q_s(x)>0} \Big|\log \frac{Q_1(x)}{Q_0(x)}\Big| \leq \log \frac{1}{p_{\min}}
        \end{equation}
        and
        \begin{equation}
            \hspace*{-3ex}-\dc''(Q_0, Q_1, s) \leq \e_{x \sim Q_s} \left(\log \frac{Q_0(x)}{Q_1(x)}\right)^2 \leq \left( \log \frac{1}{p_{\min}}\right)^2. \label{eq:log_sq_bound}
        \end{equation}
    \end{itemize}
\end{proof}
\begin{lemma}
    Let $Q_0, Q_1$ be arbitrary probability distributions, and let $Q_s$ be defined via \eqref{eq:def_qs}.
    Then for all $s \in [0,1]$,
\begin{align}
    &\dq0  =  \dc(0,1,Q,s) - s\dc'(0,1,Q,s)\label{eq:ds0}\\ 
    &\dq1  =  \dc(0,1,Q,s) + (1-s)\dc'(0,1,Q,s)\label{eq:ds1}\\
    &\dqs0  = \dc(0,1,Q,1-s) - (1-s)\dc'(0,1,Q,1-s)\label{eq:dqs0}\\
    &\dqs1  = \dc(0,1,Q,1-s) + s\dc'(0,1,Q,1-s)\label{eq:dqs1}\\
    &\dc'(0,1,Q,s) = \dq1 - \dq0\label{eq:dq10}.
\end{align}
    \label{lem:dc_kl}
\end{lemma}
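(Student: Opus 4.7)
The plan is to prove all five identities by direct computation, using the formula for $Q_s$ in Definition~\ref{def:tilted} and the derivative formula \eqref{eq:dc_expected}. The key idea is that $\log Q_s(x)$ is an affine combination of $\log Q_0(x)$ and $\log Q_1(x)$ plus a normalization constant, so each KL divergence $D(Q_s \| Q_b)$ becomes an affine combination of the log-likelihood-ratio expectation and the log-normalizer.

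First I would introduce $Z = \sum_x Q_0(x)^{1-s} Q_1(x)^s$, so that $Q_s(x) = Z^{-1} Q_0(x)^{1-s} Q_1(x)^s$ and $\log Z = -\dc(0,1,Q,s)$. Then I would compute
\begin{equation}
    \log \frac{Q_s(x)}{Q_0(x)} = s \log \frac{Q_1(x)}{Q_0(x)} - \log Z, \qquad \log \frac{Q_s(x)}{Q_1(x)} = -(1-s) \log \frac{Q_1(x)}{Q_0(x)} - \log Z.
\end{equation}
Taking expectation under $Q_s$ on both sides and substituting $\mathbb{E}_{x \sim Q_s} \log \frac{Q_1(x)}{Q_0(x)} = -\dc'(0,1,Q,s)$ from \eqref{eq:dc_expected}, together with $-\log Z = \dc(0,1,Q,s)$, yields \eqref{eq:ds0} and \eqref{eq:ds1} directly. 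This is the entire computation; no additional inequalities are needed.

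Next, for \eqref{eq:dqs0} and \eqref{eq:dqs1}, I would simply note that the identities already proved are valid for every $s \in [0,1]$, so replacing $s$ by $1-s$ throughout gives exactly \eqref{eq:dqs0} and \eqref{eq:dqs1}, with the $s$ and $1-s$ multipliers swapping roles. Finally, \eqref{eq:dq10} follows immediately by subtracting \eqref{eq:ds0} from \eqref{eq:ds1}, since the $\dc(0,1,Q,s)$ terms cancel and the coefficients of $\dc'(0,1,Q,s)$ combine to $(1-s) + s = 1$.

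There is no real obstacle: the entire lemma is essentially the standard log-partition-function identity, and the only slightly nontrivial input is the derivative formula \eqref{eq:dc_expected}, which was already established in the proof of Lemma~\ref{lem:d2c}. The care needed is only in bookkeeping the two coefficients $s$ and $1-s$ correctly and in verifying that the symmetry $s \leftrightarrow 1-s$ applies at the level of both $\dc$ and $\dc'$ (which it does, since \eqref{eq:ds0}--\eqref{eq:ds1} hold for all $s \in [0,1]$).
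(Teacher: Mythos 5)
Your proposal is correct and follows essentially the same approach as the paper: expand $\log(Q_s/Q_b)$ using the explicit form of the tilted distribution and the normalization constant, then take expectations under $Q_s$ and invoke the derivative identity \eqref{eq:dc_expected}. The one small difference is that you derive both \eqref{eq:ds0} and \eqref{eq:ds1} directly and then obtain \eqref{eq:dqs0}--\eqref{eq:dqs1} purely by the substitution $s \to 1-s$, whereas the paper derives only \eqref{eq:ds0} directly and reaches \eqref{eq:dqs1} via the input-swap $Q_0 \leftrightarrow Q_1$ together with Lemma \ref{lem:reverse1s}; your route is a mild streamlining that avoids that extra lemma.
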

    \begin{proof}
    We first expand \eqref{eq:log_sq_bound} as
    \begin{equation}
    \dc'(Q_0,Q_1,s) = \sum_x Q_s(x) \log \frac{Q_0(x)}{Q_1(x)}.
    \label{eq:expected_s_log}
    \end{equation}
    %
    We also note from the definitions of $\dc$ and $Q_s$ that
    \begin{equation}
        Q_s(x) = {Q_0(x)^{1-s}Q_1(x)^s} \exp(\dc(Q_0,Q_1,s)).
    \end{equation}
    Combining these observations, we have
    \begin{align}
        &\dq0 \nn \\ &= \sum_x Q_s(x) \log \frac{Q_s(x)}{Q_0(x)}\\
        &= \sum_x Q_s(x) \log \frac{ {Q_0(x)^{1-s}Q_1(x)^s} \exp(\dc(Q_0,Q_1,s))}{Q_0(x)}\\
        &= \sum_x Q_s(x) \left(\log \frac{Q_1(x)^s}{Q_0(x)^s} + \dc(Q_0,Q_1,s)\right)\\
        &= \dc(Q_0, Q_1, s) -s \sum_x Q_s(x) \log \frac{Q_0(x)}{Q_1(x)}\\
        & = \dc(Q_0, Q_1, s) - s \dc'(Q_0,Q_1,s),
    \end{align}
    which completes the proof of \eqref{eq:ds0}. Substituting $s \rightarrow 1-s$ into \eqref{eq:ds0} then gives \eqref{eq:dqs0}.

    By switching $Q_0$ and $Q_1$ in \eqref{eq:ds0}, noting that $Q_s$ now becomes $Q_{1-s}$, we get
    \begin{align}
        \dqs1 &= \dc(1,0,Q,1-s) - (1-s)\dc'(1,0,Q,1-s) \nn \\ &\stackrel{\eqref{eq:reverse1s}}{=} \dc(0,1,Q,s) + (1-s)\dc'(0,1,Q,1-s)
    \end{align}
    which is equivalent to \eqref{eq:dqs1}. Substituting $s \rightarrow 1-s$ into \eqref{eq:dqs1} gives \eqref{eq:ds1}.  Finally, \eqref{eq:ds0}--\eqref{eq:ds1} immediately give \eqref{eq:dq10}.
\end{proof}

\section{Error Bounds for 1-Hop Communication}
\label{sec:1-hop}

In this section, we state some useful known results for the 1-hop setting, and adapt them to attain variations that will be more directly useful for our purposes.  We first consider the error probability of a binary hypothesis test as follows.

\begin{lemma}
{\em (\cite[Thm.~5]{berlekampI})} Let $\Pi$ and $\Pi'$ be two probability distributions defined over the same alphabet $\calZ$. Suppose that the space of all possible outcomes is partitioned into disjoint decoding regions $D_0$ and $D_1$, and define the following decoding error probabilities:
    \begin{equation}
        p_{e,1} = \p_{Z \sim \Pi}(Z \in D_1), p_{e,0} = \p_{Z \sim \Pi'}(Z \in D_0).
    \end{equation}
    Then, for any $s \in (0,1)$, at least one of the following must hold (it is possible for a different one to hold for different values of $s$): 
    \begin{align}
        -\log p_{e,0} &\leq \dc(\Pi, \Pi', s) - s \dc'(\Pi, \Pi', s) \nn \\ 
            &\qquad + s \sqrt{-2\dc''(\Pi, \Pi', s)} + \log 4 \label{eq:berlekamp_pe0}\\
        -\log p_{e,1} &\leq \dc(\Pi, \Pi', s) + (1-s) \dc'(\Pi, \Pi', s) \nn \\ & \qquad +  (1-s) \sqrt{-2\dc''(\Pi, \Pi', s)} + \log 4, \label{eq:berlekamp_pe1}
    \end{align}
    where $\dc'$ and $\dc''$ are first and second derivatives with respect to the final argument. 
    \label{lem:berlekamp}
\end{lemma}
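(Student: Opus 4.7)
The plan is to run the classical Chernoff-style tilting argument for binary hypothesis testing. I introduce the tilted distribution $\Pi_s(z) \propto \Pi(z)^{1-s}\Pi'(z)^s$, following Definition \ref{def:tilted} applied to the pair $(\Pi,\Pi')$, together with the log-likelihood ratio $L(z) = \log(\Pi'(z)/\Pi(z))$. The definition of $\Pi_s$ immediately yields the change-of-measure identities
\[
\Pi(z) = \Pi_s(z)\, e^{-\dc(\Pi,\Pi',s) - s L(z)}, \qquad \Pi'(z) = \Pi_s(z)\, e^{-\dc(\Pi,\Pi',s) + (1-s) L(z)},
\]
which convert a $\Pi_s$-mass bound on a region into a lower bound on the corresponding $\Pi$- or $\Pi'$-mass, provided $L$ can be controlled on that region (upper-bounded in the first case, lower-bounded in the second).

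For the concentration step, the formulas \eqref{eq:dc_expected}--\eqref{eq:dc_var} already give that, under $\Pi_s$, $L$ has mean $\mu_L = -\dc'(\Pi,\Pi',s)$ and variance $\sigma_L^2 = -\dc''(\Pi,\Pi',s)$. Chebyshev's inequality at radius $\sqrt{2}\,\sigma_L$ then gives $\Pi_s(A) > 1/2$ for the typical set
\[
A = \{ z : |L(z) - \mu_L| < \sqrt{2}\,\sigma_L \}.
\]
Since $D_0$ and $D_1$ partition $\calZ$, at least one of $\Pi_s(A \cap D_0)$ and $\Pi_s(A \cap D_1)$ is at least $1/4$, and this dichotomy is exactly what produces the ``at least one of the two inequalities holds'' structure of the lemma.

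Each subcase is then paired with the matching change-of-measure identity. If $\Pi_s(A \cap D_0) \geq 1/4$, every $z \in A$ satisfies the lower bound $L(z) > \mu_L - \sqrt{2}\,\sigma_L$; plugging this into the identity for $\Pi'(z)$, summing over $A \cap D_0$, and taking $-\log$ yields one of the two claimed inequalities (with $(1-s)$ coefficients in the linear and square-root terms). The symmetric subcase $\Pi_s(A \cap D_1) \geq 1/4$ uses the upper bound $L(z) < \mu_L + \sqrt{2}\,\sigma_L$ together with the identity for $\Pi(z)$, and produces the complementary inequality with $s$ coefficients.

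The only real subtlety is bookkeeping: each subcase must be paired with the identity for which the Chebyshev bound on $L$ lies on the correct side of the threshold, so that the exponential factor serves as a lower bound on $\Pi$ or $\Pi'$ rather than an upper one; the resulting inequality must then be attached to the correct label $p_{e,0}$ or $p_{e,1}$. Once these pairings are fixed, the rest of the proof is a direct substitution of $\mu_L = -\dc'(\Pi,\Pi',s)$ and $\sigma_L^2 = -\dc''(\Pi,\Pi',s)$, and there is no deeper obstacle.
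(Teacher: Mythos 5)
Your tilting-plus-Chebyshev argument is the standard and correct route to this result, and the identities $\mu_L = -\dc'(\Pi,\Pi',s)$ and $\sigma_L^2 = -\dc''(\Pi,\Pi',s)$ from \eqref{eq:dc_expected}--\eqref{eq:dc_var}, together with the two change-of-measure formulas, are exactly the right ingredients. However, you cannot leave the final pairing as unstated ``bookkeeping,'' because carried to completion your cases produce the two displayed bounds with the labels \emph{interchanged} relative to the lemma as printed: the subcase $\Pi_s(A\cap D_0)\geq\tfrac14$ lower-bounds $p_{e,0}=\Pi'(D_0)$ via $\Pi'(z)=\Pi_s(z)\,e^{-\dc(\Pi,\Pi',s)+(1-s)L(z)}$, giving $-\log p_{e,0}\leq \dc(\Pi,\Pi',s)+(1-s)\dc'(\Pi,\Pi',s)+(1-s)\sqrt{-2\dc''(\Pi,\Pi',s)}+\log 4$, i.e.\ the $(1-s)$-coefficient form of \eqref{eq:berlekamp_pe1}; symmetrically $\Pi_s(A\cap D_1)\geq\tfrac14$ bounds $-\log p_{e,1}=-\log\Pi(D_1)$ by the $s$-coefficient form of \eqref{eq:berlekamp_pe0}.

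This mismatch is not a flaw in your derivation but a typo in the lemma as written, which is in fact false as stated: take $\Pi=\delta_a$, $\Pi'$ uniform on $\{a,b_1,\dots,b_N\}$ with $N\geq 4$, $D_0=\{a\}$, so that $p_{e,1}=0$, $p_{e,0}=1/(N+1)$, $\dc(\Pi,\Pi',s)=s\log(N+1)$, $\dc'(\Pi,\Pi',s)=\log(N+1)$, $\dc''(\Pi,\Pi',s)=0$; then \eqref{eq:berlekamp_pe0} would force $p_{e,0}\geq\tfrac14$ and \eqref{eq:berlekamp_pe1} would force $p_{e,1}\geq\tfrac{1}{4(N+1)}$, and both fail for every $s$. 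The printed definitions $p_{e,1}=\p_{Z\sim\Pi}(Z\in D_1)$, $p_{e,0}=\p_{Z\sim\Pi'}(Z\in D_0)$ are swapped relative to the convention in \eqref{eq:decoding_error} that Lemma \ref{lem:berlekamp_dmc} actually uses (and relative to the original SGB Theorem~5); your derivation proves the intended corrected statement, and you should say so explicitly rather than deferring it. Two small touch-ups: define $A=\{z:|L(z)-\mu_L|\leq\sqrt{2}\,\sigma_L\}$ with a non-strict inequality, since with a strict inequality $A$ is empty whenever $\dc''(\Pi,\Pi',s)=0$ (a case that genuinely occurs, e.g.\ Z-type channels), whereas with $\leq$ one gets $\Pi_s(A)=1$ there; and Chebyshev gives $\Pi_s(A)\geq\tfrac12$, not strictly $>\tfrac12$, which is all that is needed.
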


We now let $Q$ be a binary-input DMC (again treated as generic for now) with inputs in $\calW = \{0,1\}$ and outputs in $\calZ$, and let $\w_0$ and $\w_1$ corresponding binary codewords of length $n$. Let $D_0, D_1$ be an arbitrary partition of the output space $\calZ^n$ into decoding regions, with decoding error probabilities defined by 
\begin{equation}
    p_{e,0} = \p_{\vec{z} \sim Q^n(\cdot|\w_0)} (\vec{z} \in D_1), ~~ p_{e,1} = \p_{\vec{z} \sim Q^n(\cdot|\w_1)} (\vec{z} \in D_0).
    \label{eq:decoding_error}
\end{equation}
Let $p_{\min}$ denote the smallest nonzero transition probability of $Q$. We obseve the following expressions relating to $\dc$ and its derivatives:
\begin{equation}
    \dc(\w_0, \w_1, Q^n, s) \stackrel{\text{Lem.}~\ref{lem:iid}}{=} \sum_{i=1}^n \dc(\w_0^{(i)}, \w_1^{(i)}, Q, s)
\end{equation}
\begin{equation}
    \dc'(\w_0, \w_1, Q^n, s) = \sum_{i=1}^n \dc'(\w_0^{(i)}, \w_1^{(i)}, Q, s)
\end{equation}
\begin{align}
     -\dc''(\w_0, \w_1, Q^n, s) &= \sum_{i=1}^n -\dc''(\w_0^{(i)}, \w_1^{(i)}, Q, s) \nn \\ &\stackrel{\text{Lem.}~\ref{lem:d2c}}{\leq} n  \left( \log \frac{1}{p_{\min}}\right)^2
\end{align}
\begin{equation}
     \sqrt{-\dc''(\w_0, \w_1, Q^n, s)} \stackrel{\text{Lem.}~\ref{lem:d2c}}{\leq} \sqrt n \log \frac{1}{p_{\min}}.
\end{equation}
Using these findings, we specialize Lemma \ref{lem:berlekamp} to the case of transmitting strings over a DMC as follows.

\begin{lemma}
{\em (\cite[Thm.~5]{berlekampI} for DMCs)}
Let $Q$ be a discrete memoryless channel, and let $D_0$,$D_1$ be a partition of $\calY^n$. Let $\w_0$ and $\w_1$ be arbitrary strings of length $n$, and let the decoding error probabilities $p_{e,0}, p_{e,1}$ be defined by \eqref{eq:decoding_error}. For any $s \in [0,1]$, at least one of the following holds:
    \begin{align}
        -\log p_{e,0} &\leq \dc(\w_0, \w_1, Q^n, s) - s \dc'(\w_0, \w_1, Q^n, s) \nn \\ &\qquad + \sqrt {2n} \log \frac{1}{p_{\min}} + \log 4
        \label{eq:berlekamp_dmc_pe0}\\
        -\log p_{e,1} &\leq \dc(\w_0, \w_1, Q^n, s) + (1-s) \dc'(\w_0, \w_1, Q^n, s)  \nn \\ &\qquad  + \sqrt{2n} \log \frac{1}{p_{\min}} + \log 4
        \label{eq:berlekamp_dmc_pe1}.
    \end{align}
    \label{lem:berlekamp_dmc}
\end{lemma}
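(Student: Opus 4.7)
The plan is to apply the general hypothesis-testing bound in Lemma \ref{lem:berlekamp} with $\Pi = Q^n(\cdot\mid \w_0)$ and $\Pi' = Q^n(\cdot\mid \w_1)$ on the alphabet $\calZ^n$, and then substitute the tensorized expressions that were collected immediately before the lemma statement. Since $D_0, D_1$ partition $\calZ^n$ and $p_{e,0}, p_{e,1}$ defined in \eqref{eq:decoding_error} match the setup of Lemma \ref{lem:berlekamp}, the hypothesis of that lemma is satisfied.

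First I would fix any $s \in (0,1)$ and invoke Lemma \ref{lem:berlekamp}, which yields that at least one of its two inequalities holds with $\dc(\Pi,\Pi',s)$, $\dc'(\Pi,\Pi',s)$, $\dc''(\Pi,\Pi',s)$ all taken for the product distributions $Q^n(\cdot\mid \w_0)$ and $Q^n(\cdot\mid \w_1)$. I would then rewrite $\dc(\Pi,\Pi',s)$ as $\dc(\w_0,\w_1,Q^n,s)$ (and similarly for $\dc'$) and note that these equal $\sum_{i=1}^n \dc(\w_0^{(i)},\w_1^{(i)},Q,s)$ and $\sum_{i=1}^n \dc'(\w_0^{(i)},\w_1^{(i)},Q,s)$ by the tensorization identity in Lemma \ref{lem:iid} (and its $s$-differentiated version, which follows by differentiating \eqref{eq:iid1} term by term).

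Next I would handle the $\dc''$-based perturbation terms. Using the per-coordinate second-derivative bound from Lemma \ref{lem:d2c}, the inequality $-\dc''(\w_0,\w_1,Q^n,s) \leq n (\log(1/p_{\min}))^2$ follows by summing across coordinates, which gives $\sqrt{-2\dc''(\w_0,\w_1,Q^n,s)} \leq \sqrt{2n}\log(1/p_{\min})$. Since $s \in [0,1]$, the factors $s$ and $1-s$ multiplying this square root in Lemma \ref{lem:berlekamp} are each bounded above by $1$, so they can be dropped to obtain \eqref{eq:berlekamp_dmc_pe0} and \eqref{eq:berlekamp_dmc_pe1}.

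Finally, to cover the endpoints $s \in \{0,1\}$ asserted in the lemma, I would appeal to the continuity definitions of $\dc$ and its derivatives that were set up after Theorem \ref{thm:teachlearn} and in Definition \ref{def:tilted}: the right-hand sides of \eqref{eq:berlekamp_dmc_pe0}--\eqref{eq:berlekamp_dmc_pe1} are continuous in $s$, while the left-hand sides do not depend on $s$, so taking limits from within $(0,1)$ preserves the disjunction. There is no real obstacle here; the lemma is a routine specialization, and the only mild care needed is in justifying differentiation of \eqref{eq:iid1} with respect to $s$ (which is immediate since the sum is finite) and in replacing the $s$/$(1-s)$ prefactor on the square root by $1$, which is the deliberate weakening being recorded for later convenience.
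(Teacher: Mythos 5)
Your proposal is correct and matches the paper's approach: the paper records exactly the same tensorization identities and the bound $\sqrt{-\dc''(\w_0,\w_1,Q^n,s)} \le \sqrt{n}\log(1/p_{\min})$ immediately before the lemma, and the lemma is then obtained by substituting these into Lemma \ref{lem:berlekamp} and weakening the $s$, $(1-s)$ prefactors to $1$. Your remark about extending from $s\in(0,1)$ to the endpoints via continuity is a sensible addition that the paper leaves implicit.
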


An important corollary of Lemma \ref{lem:berlekamp_dmc} is the following, which is deduced via a direct combination of \cite[Corollary to Thm.~5]{berlekampI} and \eqref{eq:log_sq_bound}.  The idea is to optimize over $s$ in Lemma \ref{lem:berlekamp_dmc}; the derivative terms $\dc'$ therein vanish when the maximum is attained by $s \in (0,1)$, and the case of an endpoint $s \in \{0,1\}$ being optimal can also readily be handled separately.

\begin{cor}
    {\em (Implicit in \cite{berlekampI})} 
    Let $Q$ be a discrete memoryless channel, let $D_0$,$D_1$ be a partition of $\calZ^n$, and let $\w_0$ and $\w_1$ be arbitrary strings of length $n$. 
    Then, we have
    \begin{align}
        -\log (p_{e,0}+ p_{e,1}) &\leq \max_{0\leq s \leq 1}\dc(\w_0, \w_1, Q^n, s) \nn \\ &\qquad + \sqrt {2n} \log \frac{1}{p_{\min}} + \log 4.
    \end{align}
    \label{cor:2codewords}
\end{cor}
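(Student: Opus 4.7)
The plan is to apply Lemma~\ref{lem:berlekamp_dmc} at the maximizer of $\dc(\w_0,\w_1,Q^n,s)$ over $s \in [0,1]$, exploiting the fact that the first-derivative term in the lemma vanishes at an interior maximizer. First I would observe that $\dc(\w_0,\w_1,Q^n,\cdot)$ is concave on $[0,1]$: by the tensorization in Lemma~\ref{lem:iid} it is a sum of per-coordinate terms, and each per-coordinate term has non-positive second derivative by \eqref{eq:dc_var} (since $\dc''$ equals the negative of a variance). Moreover, direct substitution into \eqref{eq:dc} yields $\dc(x,x',Q,0)=\dc(x,x',Q,1)=0$ for any $x,x' \in \{0,1\}$, so by tensorization $\dc(\w_0,\w_1,Q^n,s)$ also vanishes at $s\in\{0,1\}$; combined with the pointwise non-negativity of $\dc$, this means $\dc(\w_0,\w_1,Q^n,\cdot)$ is either identically zero on $[0,1]$ or else attains its maximum at some interior point $s^* \in (0,1)$, at which concavity forces $\dc'(\w_0,\w_1,Q^n,s^*)=0$.

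In the main case where such an interior maximizer exists, I would apply Lemma~\ref{lem:berlekamp_dmc} at $s=s^*$. Since the $\dc'$ terms in \eqref{eq:berlekamp_dmc_pe0}--\eqref{eq:berlekamp_dmc_pe1} vanish, whichever of those two inequalities happens to hold reads
\[
-\log p_{e,j} \;\leq\; \dc(\w_0,\w_1,Q^n,s^*) + \sqrt{2n}\log\tfrac{1}{p_{\min}} + \log 4,
\]
and the right-hand side equals $\max_{0\leq s \leq 1}\dc(\w_0,\w_1,Q^n,s) + \sqrt{2n}\log\tfrac{1}{p_{\min}} + \log 4$. Since $p_{e,0}+p_{e,1} \geq p_{e,j}$, taking $-\log$ delivers exactly the claimed inequality.

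The degenerate case $\dc(\w_0,\w_1,Q^n,\cdot)\equiv 0$ can be dispatched separately: per-coordinate vanishing on $(0,1)$ forces $Q(\cdot\mid\w_0^{(i)})=Q(\cdot\mid\w_1^{(i)})$ for every $i$ (either because $\w_0^{(i)}=\w_1^{(i)}$, or because the two channel rows coincide), so the decoder's observation has the same distribution under both hypotheses and $p_{e,0}+p_{e,1}=1$, making the claim trivial. I expect the core idea of the argument to be the concavity-plus-endpoint-vanishing observation that pins the optimal $s$ to an interior critical point of $\dc$; once that is in place, the rest is a mechanical combination with Lemma~\ref{lem:berlekamp_dmc}.
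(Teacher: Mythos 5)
Your argument for the interior-maximizer case is exactly the paper's intended route (optimize Lemma~\ref{lem:berlekamp_dmc} over $s$, observe that $\dc'$ vanishes at an interior critical point by concavity), but the claim that $\dc(x,x',Q,0)=\dc(x,x',Q,1)=0$ by ``direct substitution'' is incorrect under the paper's definition. The endpoints are defined by continuity, $\dc(P,P',0) = \lim_{s\to 0^+}\dc(P,P',s)$, and for $s\in(0,1)$ the sum $\sum_x P(x)^{1-s}P'(x)^s$ ranges only over the common support of $P$ and $P'$; the limit as $s\to 0^+$ is therefore $-\log P(\mathrm{supp}(P'))$, which is strictly positive whenever $\mathrm{supp}(P)\not\subseteq\mathrm{supp}(P')$. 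A concrete example is the Z-channel $Q_0=(1,0)$, $Q_1=(p,1-p)$, for which $\dc(Q_0,Q_1,s)=-s\log p$ is strictly increasing on $[0,1]$; the maximum of $\dc(\w_0,\w_1,Q^n,\cdot)$ with $\w_0=0^n,\w_1=1^n$ then sits at the endpoint $s=1$ with positive value $-n\log p$, contradicting your dichotomy (identically zero vs.\ interior maximum). This is the case the paper flags as needing separate handling.

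The gap is fixable without abandoning your structure: if the maximum is at $s=0$, concavity gives $\dc'(\w_0,\w_1,Q^n,0)\le 0$, so in \eqref{eq:berlekamp_dmc_pe0} the coefficient $-s$ kills the $\dc'$ term, and in \eqref{eq:berlekamp_dmc_pe1} the term $(1-s)\dc'=\dc'(0)\le 0$ can simply be dropped; the case $s=1$ is symmetric with $\dc'(1)\ge 0$. You should add this boundary case explicitly rather than rely on the false endpoint-vanishing claim.
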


By combining this result with a tensorization argument (Lemma \ref{lem:iid}) and a fairly simple achievability analysis, we arrive at the following expression for the optimal 1-hop exponent.

\begin{cor}
    {\em \cite{berlekampI}}
    The 1-hop error exponent for any binary-input discrete memoryless channel $Q$ is given by
    \begin{equation}
        E(Q) = \max_{0\leq s \leq 1}\dc(0,1,Q,s) = \max_{0\leq s \leq 1}\dc(Q_0,Q_1,s).
        \label{eq:e1}
    \end{equation}
    \label{cor:e1}
\end{cor}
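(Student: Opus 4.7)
The corollary is an equality, so I would prove it by showing both directions separately. The converse bound $E(Q) \le \max_s \dc(0,1,Q,s)$ should follow cleanly from Corollary \ref{cor:2codewords} combined with the tensorization of the Chernoff divergence (Lemma \ref{lem:iid}), and the matching achievability from a Chernoff-style analysis of the repetition code.

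For the converse, I would take an arbitrary length-$n$ protocol with codewords $\w_0, \w_1$ and decoding regions $D_0, D_1$, and apply Corollary \ref{cor:2codewords} directly to get $-\log(p_{e,0}+p_{e,1}) \le \max_s \dc(\w_0, \w_1, Q^n, s) + \sqrt{2n}\log(1/p_{\min}) + \log 4$. Lemma \ref{lem:iid} then splits the Chernoff divergence as $\sum_{i=1}^n \dc(\w_0^{(i)}, \w_1^{(i)}, Q, s)$, with each summand equal to $0$ (when $\w_0^{(i)}=\w_1^{(i)}$), $\dc(0,1,Q,s)$, or $\dc(1,0,Q,s)$. Directly from the definition of $\dc$ we have $\dc(1,0,Q,s) = \dc(0,1,Q,1-s)$, so the two functions have the same maximum over $s \in [0,1]$. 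Using $\max_s \sum_i f_i(s) \le \sum_i \max_s f_i(s)$, I obtain $\max_s \dc(\w_0, \w_1, Q^n, s) \le n \max_s \dc(0,1,Q,s)$. Since the average error probability under the uniform prior on $\Theta$ equals $\tfrac12(p_{e,0}+p_{e,1})$, dividing by $n$ and taking $\liminf_{n\to\infty}$ yields $E(Q) \le \max_s \dc(0,1,Q,s)$.

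For achievability, I would use the repetition codewords $\w_0 = (0,\ldots,0)$ and $\w_1 = (1,\ldots,1)$, together with the likelihood-ratio threshold decoder that chooses $\hat{\Theta}=1$ iff $\prod_i Q_1(z_i)/Q_0(z_i) \ge 1$. The standard Markov/Chernoff inequality gives, for every $s \in (0,1)$ and each $\nu \in \{0,1\}$, that $p_{e,\nu} \le \sum_{\vec z} Q_0^n(\vec z)^{1-s} Q_1^n(\vec z)^s = e^{-n \dc(Q_0, Q_1, s)}$, where the last equality uses Lemma \ref{lem:iid}. Choosing $s = s^{\star}$ to be an interior maximizer of $\dc(Q_0,Q_1,\cdot)$ and taking $\liminf_{n\to\infty}$ yields $E(Q) \ge \max_{s\in(0,1)} \dc(0,1,Q,s)$; continuity of $\dc$ on $[0,1]$ (ensured by the finiteness assumption stated after Theorem \ref{thm:binary_optimal}) then extends this to the closed interval.

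The hardest step is essentially cosmetic: the only nontrivial move in the converse is using $\dc(1,0,Q,s) = \dc(0,1,Q,1-s)$ so that $\max$ can pass through the sum over $i$, and the only subtlety in achievability is the endpoint case $s^{\star} \in \{0,1\}$, which is handled by a continuity argument. Essentially all of the real work has already been done in Corollary \ref{cor:2codewords} and Lemma \ref{lem:iid}, so what remains is routine bookkeeping.
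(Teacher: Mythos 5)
Your proposal is correct and follows essentially the same approach the paper indicates: the paper doesn't give a full proof of this corollary (it attributes the result to Shannon--Gallager--Berlekamp and merely notes that it follows ``by combining this result [Corollary~\ref{cor:2codewords}] with a tensorization argument (Lemma~\ref{lem:iid}) and a fairly simple achievability analysis''), and your proof does exactly that -- Corollary~\ref{cor:2codewords} plus Lemma~\ref{lem:iid} plus the identity $\dc(1,0,Q,s)=\dc(0,1,Q,1-s)$ for the converse, and repetition coding with a Chernoff bound for achievability.
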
 

Next, we provide a refined result that explicitly depends on the number of locations where $0$ and $1$ appear in $\w_0$ and $\w_1$ respectively. 

\begin{lemma}
Let $Q$ be a binary input channel with inputs $0$ and $1$, and let $\w_0$ and $\w_1$ be length-$n$ codewords. Let $n_{01}$ denote the number of positions $i$ such that $\w_0^{(i)}=0$ and $\w_1^{(i)}=1$, and let $n_{10}$ denote the number of positions $i$ such that $\w_0^{(i)}=1$ and $\w_1^{(i)}=0$.  Then, for any $s \in [0,1]$, at least one of the following holds:
\begin{align}
    -\log p_{e,0} &\leq n_{01}\dq0 + n_{10}\dqs1 \nn \\ &\qquad + \sqrt{2n} \log \frac{1}{p_{\min}}+\log 4
    \label{eq:pe0n} \\
    -\log p_{e,1} &\leq n_{01}\dq1 + n_{10}\dqs0  \nn \\ &\qquad + \sqrt{2n} \log \frac{1}{p_{\min}}+\log 4.
    \label{eq:pe1n}
\end{align}
\label{lem:nbound}
\end{lemma}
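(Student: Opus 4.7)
The plan is to apply Lemma~\ref{lem:berlekamp_dmc} directly to the pair $(\w_0, \w_1)$ and then translate the Chernoff-divergence terms into the Kullback--Leibler form appearing in the statement via Lemma~\ref{lem:dc_kl}. The key idea is that for binary inputs, every coordinate $i \in \{1,\ldots,n\}$ falls into exactly one of four cases based on $(\w_0^{(i)}, \w_1^{(i)}) \in \{(0,0),(0,1),(1,0),(1,1)\}$; the matched-input coordinates contribute $0$ to the Chernoff divergence (and to all its derivatives), leaving only the $n_{01}$ and $n_{10}$ terms.

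First, I would apply the tensorization identity of Lemma~\ref{lem:iid} (and its differentiated version already noted after Corollary~\ref{cor:e1}) to obtain
\begin{align}
\dc(\w_0,\w_1,Q^n,s) &= n_{01}\,\dc(0,1,Q,s) + n_{10}\,\dc(1,0,Q,s), \\
\dc'(\w_0,\w_1,Q^n,s) &= n_{01}\,\dc'(0,1,Q,s) + n_{10}\,\dc'(1,0,Q,s).
\end{align}
Using Lemma~\ref{lem:reverse1s} together with the symmetry $\dc(0,1,Q,s) = \dc(1,0,Q,1-s)$ (immediate from the definition \eqref{eq:dc}), these become expressions in $\dc(0,1,Q,s)$ and $\dc(0,1,Q,1-s)$ and their derivatives.

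Next, I would substitute these into the two bounds of Lemma~\ref{lem:berlekamp_dmc}. For the $-\log p_{e,0}$ bound, the key combination is
\[
\dc(\w_0,\w_1,Q^n,s) - s\,\dc'(\w_0,\w_1,Q^n,s),
\]
which splits as $n_{01}\bigl[\dc(0,1,Q,s) - s\dc'(0,1,Q,s)\bigr] + n_{10}\bigl[\dc(0,1,Q,1-s) + s\dc'(0,1,Q,1-s)\bigr]$; by \eqref{eq:ds0} and \eqref{eq:dqs1} of Lemma~\ref{lem:dc_kl}, these two bracketed quantities equal $\dq0$ and $\dqs1$ respectively, yielding \eqref{eq:pe0n}. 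For $-\log p_{e,1}$, the analogous combination $\dc + (1-s)\dc'$ becomes $n_{01}\bigl[\dc(0,1,Q,s) + (1-s)\dc'(0,1,Q,s)\bigr] + n_{10}\bigl[\dc(0,1,Q,1-s) - (1-s)\dc'(0,1,Q,1-s)\bigr]$, which by \eqref{eq:ds1} and \eqref{eq:dqs0} equals $n_{01}\dq1 + n_{10}\dqs0$, giving \eqref{eq:pe1n}.

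There is no real obstacle here: the proof is a mechanical combination of tensorization, the sign-flip symmetry of Lemma~\ref{lem:reverse1s}, and the four identities of Lemma~\ref{lem:dc_kl}. The only thing to be careful about is correctly matching the $s$ versus $1-s$ arguments when rewriting the $n_{10}$ contribution, since swapping the role of the codewords corresponds to $s \mapsto 1-s$ in the Chernoff parameter. The $\sqrt{2n}\log(1/p_{\min}) + \log 4$ remainder term is inherited verbatim from Lemma~\ref{lem:berlekamp_dmc}.
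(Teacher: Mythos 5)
Your proposal is correct and follows essentially the same route as the paper: tensorize $\dc$ and $\dc'$ over the $n_{01}$ and $n_{10}$ positions, apply Lemma~\ref{lem:berlekamp_dmc}, and rewrite the resulting $\dc \pm (\cdot)\dc'$ combinations as KL divergences via Lemma~\ref{lem:dc_kl} (with Lemma~\ref{lem:reverse1s} handling the $s \mapsto 1-s$ swap for the $n_{10}$ coordinates). The paper's proof is terser, simply asserting that Lemma~\ref{lem:dc_kl} makes the two forms equivalent, whereas you spell out which of \eqref{eq:ds0}--\eqref{eq:dqs1} maps to which bracket; the underlying argument is the same.
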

\begin{proof}
We first expand $\dc$ (via tensorization in Lemma \ref{lem:iid}) to get
\begin{align}
    &\dc(\w_0, \w_1, Q^n, s) \nn \\ &= \sum_{i=1}^n \dc(\w_0^{(i)}, \w_1^{(i)}, Q, s) \nn \\ &= n_{01}\dc(0,1,Q,s) + n_{10} \dc(0,1,Q,1-s)
\end{align}
and
\begin{align}
    &\dc'(\w_0, \w_1, Q^n, s) \nn \\ &~~= n_{01}\dc'(0,1,Q,s) + n_{10} \dc'(1,0,Q,1-s) \nn \\ &~~\stackrel{\eqref{eq:reverse1s}}{=} n_{01}\dc'(0,1,Q,s) - n_{10} \dc'(0,1,Q,1-s).
\end{align}
By using Lemma \ref{lem:dc_kl}, we observe that \eqref{eq:berlekamp_dmc_pe0} and \eqref{eq:pe0n} are equivalent, and \eqref{eq:berlekamp_dmc_pe1} and \eqref{eq:pe1n} are equivalent. By Lemma \ref{lem:berlekamp}, we conclude that \eqref{eq:pe0n} or \eqref{eq:pe1n} must be true.
\end{proof}

We now take Lemma \ref{lem:nbound} one step further, and state a variation regarding the error probability associated with the \emph{worse of two sequences} $\w_1$ and $\z_1'$.

\begin{lemma}
Let $Q$ be a binary-input channel, and let $\w_0$, $\w_1$ and $\z_1'$ be binary strings of length $n$.
Let $D_0$ and $D_1$ be any partition of the output space, and consider the following definitions for error probability:
\begin{gather}
    p_{e,0} = \p_{\vec{z} \sim Q^n(\cdot|\w_0)}(\vec{z} \in D_1), \nn \\ p_{e,1} = \max(\p_{\vec{z} \sim Q^n(\cdot|\w_1)}(\vec{z} \in D_0), \p_{\vec{z} \sim Q^n(\cdot|\w_1')}(\vec{z} \in D_0)).
\end{gather}
Recall the definitions of $n_0$ and $n_1$ from Definition \ref{dfn:n0_n1}, and suppose that $\ell$ is between\footnote{Both $n_1(\w_1)\leq \ell \leq n_1(\w'_1)$ and $n_1(\w_1)\geq \ell \geq n_1(\w'_1)$ are allowed, and $\ell$ need not be an integer.} $n_1(\z_1)$ and $n_1(\z'_1)$; then, for all $s \in (0,1)$, at least one of the following holds:
\begin{align}
    -\log p_{e,0} &\leq \ell \dq0 + (n-\ell) \dqs1 \nn \\ & \qquad + \sqrt{2n} \log \frac{1}{p_{\min}}+\log 4, \\
    -\log p_{e,1} &\leq \ell \dq1 + (n-\ell) \dqs0 \nn \\ & \qquad  + \sqrt{2n} \log \frac{1}{p_{\min}}+\log 4.
    \label{eq:transition_bound1}
\end{align}
\label{lem:transition_bound}
\end{lemma}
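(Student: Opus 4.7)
The plan is to apply Lemma~\ref{lem:nbound} separately to the pairs $(\w_0, \w_1)$ and $(\w_0, \w_1')$, weaken each resulting bound so that it only involves $n_1$ of the ``$\w_1$-style'' codeword, and then combine the two applications by a convex combination together with a sign-based case analysis. Without loss of generality suppose $n_1(\w_1) \le \ell \le n_1(\w_1')$, and let $\alpha \in [0,1]$ satisfy $\alpha n_1(\w_1) + (1-\alpha) n_1(\w_1') = \ell$. Then the target expression $\ell \dq0 + (n-\ell)\dqs1$, and analogously $\ell \dq1 + (n-\ell)\dqs0$, is exactly the $(\alpha, 1-\alpha)$-convex combination of the corresponding $n_1$-indexed quantities for $\w_1$ and $\w_1'$.

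For each $\w \in \{\w_1, \w_1'\}$, Lemma~\ref{lem:nbound} applied to $(\w_0, \w)$ gives at least one of a $p_{e,0}$ bound with coefficients $n_{01}(\w_0,\w), n_{10}(\w_0,\w)$ or a $p_{e,1}(\w)$ bound with the same coefficients. Using $n_{01}(\w_0,\w) \le n_1(\w)$ and $n_{10}(\w_0,\w) \le n_0(\w)$, together with non-negativity of the KL divergences $\dq0,\dqs1,\dq1,\dqs0$, I weaken both bounds so that $n_1(\w)$ and $n_0(\w) = n - n_1(\w)$ appear in place of $n_{01}$ and $n_{10}$. Furthermore, since $p_{e,1} \ge \max(p_{e,1}(\w_1), p_{e,1}(\w_1'))$, any upper bound on $-\log p_{e,1}(\w)$ is also an upper bound on $-\log p_{e,1}$.

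The ``easy'' cases are when the two applications yield the same alternative. If both give a $p_{e,0}$ bound, then the $(\alpha, 1-\alpha)$-convex combination is exactly $\ell \dq0 + (n-\ell)\dqs1 + \sqrt{2n}\log(1/p_{\min}) + \log 4$, so the first inequality of \eqref{eq:transition_bound1} holds. If both give a $p_{e,1}$ bound, each is an upper bound on $-\log p_{e,1}$ and the same convex combination yields the second inequality. In the mixed case -- say $(\w_0, \w_1)$ gives the $p_{e,0}$ bound and $(\w_0, \w_1')$ gives the $p_{e,1}$ bound -- I observe that the weakened $p_{e,0}$ RHS is affine in $n_1(\w)$ with slope $\dq0 - \dqs1$: when this slope is non-negative, the weakened bound for $\w_1$ (with $n_1(\w_1) \le \ell$) already implies the target $p_{e,0}$ bound, while when it is non-positive the bound for $\w_1'$ does; an analogous dichotomy, controlled by $\dq1 - \dqs0$, handles the $p_{e,1}$ direction, with the two sign conditions related via Lemma~\ref{lem:dc_kl}.

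The main obstacle is verifying that the sign-based case analysis in the mixed case exhausts all configurations, i.e., that for every pattern of which Lemma~\ref{lem:nbound} alternative is in force for each pair, at least one of the sign-based implications above produces the desired inequality. I would discharge this by enumerating the possible Lemma~\ref{lem:nbound} patterns against the sign patterns of $\dq0-\dqs1$ and $\dq1-\dqs0$, using Lemma~\ref{lem:dc_kl} to express everything in terms of $\dc(0,1,Q,\cdot)$ and $\dc'(0,1,Q,\cdot)$ at $s$ and $1-s$; the key algebraic ingredient that collapses the subcases is the identity $\dc'(1,0,Q,s) = -\dc'(0,1,Q,1-s)$ from Lemma~\ref{lem:reverse1s}, which ties the two sign conditions together.
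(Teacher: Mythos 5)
Your plan has a genuine gap in the mixed case, and the sign-based dichotomy you propose does not exhaust all configurations. Working out the sign conditions via Lemma~\ref{lem:dc_kl}, write $\Delta = \dc(0,1,Q,s)-\dc(0,1,Q,1-s)$ and $\Sigma=\dc'(0,1,Q,s)+\dc'(0,1,Q,1-s)$; then $\dq0-\dqs1=\Delta-s\Sigma$ and $\dq1-\dqs0=\Delta+(1-s)\Sigma$. When $\Sigma>0$ and $-(1-s)\Sigma<\Delta<s\Sigma$, you have $\dq0-\dqs1<0<\dq1-\dqs0$ simultaneously. Now suppose the only Lemma~\ref{lem:nbound} alternative in force for $(\w_0,\w_1)$ is the $p_{e,0}$ bound, and the only one for $(\w_0,\w_1')$ is the $p_{e,1}$ bound (Lemma~\ref{lem:nbound} gives you no control over which alternative holds). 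Your weakened $p_{e,0}$ bound sits at $n_1(\w_1)\le \ell$ with negative slope $\dq0-\dqs1$, so pushing up to $\ell$ only increases the RHS and does not recover the target; your weakened $p_{e,1}$ bound sits at $n_1(\w_1')\ge \ell$ with positive slope $\dq1-\dqs0$, so pushing down to $\ell$ again fails. Neither bound survives, and Lemma~\ref{lem:reverse1s} gives no further constraint forcing $\Sigma$ to vanish (e.g.\ for a Z-channel $\dc(0,1,Q,\cdot)$ is linear with nonzero slope, giving $\Sigma\neq 0$). The symmetric mixed case fails analogously when $\Sigma<0$.

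The underlying problem is that your weakening $n_{01}\le n_1(\w)$, $n_{10}\le n_0(\w)$ discards the structure of $\w_0$, which is exactly what is needed. The paper's proof instead looks at the $n_1(\w_1')-n_1(\w_1)$ positions where $\w_1$ is $0$ and $\w_1'$ is $1$, and observes that on any such subset, $\w_0$ must have either at least $n_1(\w_1')-\ell$ ones or at least $\ell-n_1(\w_1)$ zeros (since these two thresholds sum to the subset size). This pins down which of $\w_1$ or $\w_1'$ to use in Lemma~\ref{lem:nbound}: in the first case apply it to $(\w_0,\w_1')$ and the overlapping ones force $n_{01}\le \ell$, $n_{10}\le n-\ell$; in the second apply it to $(\w_0,\w_1)$ and the overlapping zeros force $n_{10}\le n-\ell$, $n_{01}\le\ell$. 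One obtains both coefficient bounds for a single well-chosen string, so no interpolation across strings or sign analysis is needed. To repair your proof you would need to reintroduce this dependence on $\w_0$ before weakening, at which point it essentially becomes the paper's argument.
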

\begin{proof}
Without loss of generality, assume that $n_1(\z'_1) \geq \ell \geq n_1(\z_1)$ (if not, we simply exchange $\w_1$ and $\w'_1$). Then there are at least $n_1(\z'_1) - n_1(\z_1)$ positions $i$ with $\z'^{(i)}_1=1$ and $\z^{(i)}_1=0$. Among these $n_1(\z'_1) - n_1(\z_1)$ positions (if there are more, choose any subset of them of size $n_1(\z'_1) - n_1(\z_1)$), we consider the string $\z_0$, and conclude that one of the following must hold:
\begin{itemize}
    \item Case 1: There are at least $n_1(\z'_1)-\ell$ 1s (and these positions have 1 in both strings $\z'_1$ and $\z_0$)
    \item Case 2: There are at least $\ell - n_1(\z_1)$ 0s (and these positions have 0 in both strings $\z_1$ and $\z_0$)
\end{itemize}
We will now deal with these cases separately.

\underline{Case 1}: We apply Lemma \ref{lem:nbound} to the strings $\z_0$ and $\z'_1$. Since there are at least $n_1(\z'_1)-\ell$ positions in which both $\z'_1$ and $\z_0$ are 1, we have
\begin{equation}
    n_{01} \leq n_1(\z'_1) - (n_1(\z'_1)-\ell) = \ell
\end{equation}
and
\begin{equation}
    n_{10} \leq  n_0(\z'_1) = n-n_1(\z'_1) \leq n-\ell
\end{equation}
and thus the conclusion follows.

\underline{Case 2}: We apply Lemma \ref{lem:nbound} to the strings $\z_0$ and $\z_1$. Since there are at least $\ell - n_1(\z_1)$ positions in which both $\z_1$ and $\z_0$ are 0, we have
\begin{align}
    n_{10} &\leq n_0(\z_1) - (\ell - n_1(\z_1)) \nn \\ &= (n-n_1(\z_1)) - (\ell - n_1(\z_1)) \leq n-\ell
\end{align}
and
\begin{equation}
    n_{01} \leq n_1(\z_1) \leq \ell
\end{equation}
and the conclusion again follows.
\end{proof}

\section{Intuitive Overview of the Proof} \label{sec:overview}

Since the proof of Theorem \ref{thm:binary_optimal_exact} is quite technical, we provide an informal intuitive overview in this section before turning to the full details.  In this section (only), we omit the explicit remainder terms in the exponent, and instead simply write them as $o(n)$.

Recall that the exponent in \eqref{eq:2hop_optimal} contains a minimum over a $P$ part and a $Q$ part, each of which contains a maximum associated with $s$ and $1-s$.  By initially handling some easier cases, we will find that the hardest case to handle has the $P$ part and $Q$ part being equal, which in turn means that the minimum of the two is the same as the maximum.  Accordingly, recalling that $E^*$ is the target exponent in \eqref{eq:2hop_optimal}, we suppose that $s$ and $E^*$ satisfy
\begin{multline}
    \max\{\dc(0,1,P,s), \dc(0,1,P,1-s), \\ \dc(0,1,Q,s), \dc(0,1,Q,1-s)\} \leq E^* \label{eq:four_terms_E}
\end{multline}
and seek to prove that $E(P,Q)\leq E^*$.
For any non-negative integer $k\leq n$, we can calculate the total error probability by conditioning on the first $k$ symbols received by the relay:
\begin{align}
    p_{e,0}+p_{e,1} = \sum_{\yk\in \calY^k} \Big( &P(\yk|\vec{x}_0) p_{e,0}(\yk) \nn \\ + &P(\yk|\vec{x}_1) p_{e,1}(\yk) \Big),
    \label{eq:length_k_mass}
\end{align}
where we recall that $P(\yk|\vec{x}_0)$ is a shorthand for $P^k(\yk|\vec{x}_0)$ with $P^k$ being the $k$-fold product distribution, and $p_{e,0}(\yk)$ and $p_{e,1}(\yk)$ denote conditional error probabilities given $\yk$.  Observe that for any two strings $\vec{w}$ and $\vec{w}'$ of length $n$ such that the first $k$ symbols are identical, we have
\begin{equation}
    \dc(\vec{w}, \vec{w}', Q^n, s) = \sum_{i=1}^{n} \dc(\vec{w}^{(i)}, \vec{w}'^{(i)}, Q, s) \leq (n-k) E^*
\end{equation}
because the first $k$ terms are zero and the rest are bounded via \eqref{eq:four_terms_E}.

Recall that $\w_0(\yk)$ and $\w_1(\yk)$ are defined in \eqref{eq:w1_def}--\eqref{eq:w0pe0}, and essentially amount to fixing the first $k$ symbols sent by the relay (since they are determined via $\yk$) and considering the rest to be the best possible in terms of conditional error probability (given $\Theta=0$ or $\Theta=1$).  Due to these two sequences being identical in the first $k$ symbols, Lemma \ref{lem:berlekamp} tells us that 
either
\begin{align}
    &-\log p_{e,0}(\yk) \nn \\ & \leq  \dc(\w_0(\yk), \w_1(\yk),Q^n,s) \nn \\ 
        &\qquad - s\dc'(\w_0(\yk), \w_1(\yk),Q^n,s) + o(n) \\
    & \leq (n-k)E^* - s\dc'(\w_0(\yk), \w_1(\yk),Q^n,s) + o(n)
    \label{eq:wishful0}
\end{align}
or
\begin{align}
    &-\log p_{e,1}(\yk) \nn \\ & \leq \dc(\w_0(\yk), \w_1(\yk),Q^n,s) \nn \\ 
        &\qquad 
     + (1-s)\dc'(\w_0(\yk), \w_1(\yk),Q^n,s) + o(n) \\
    & \leq (n-k)E^* + (1-s)\dc'(\w_0(\yk), \w_1(\yk),Q^n,s)  \nn \\ 
        &\hspace*{6.5cm} + o(n).
    \label{eq:wishful1}
\end{align}
We note that for the most part, instead of considering $\w_0(\cdot)$ and $\w_1(\cdot)$, our analysis will consider $\w_1^c(\cdot)$ and $\w_1(\cdot)$, with $\w_1^c(\cdot)$ being the bit-wise complement of $\w_1(\cdot)$.  This will be justified by the fact that such a substitution increases $\dc$, and the corresponding changes in $\dc'$ can also be justified using Lemma \ref{lem:dc_kl}.  To avoid over-complicating this section, we always use $\w_0(\cdot)$ and $\w_1(\cdot)$ in the following outline (at the expense of parts of the outline matching the actual proof steps less precisely).

Suppose that by some ``wishful thinking'' (to be addressed in more detail below), we have
\begin{equation}
    \dc'(\w_0(\yk), \w_1(\yk),Q^n,s) = -\log \frac{P(\yk|\vec{x}_0)}{P(\yk|\vec{x}_1)} + o(n).
    \label{eq:wishful}
\end{equation}
Then, under \eqref{eq:wishful0}, we would have
\begin{align}
    & p_{e,0}(\yk) P(\yk|\vec{x}_0) \nn \\ &\geq  \exp(-(n-k)E^*+o(n)) \left(\frac{P(\yk|\vec{x}_0)}{P(\yk|\vec{x}_1)}\right)^{-s} P(\yk|\vec{x}_0)\\
    & \geq \exp(-(n-k)E^*+o(n))  P(\yk|\vec{x}_0)^{1-s} P(\yk|\vec{x}_1)^{s},
\end{align}
and similarly, under \eqref{eq:wishful1}, we would have
\begin{align}
    & p_{e,1}(\yk) P(\yk|\vec{x}_1) \nn \\ & \geq \exp(-(n-k)E^*+o(n)) \left(\frac{P(\yk|\vec{x}_0)}{P(\yk|\vec{x}_1)}\right)^{1-s}  P(\yk|\vec{x}_1)\\
    & \geq \exp(-(n-k)E^*+o(n)) P(\yk|\vec{x}_0)^{1-s} P(\yk|\vec{x}_1)^{s}.
\end{align}
Moreover, a standard ``single-letterization'' argument gives
\begin{align}
    &\sum_{(\yk) \in \calY^k} P(\yk|\vec{x}_0)^{1-s} P(\yk|\vec{x}_1)^{s} \nn \\
 & = \sum_{y_1 \in \calY} \sum_{y_2 \in \calY} \ldots \sum_{y_k \in \calY} \prod_{i=1}^k P(y_i|\vec{x}_0^{(i)})^{1-s} P(y_i|\vec{x}_1^{(i)})^{s}\\
 & = \exp\left(-\sum_{i=1}^k \dc(\vec{x}_0^{(i)}, \vec{x}_1^{(i)}, P,s)\right) \ge e^{-kE^*},
\end{align}
with the inequality using \eqref{eq:four_terms_E}. 
Combining the above with \eqref{eq:wishful0} and \eqref{eq:wishful1} would then give an overall error bound of
\begin{align}
    &p_{e,0} + p_{e,1} \nn \\ 
   &\ge \exp(-(n-k)E^* + o(n)) \cdot \exp\left(-\sum_{i=1}^k \dc(\vec{x}_0^{(i)}, \vec{x}_1^{(i)}, P, s)\right) \nn \\ &\geq \exp(-nE^* + o(n)).
\end{align}

The preceding argument is based on the hypothetical event of \eqref{eq:wishful} being true.  Unfortunately, however, we cannot guarantee that this is the case for arbitrary sequences $\yk$.  Instead, we seek to argue that the quantity
\begin{equation}
    \dc'(\w_0(\yk), \w_1(\yk),Q^n,s) + \log \frac{P(\yk|\vec{x}_0)}{P(\yk|\vec{x}_1)}
    \label{eq:small_abs}
\end{equation}
is ``sufficiently small'' in absolute value for a ``sufficiently large set'' of $\yk$ vectors.

Towards achieving this goal, we note a simple but important observation: If $\yn$ is such that the quantities $\dc'(\w_0(\emptyset), \w_1(\emptyset),Q^n,s)$ and $\log \frac{P(\yn|\vec{x}_0)}{P(\yn|\vec{x}_1)}$ are of opposite sign, we can find some $k$ and $k+1$ at which \eqref{eq:small_abs} changes sign. 
It is easy to check that $\log \frac{P(\yk|\vec{x}_0)}{P(\yk|\vec{x}_1)}$ changes by at most $\log \frac{1}{p_{\min}}$ when we change $k$ to $k+1$.  We would ideally have a similar ``smoothness'' (i.e., only small/gradual changes) for $\dc'(\w_0(\yk), \w_1(\yk),Q^n,s)$, as this would imply that \eqref{eq:small_abs} is small at this $k$ value.  Unfortunately, the argument is not this simple, because in principle $\dc'(\w_0(\yk), \w_1(\yk),Q^n,s)$ can change by a large amount from $k$ to $k+1$, so we need a way to essentially ``smooth'' this out.  We also need to account for the fact that different sequences $\yn$ can have different values of $k$ at which \eqref{eq:small_abs} switches sign.  

We elaborate on these main difficulties (with forward-references to where they are addressed) as follows:
\begin{itemize}
\item To address the fact that $\dc'(\w_0(\yk), \w_1(\yk),Q^n,s)$  can jump by a large amount as we move from $k$ to $k+1$, we set up bounds on $-\log p_{e,0}(\yk)$ and $-\log p_{e,1}(\yk)$ using $\ell_k$, the number of 1s in the last $n-k$ symbols of $\w_1$, instead of $\dc'(\w_0(\yk), \w_1(\yk),Q^n,s)$ itself (Lemma \ref{lem:l_count}).  There is still no guarantee that $|\ell_{k+1}-\ell_k|$ will be small, but Lemma \ref{lem:transition_bound} allows us to make conclusions about values of $\ell$ in between $\ell_k$ and $\ell_{k+1}$.
\item As mentioned above, the transition point $k$ may differ depending on the choice of $\yn$, making the error probability decomposition in  \eqref{eq:length_k_mass} difficult to work with for bounding $p_{e,0}(\yk)$ and $p_{e,1}(\yk)$. 
To address this, we will introduce (in Lemma \ref{lem:prefix_free}) a set $A$ of variable-length sequences on $\calY$ such that any $\yn$ has exactly one prefix in $A$, and use the following decomposition instead of \eqref{eq:length_k_mass}:
\begin{align}
    p_{e,0}+p_{e,1} = \sum_{\yk\in A} \Big( &P(\yk|\vec{x}_0) p_{e,0}(\yk) \nn \\ + &P(\yk|\vec{x}_1) p_{e,1}(\yk) \Big).
\end{align}
\item Recall that in the preceding outline, we imposed the assumption that $\dc'(\w_0(\emptyset), \w_1(\emptyset),Q^n,s)$ and $\log \frac{P(\yn|\vec{x}_0)}{P(\yn|\vec{x}_1)}$ are of opposite sign.  When their signs are the same, it appears to be difficult to guarantee that \eqref{eq:small_abs} is small.  Lemma \ref{lem:finish} addresses this issue by showing that a closely related sign condition has a ''sufficiently small'' probability of occurring under a suitably-chosen distribution.\footnote{Lemma \ref{lem:finish} actually concerns the signs of $\dc'(\vec{x}_0, \vec{x}_1, P^n, t)$ and $\log \frac{P(\yn|\vec{x}_0)}{P(\yn|\vec{x}_1)}$, but by that point of the analysis, (i) we will be considering $\dc'(\w_1^c(\emptyset), \w_1(\emptyset), Q^n, t)$ instead of $\dc'(\w_0(\emptyset), \w_1(\emptyset), Q^n, t)$ in accordance with the discussion following \eqref{eq:wishful1}, and (ii) we will have reduced to a case where $\dc'(\vec{x}_0, \vec{x}_1, P^n, t)$ and $\dc'(\w_1^c(\emptyset), \w_1(\emptyset), Q^n, t)$ are of strictly opposite sign; see Assumption \ref{assump:opp}.}
\end{itemize}
All of these aspects of the analysis turn out to be quite technical and complicated.

\section{Simplifying Conditions Towards Proving Theorem \ref{thm:binary_optimal_exact}} \label{sec:reduction}

In this section, we identify some simplifying conditions that can be assumed when proving Theorem \ref{thm:binary_optimal_exact}.  We define
\begin{align}
    E_s = \min\Big(
    &\max\big(\dc(P_0, P_1, s), \dc(P_0, P_1, 1-s)\big), \nn \\
    &\max\big(\dc(Q_0, Q_1, s), \dc(Q_0, Q_1, 1-s)\big)
    \Big)
    \label{eq:e_s}
\end{align}
\begin{equation}
    E^* = \max_{0\leq s \leq 1} E_s,
    \label{eq:e_def}
\end{equation}
and let $s^{*}$ be the corresponding argmax (which exists since $E_s$ is continuous in $s$). Recall that the goal of Theorem \ref{thm:binary_optimal} is to show that $E(P,Q) \leq E^*$ (and Theorem \ref{thm:binary_optimal_exact} gives a more precise statement).

Since $E_s = E_{1-s}$, we have $E^* = \max_{0\leq s \leq \frac12} E_s$, so we may assume that $s^* \leq \frac12$.  Moreover, we may assume that $\dc(P_0, P_1, s^{*}) \ge \dc(P_0, P_1, 1-s^{*})$; this is because if not, we can simply reverse the inputs of $P$. Similarly, we may assume that $\dc(Q_0, Q_1, s^{*}) \geq \dc(Q_0, Q_1, 1-s^{*})$. Under these assumptions, we have the simplified expression $E^* = \min(\dc(P_0, P_1,s^{*}), \dc(Q_0, Q_1, s^{*}))$.

In summary, we can make the following assumption.

\begin{assump} \label{assump:s_star}
We assume that $s^{*}\leq \frac12$, $\dc(P_0, P_1, s^{*}) \geq \dc(P_0, P_1, 1-s^{*})$, $\dc(Q_0, Q_1, s^{*}) \geq \dc(Q_0, Q_1, 1-s^{*})$, and $E^* = \min(\dc(P_0, P_1,s^{*}), \dc(Q_0, Q_1, s^{*}))$.
\end{assump}

We will investigate the behavior of $E_s $ near $s^{*}$ and derive some useful properties of $\dc'(\vec{x}_0, \vec{x}_1, P^n, s^{*})$ and $\dc'(\w_0(\emptyset), \w_1(\emptyset), Q^n, s^{*})$.  Our main goal in this section is to establish the following.

\begin{lemma}
    For $E^*$ defined in \eqref{eq:e_def}, if Assumption \ref{assump:s_star} holds and
    \begin{equation}
        -\log (p_{e,0} + p_{e,1}) > nE^* + \sqrt {2n} \log \frac{1}{p_{\min}} + \log 4,
        \label{eq:perr_opp}
    \end{equation}
    then there exists $t \in \{s^*, 1-s^*\}$ (with $s^* \in \big(0,\frac{1}{2}\big]$) such that $\dc'(\vec{x}_0, \vec{x}_1, P^n, t)$ and $\dc'(\w_1^c(\emptyset), \w_1(\emptyset), Q^n, t)$ (with $\w_1^c(\emptyset)$ being the bit-wise complement string of $\w_1(\emptyset)$) are of strictly opposite sign, and
    \begin{multline}
    \max(\dc(P_0, P_1,t), \dc(P_0, P_1,1-t)) \\ = \max(\dc(Q_0, Q_1,t), \dc(Q_0, Q_1,1-t)) = E^*.
    \label{eq:maxpq}
    \end{multline}
    \label{lem:opp_assump}
\end{lemma}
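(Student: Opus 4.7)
The plan is to establish the two pieces of the conclusion---the max-equality \eqref{eq:maxpq} and the strict opposite-sign of the derivatives---by separate contradiction arguments, in each case leveraging the 1-hop converse bounds of Section \ref{sec:1-hop}.

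For \eqref{eq:maxpq}, which by Assumption \ref{assump:s_star} amounts to $\dc(P_0,P_1,s^*) = \dc(Q_0,Q_1,s^*) = E^*$, I would suppose otherwise, say (without loss of generality) $\dc(P_0,P_1,s^*) < \dc(Q_0,Q_1,s^*)$, so that $E^* = \dc(P_0,P_1,s^*)$. Continuity makes $E_s$ equal $\max(\dc(P_0,P_1,s), \dc(P_0,P_1,1-s))$ in a neighborhood of $s^*$, and the local maximality of $E$ at $s^*$, together with Assumption \ref{assump:s_star} and the strict concavity of $\dc(P_0,P_1,\cdot)$, forces $s^*$ to be the global argmax of $\dc(P_0,P_1,\cdot)$ (the corner case $s^* = 1/2$ is handled by symmetry). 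This gives $E(P) = E^*$, so Corollary \ref{cor:2codewords} applied to $P$ with codewords $\vec{x}_0, \vec{x}_1$, combined with the tensorization $\dc(\vec{x}_0,\vec{x}_1,P^n,s) \le nE(P)$ from Lemma \ref{lem:iid}, yields $-\log(p_{e,0}+p_{e,1}) \le nE^* + \sqrt{2n}\log\frac{1}{p_{\min}} + \log 4$, contradicting \eqref{eq:perr_opp}. The symmetric case with $P,Q$ swapped is handled by applying Corollary \ref{cor:2codewords} to $Q$ with the best-case codewords $\w_0(\emptyset), \w_1(\emptyset)$ via \eqref{eq:w1pe1}--\eqref{eq:w0pe0}.

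Granted \eqref{eq:maxpq}, the local maximality of $\min(f,g)$ at the crossing point $s^*$ (with $f(s) = \max(\dc(P_0,P_1,s),\dc(P_0,P_1,1-s))$ and $g(s)$ defined analogously for $Q$, so that $f(s^*) = g(s^*) = E^*$) forces $\dc'(P_0,P_1,s^*)$ and $\dc'(Q_0,Q_1,s^*)$ to have non-strictly opposite signs. Using Assumption \ref{assump:s_star} and strict concavity, the argmaxes $s_P, s_Q$ of $\dc(P_0,P_1,\cdot), \dc(Q_0,Q_1,\cdot)$ both lie in $(0,1/2]$, which pins down the signs of all four derivatives $\alpha = \dc'(P_0,P_1,s^*), \beta = \dc'(P_0,P_1,1-s^*), \gamma = \dc'(Q_0,Q_1,s^*), \delta = \dc'(Q_0,Q_1,1-s^*)$. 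Substituting these into
\begin{align*}
\dc'(\vec{x}_0,\vec{x}_1,P^n,t) &= n_{01}^{P}\, \dc'(P_0,P_1,t) - n_{10}^{P}\, \dc'(P_0,P_1,1-t),\\
\dc'(\w_1^c(\emptyset),\w_1(\emptyset),Q^n,t) &= n_1\, \dc'(Q_0,Q_1,t) - (n-n_1)\, \dc'(Q_0,Q_1,1-t),
\end{align*}
with $n_1 = n_1(\w_1(\emptyset))$, and doing a case analysis on $n_1$ relative to the thresholds $n|\gamma|/(|\gamma|+|\delta|)$ and $n|\delta|/(|\gamma|+|\delta|)$, one obtains strict opposite signs for at least one $t \in \{s^*, 1-s^*\}$ in all but a narrow ``degenerate'' interval of $n_1$.

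The main obstacle is this degenerate interval of $n_1$, in which neither candidate value of $t$ produces strict opposite signs from the pure sign analysis. Here Corollary \ref{cor:2codewords} applied to $\w_1^c(\emptyset),\w_1(\emptyset)$ does not suffice, since the concave function $\dc(\w_1^c(\emptyset),\w_1(\emptyset),Q^n,\cdot)$ can exceed $nE^*$ in its interior exactly when $n_1$ lies in this range. My proposed fix is to apply the sharper ``one of two'' bounds in Lemma \ref{lem:berlekamp_dmc} to the true codewords $\w_0(\emptyset), \w_1(\emptyset)$ (legitimized by \eqref{eq:w1pe1}--\eqref{eq:w0pe0}) at both $s = s^*$ and $s = 1-s^*$, and to combine pairs of the resulting inequalities using the identities of Lemma \ref{lem:dc_kl} so that the linear $\dc'$ contributions cancel. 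The constraint on $n_1$ implicit in the degenerate range should then deliver an upper bound on $-\log(p_{e,0}+p_{e,1})$ of $nE^* + O(\sqrt n)$, yielding the required contradiction with \eqref{eq:perr_opp}.
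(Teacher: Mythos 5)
Your proof of \eqref{eq:maxpq} follows essentially the same route as the paper's Lemma \ref{lem:dcp} and first bullet of the proof of Lemma \ref{lem:opp_assump}: rule out $E^*=E(P)$ and $E^*=E(Q)$ via Corollary \ref{cor:2codewords}, and argue that otherwise the two single-letter Chernoff values must coincide at $s^*$. Two small imprecisions: you discuss the corner $s^*=1/2$ but not $s^*=0$ (which must also be excluded to get $t\in(0,1)$; the paper handles this explicitly in Case 3b of Lemma \ref{lem:dcp}); and the step ``local maximality of the $P$-term at $s^*$ forces $s^*$ to be the global argmax of $\dc(P_0,P_1,\cdot)$'' needs a word about the possibility $\dc(P_0,P_1,s^*)=\dc(P_0,P_1,1-s^*)$, where the $\max(g(s),g(1-s))$ structure can create a local max at $s^*$ even when $g$ itself is strictly increasing there; the paper sidesteps this by perturbing $s^*$ directly rather than locating the argmax.

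The real gap is in the second part. You correctly derive the sign formulas for the $n$-letter derivatives and correctly identify a ``degenerate'' range (where the direct sign analysis does not force strict opposite signs), but your diagnosis of that range is wrong, and the proposed fix would not close it. In the degenerate range the $n$-letter Chernoff function $\dc(\w_1^c(\emptyset),\w_1(\emptyset),Q^n,\cdot)$ is (weakly) balanced while $\dc(Q_0,Q_1,\cdot)$ is skewed; since a skewed single-letter function is monotone on $[s^*,1-s^*]$ (Fact \ref{fact:monotone}) and both $\dc(Q_0,Q_1,s)$ and $\dc(Q_0,Q_1,1-s)$ are then at most $\dc(Q_0,Q_1,s^*)=E^*$ throughout that interval, the tensorized function's interior maximum is bounded by $nE^*$, \emph{not} above it. So Corollary \ref{cor:2codewords} \emph{does} suffice exactly in the degenerate range -- contrary to your claim -- and this is precisely what the paper's Lemma \ref{lem:opp_q} establishes (Lemma \ref{lem:opp_p} does the analogous job on the $P$ side). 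Your alternative fix -- applying Lemma \ref{lem:berlekamp_dmc} at both $s^*$ and $1-s^*$ and combining via Lemma \ref{lem:dc_kl} so the $\dc'$ terms ``cancel'' -- is not worked out and does not obviously go through: the one-of-two structure means you do not control which of \eqref{eq:berlekamp_dmc_pe0}/\eqref{eq:berlekamp_dmc_pe1} holds at each $s$ value, and the $\dc'$ coefficients ($-s$ versus $1-s$) do not telescope across the two dichotomies. A further oversight: you only consider degeneracy in $n_1$ (the $Q$-side). When $\dc(P_0,P_1,\cdot)$ is the skewed one, the degeneracy lives in $(n_{01}^P,n_{10}^P)$, and the contradiction must come from the $P$-side bound (Corollary \ref{cor:2codewords} applied to $\vec{x}_0,\vec{x}_1$ after data-processing to the relay's best estimate, as in \eqref{eq:dpi}), not from the $Q$-side codewords. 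The paper's ``skewed/balanced/neutral'' machinery handles both sides symmetrically and is the clean way to close this gap.
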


Note that if \eqref{eq:perr_opp} fails then Theorem \ref{thm:binary_optimal_exact} follows immediately, so Lemma \ref{lem:opp_assump} is giving us additional conditions that we can assume beyond Assumption \ref{assump:s_star}.  In particular, the condition of strictly opposite sign will be useful in view of the major role of signs outlined in Section \ref{sec:overview}, and the condition \eqref{eq:maxpq} indicates that we have reduced to the scenario where $P$ and $Q$ contribute the same value in the exponent, rather than one being dominant over the other.

%
%
The following definition will be central to the proof of Lemma \ref{lem:opp_assump}, providing a useful means to distinguish various cases depending on how $\dc(P_0,P_1,s)$ and $\dc(Q_0,Q_1,s)$ behave at the optimal $s$ value (increasing, decreasing, or a local maximum).

\begin{dfn} \label{def:types}
    Let $g:[0,1] \rightarrow \mathbb{R}_{\geq 0}$ be a concave function with a unique global maximum, and fix $s^* \in [0,\frac12]$. We say that a function is \emph{skewed} if $\argmax_s g(s) < s^*$ or $\argmax_s g(s) > 1-s^*$,  \emph{balanced} if $s^* < \argmax_s g(s) < 1-s^*$, and \emph{neutral} if $\argmax_s g(s) \in \{s^*, 1-s^*\}$ (see Figure \ref{fig:skewed}).

    We will also use the term \emph{weakly balanced} to mean balanced or neutral, and \emph{weakly skewed} to mean skewed or neutral.
    Given two functions, we say that they are of \emph{strictly opposite type} if one is skewed and the other is balanced, and \emph{weakly opposite type} if one is weakly skewed and the other is weakly balanced.
\end{dfn}

Before proceeding, we give an overview of how Lemma \ref{lem:opp_assump} will be proved and the relevant intermediate results towards doing so:
\begin{itemize}
    \item If $E^* = E(P)$ or $E^* = E(Q)$ (i.e., the target exponent $E^*$ matches a 1-hop exponent) then a simple argument based on Corollary \ref{cor:2codewords} reveals that \eqref{eq:perr_opp} cannot hold.  Thus, we may assume that $E^* \ne E(P)$ and $E^* \ne E(Q)$.
    \item When $E^* \ne E(P)$ and $E^* \ne E(Q)$, we will show in Lemma \ref{lem:dcp} that the only remaining possibility is that $E^* = \dc(P_0, P_1, s^*) = \dc(Q_0, Q_1, s^*)$, $s^*\neq 0$, and the quantities $\dc(P_0, P_1, s^*)$ and $\dc(Q_0, Q_1, s^*)$ are of strictly opposite type in the sense of Definition \ref{def:types}.  The former of these statements will immediately give \eqref{eq:maxpq}.
    \item Lemmas \ref{lem:opp_p} and \ref{lem:opp_q} will provide additional conditions under which \eqref{eq:perr_opp} cannot hold, which will imply that the preceding information on $\dc(P_0, P_1, s^*)$ and $\dc(Q_0, Q_1, s^*)$ translates to analogous information on $\dc(\vec{x}_0, \vec{x}_1, P^n, s)$ and $\dc(\w_1^c(\emptyset), \w_1(\emptyset), Q^n, s)$, namely, the two have the same non-neutral type.  Combining this finding with the definitions of skewed and balanced, we will deduce the desired result on $\dc'(\vec{x}_0, \vec{x}_1, P^n, t)$ and $\dc'(\w_1^c(\emptyset), \w_1(\emptyset), Q^n, t)$ having opposite signs.
\end{itemize}
We now proceed to make the above outline concrete.

We will use Definition \ref{def:types} with $\dc(P_0,P_1,s)$ or $\dc(Q_0,Q_1,s)$ playing the role of $g$ (treated as functions of $s$ with the other arguments being fixed), so we proceed to discuss the assumption of concavity and a unique maximum under these choices.  Concavity immediately follows from \eqref{eq:dc_var}, and this equation additionally implies that $\dc''$ is either negative for all $s \in (0,1)$ or zero for all $s \in (0,1)$ (since $Q_s$ has the same support for all such $s$).  In the former case, we have strict concavity in $s$, which trivially implies a unique maximum.  On the other hand, if $\dc''$ is zero, then $\dc$ is a straight line, and there are two cases to consider:
\begin{itemize}
    \item If $\dc$ has a non-zero slope (e.g., this occurs for the Z-channel), then the maximum is again unique, and occurs either at $s=0$ or $s=1$.
    \item If $\dc$ has slope zero (e.g., this occurs for the binary erasure channel), then the maximum is non-unique.  However, in this case the max-min order in \eqref{eq:2hop_optimal} is irrelevant anyway, so the trivial converse from \cite{teachlearn} (i.e., 2-hop is at least as hard as 1-hop) can be applied directly.
\end{itemize}
In view of the second dot point, we subsequently assume that both $\dc(P_0,P_1,s)$ and $\dc(Q_0,Q_1,s)$ are non-constant with respect to $s$, so they both satisfy the conditions of Definition \ref{def:types}.

\begin{figure*}
\begin{center}
\begin{tikzpicture}[scale=0.7]
\begin{axis}[
    axis lines = left,
    xlabel = {$s$},
    ylabel = {}, 
    yticklabels={,,} 
]
\addplot [
    domain=0:1, 
    samples=100, 
    color=red,
]
{-log2(0.999^x*0.5^(1-x)+0.001^x*0.5^(1-x))};

\addplot[dashed] coordinates{(0.4,0) (0.4,0.5)};
\addplot[dashed] coordinates{(0.6,0) (0.6,0.36)};
\addplot [only marks] table {
0.317 0.53
};
\end{axis}
\node at (3,-2) {skewed};
\end{tikzpicture}
\begin{tikzpicture}[scale=0.7]
\begin{axis}[
    axis lines = left,
    xlabel = {$s$},
    ylabel = {},
    yticklabels={,,} 
]
\addplot [
    domain=0:1, 
    samples=100, 
    color=red,
]
{-log2(0.8^x*0.2^(1-x)+0.2^x*0.8^(1-x))};
\addplot [only marks] table {
0.5 0.32
};
\addplot[dashed] coordinates{(0.4,0) (0.4,0.3)};
\addplot[dashed] coordinates{(0.6,0) (0.6,0.3)};
\end{axis}
\node at (3,-2) {balanced};
\end{tikzpicture}
\begin{tikzpicture}[scale=0.7]
\begin{axis}[
    axis lines = left,
    xlabel = {$s$},
    ylabel = {},
    yticklabels={,,} 
]   
\addplot [
    domain=0:1, 
    samples=100, 
    color=red,
]
{-log2(0.981^x*0.5^(1-x)+0.019^x*0.5^(1-x))};
\addplot [only marks] table {
0.4 0.34
};
\addplot[dashed] coordinates{(0.4,0) (0.4,0.34)};
\addplot[dashed] coordinates{(0.6,0) (0.6,0.28)};
\end{axis}
\node at (3,-2) {neutral};
\end{tikzpicture}
\end{center}
    \caption{Examples of skewed, balanced, and neutral. In this case, we are using $s^*=0.4$ (and therefore $1-s^*=0.6$). The solid circle represents the global maximum. }
    \label{fig:skewed}
    
\end{figure*}
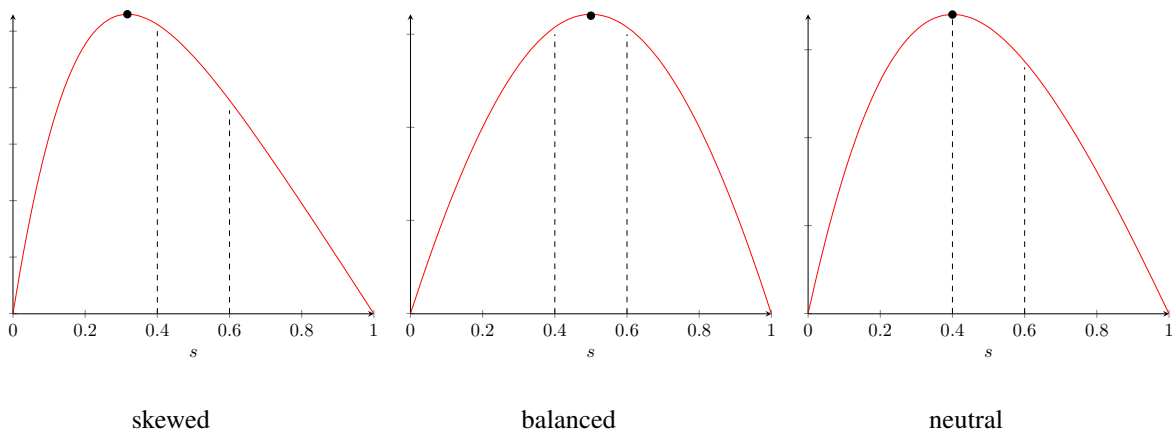
From the definitions of skewed and balanced (and the assumption that $g$ has a unique maximizer), we immediately obtain the following.  

\begin{fact}
    If $g$ is skewed, then $g'(s^*)$ and $g'(1-s^*)$ have the same sign and $g$ is monotone on $[s^*, 1-s^*]$. If $g$ is balanced, then it is increasing on $[0,s^*]$ and decreasing on $[1-s^*,1]$, with $g'(s^*)>0$ and $g'(1-s^*)<0$.
    \label{fact:monotone}
\end{fact}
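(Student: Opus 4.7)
The plan is to read the conclusions off the definitions of ``skewed'' and ``balanced'' together with one structural lemma about concave functions with a unique maximizer. Specifically, I would first establish as a standalone fact that if $g:[0,1]\to\mathbb{R}$ is concave with a unique global maximizer $s_0$, then $g$ is strictly increasing on $[0,s_0]$ and strictly decreasing on $[s_0,1]$, which in turn forces $g'(s)>0$ on $[0,s_0)$ and $g'(s)<0$ on $(s_0,1]$ at every point where $g$ is differentiable.

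The strict monotonicity piece will be proved by a three-point concavity argument. Suppose $s_0 \le s_1 < s_2$ with $g(s_1) \le g(s_2)$. If $s_0 = s_1$ this immediately contradicts uniqueness; otherwise rewriting the concavity inequality at $s_0 < s_1 < s_2$ gives
\begin{equation}
    (s_2 - s_1)\bigl[g(s_1) - g(s_0)\bigr] \;\ge\; (s_1 - s_0)\bigl[g(s_2) - g(s_1)\bigr],
\end{equation}
whose left-hand side is $\le 0$ (since $s_0$ is the maximizer) while the right-hand side is $\ge 0$; forcing both sides to zero yields $g(s_1)=g(s_0)$, again contradicting uniqueness. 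The argument on $[0,s_0]$ is symmetric. The strict sign on $g'$ then follows because concavity makes $g'$ non-increasing: if $g'(s_*)\le 0$ at some $s_* \in [0,s_0)$, then $g'\le 0$ throughout $[s_*, s_0]$, contradicting the strict increase of $g$ on that interval.

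With the lemma in hand, the remaining step is to unpack the definitions in Definition~\ref{def:types}. If $g$ is skewed with $s_0 < s^*$, then $s^*>0$ (hence $1-s^*<1$) and $s_0 < s^* \le 1-s^*$ places both $s^*$ and $1-s^*$ in $(s_0,1]$, so the lemma yields $g'(s^*),g'(1-s^*)<0$ and strict decrease of $g$ on $[s^*,1-s^*]$. The subcase $s_0 > 1-s^*$ is symmetric and yields both derivatives positive together with strict increase of $g$ on $[s^*,1-s^*]$. If $g$ is balanced then $s^* < s_0 < 1-s^*$, so $[0,s^*] \subset [0,s_0)$ and $[1-s^*,1] \subset (s_0,1]$, which immediately gives the stated monotonicity along with $g'(s^*)>0$ and $g'(1-s^*)<0$.

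There is no real obstacle here; the fact is essentially an immediate unpacking of the definitions, as the excerpt itself indicates with the word ``immediately.'' The only bookkeeping concerns the vacuous corner $s^*=0$ in the skewed subcase $s_0 < s^*$ (which cannot occur since $s_0 \ge 0$) and the differentiability of $g$ at $s^*$ and $1-s^*$; both are unproblematic because the fact will be applied with $g = \dc(P_0,P_1,\cdot)$ or $g = \dc(Q_0,Q_1,\cdot)$, which are smooth on $(0,1)$ by the explicit expressions derived in Lemma~\ref{lem:d2c}.
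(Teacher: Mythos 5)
Your proposal is correct, and it fills in exactly the argument the paper takes for granted when it writes that the fact is obtained ``immediately'' from the definitions and uniqueness of the maximizer: the standalone lemma (a concave function with a unique maximizer $s_0$ is strictly increasing on $[0,s_0]$, strictly decreasing on $[s_0,1]$, with $g'>0$ to the left of $s_0$ and $g'<0$ to the right) is the natural structural statement, and the case split over skewed/balanced then just checks where $s^*$ and $1-s^*$ sit relative to $s_0$. Your remarks on the vacuous corner $s^*=0$ and on differentiability of $\dc(\cdot,\cdot,\cdot,s)$ via Lemma~\ref{lem:d2c} are the right sanity checks, and nothing is missing.
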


The following lemma will be useful for inferring conditions under which $\dc(P_0, P_1, s^*)$ and $\dc(Q_0, Q_1, s^*)$ are of strictly opposite type (in the sense of Definition \ref{def:types}).  Recall from the above outline that this will later be translated to similar kinds of conditions on $\dc(\vec{x}_0, \vec{x}_1, P^n, s)$ and $\dc(\w_1^c(\emptyset), \w_1(\emptyset), Q^n, s)$.

\begin{lemma}
    For any pair $(P,Q)$ of binary-input DMCs, at least one of the following holds:
    \begin{itemize}
        \item $E^* = E(P)$;
        \item $E^* = E(Q)$;
        \item $E^* = \dc(P_0, P_1, s^*) = \dc(Q_0, Q_1, s^*)$, $s^*\neq 0$, and the quantities $\dc(P_0, P_1, s^*)$ and $\dc(Q_0, Q_1, s^*)$ are of strictly opposite type.
    \end{itemize}
    Here, $E(P)$ and $E(Q)$ denote the respective optimal 1-hop error exponents (see Corollary \ref{cor:e1}).
    \label{lem:dcp}
\end{lemma}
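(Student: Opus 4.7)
I plan to do a case analysis on how $E^*$ relates to the 1-hop exponents $E(P)$ and $E(Q)$, resting on two structural observations. Write $f_P(s) = \dc(P_0,P_1,s)$ and $f_Q(s) = \dc(Q_0,Q_1,s)$. First, $E^* \le \min(E(P), E(Q))$ directly, since $E_s \le \max(f_P(s), f_P(1-s)) \le E(P)$ by Corollary \ref{cor:e1}, and symmetrically for $Q$. Second, introducing the symmetrized envelopes $g_P(s) := \max(f_P(s), f_P(1-s))$ and $g_Q(s) := \max(f_Q(s), f_Q(1-s))$, each is a pointwise maximum of two concave functions with the same unique global maximum value $E(P)$ (resp.\ $E(Q)$); a short local analysis (tracking which branch of the $\max$ is active, and using strict concavity of $f_P, f_Q$ per the discussion after Definition \ref{def:types}) then shows that every local maximum of $g_P$ equals $E(P)$, and likewise for $g_Q$.

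If $E^* \in \{E(P), E(Q)\}$ the first two bullets are immediate, so I focus on $E^* < \min(E(P), E(Q))$ and aim for the third bullet. I first claim $g_P(s^*) = g_Q(s^*) = E^*$: if $g_P(s^*) < g_Q(s^*)$, then $E_s = g_P(s)$ in a neighborhood of $s^*$, so $s^*$ is a local maximum of $g_P$, forcing $E^* = E(P)$, a contradiction; the other strict case is symmetric. Under Assumption \ref{assump:s_star} this gives $f_P(s^*) = f_Q(s^*) = E^*$, which is the first equality in the third bullet. A short perturbation rules out $s^* = 0$: since $\argmax f_P, \argmax f_Q \ne 0$ (else $E^* \in \{E(P), E(Q)\}$), strict concavity of $f_P, f_Q$ gives $f_P(\epsilon), f_Q(\epsilon) > E^*$ for small $\epsilon > 0$, so $E_\epsilon > E^*$; a parallel perturbation rules out $s^* = \tfrac12$ (if neither $f_P$ nor $f_Q$ maximizes at $\tfrac12$, both $g_P, g_Q$ strictly exceed $E^*$ on both sides; otherwise $E^* \in \{E(P), E(Q)\}$). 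Hence $s^* \in (0, \tfrac12)$.

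It remains to establish the strict opposite-type conclusion at $s^* \in (0, \tfrac12)$. Under Assumption \ref{assump:s_star} and strict concavity, the type of $f_P$ is determined by the sign of $f_P'(s^*)$: balanced iff positive, skewed iff negative, neutral iff zero (the last forcing $\argmax f_P = s^*$, hence $E(P) = f_P(s^*) = E^*$, which is excluded); likewise for $f_Q$. Splitting on whether $f_P(s^*) > f_P(1-s^*)$ strictly (so $g_P = f_P$ in a neighborhood of $s^*$) or $f_P(s^*) = f_P(1-s^*)$ (where strict concavity forces $f_P$ balanced and makes $g_P$ strictly increase from $s^*$ on both sides), and similarly for $f_Q$, the local-maximum condition for $E_s = \min(g_P, g_Q)$ at $s^*$ forces $f_P'(s^*)$ and $f_Q'(s^*)$ to be nonzero and of opposite sign, which is exactly the strict opposite-type conclusion; the one degenerate subcase in which both pairs are in equality is ruled out because it would make $\min(g_P, g_Q) > E^*$ on both sides of $s^*$. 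I expect the main technical obstacle to be the equality cases $f(s^*) = f(1-s^*)$, where $g$ is non-differentiable at $s^*$ and one must carefully track the two branches of the $\max$ on each side of $s^*$; this is where strict concavity of $f_P, f_Q$ is essential (the remaining non-strictly-concave linear case of $\dc$ is handled directly, as Assumption \ref{assump:s_star} pins down $s^*$ and the type classification).
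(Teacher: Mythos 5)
Your proof is correct and arrives at the same conclusion as the paper's, but it is organized around a different structural lemma and so reads quite differently. The paper's proof is a nested three-way case analysis driven directly by the comparison $\dc(P_0,P_1,s^*)$ vs.\ $\dc(Q_0,Q_1,s^*)$, with a perturbation of $s^*$ in each non-degenerate sub-case. You instead introduce the symmetrized envelopes $g_P(s)=\max(f_P(s),f_P(1-s))$ and $g_Q(s)=\max(f_Q(s),f_Q(1-s))$, observe that $E_s = \min(g_P(s),g_Q(s))$, and make the key intermediate claim that \emph{every} local maximum of $g_P$ (resp.\ $g_Q$) equals $E(P)$ (resp.\ $E(Q)$). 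This claim does not appear explicitly in the paper and is not completely automatic -- it relies on the fact that $g_P$ is the pointwise max of $f_P(s)$ and $f_P(1-s)$, two concave functions sharing the \emph{same} maximum value (not merely being concave), and it requires a brief argument tracking which branch is active on each side of $\tfrac12$; your one-sentence sketch is believable but would need to be fleshed out in a full write-up. Once this claim is in place, your derivation that $g_P(s^*)=g_Q(s^*)=E^*$ whenever $E^*<\min(E(P),E(Q))$ is clean and replaces the paper's Cases~1 and~2. Your explicit exclusion of $s^*=\tfrac12$ is slightly stronger bookkeeping than the paper's (the paper only asserts $s^*\neq 0$, and leaves $s^*\neq\tfrac12$ implicit in the fact that ``balanced'' is impossible there); this is a point of extra care on your side. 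Your final sign analysis of $f_P'(s^*)$, $f_Q'(s^*)$ under Assumption~\ref{assump:s_star} correctly reproduces the paper's Case~3c, and you correctly flag that the non-strictly-concave (linear) case of $\dc$ needs a separate sentence, which is also how the paper handles it (in the discussion following Definition~\ref{def:types}). In sum, the underlying ideas -- concavity, uniqueness of the maximizer, and a local perturbation of $s^*$ -- are the same as the paper's, but packaging them behind the ``local max of the symmetrized envelope is global'' lemma makes your argument more modular: the $P$-side and $Q$-side are treated symmetrically and the case analysis is flatter.
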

\begin{proof}
We split into cases as follows:
\begin{itemize}
    \item Case 1: $\dc(P_0, P_1, s^*) > \dc(Q_0, Q_1,s^*)$. 
    \begin{itemize}
        \item Case 1a: $\dc(Q_0,Q_1,s)$ attains a global maximum at $s=s^*$.  Then, we have
    \begin{align}
    E^* &= E_{s^*} \stackrel{\text{Assump.}~\ref{assump:s_star}}{=} \dc(Q_0, Q_1, s^*) \nn \\ &= \max_{0\leq s \leq 1} \dc(Q_0, Q_1, s) \stackrel{\eqref{eq:e1}}{=} E(Q).
    \label{eq:epq_case2}
    \end{align}
    \item Case 1b: $\dc(Q_0,Q_1,s)$ does not attain a global maximum at $s=s^*$. Since $\dc(Q_0, Q_1, \cdot)$ is concave, $\dc(Q_0, Q_1, \cdot)$ cannot have a local maximum at $s^*$.\footnote{If $s^*=0$ and $\dc(Q_0, Q_1, t) < \dc(Q_0, Q_1,0)$ for sufficiently small $t$, we also consider this to be a ``local maximum''.  Such a scenario cannot fall under Case 1b, because it contradicts the assumption that $\dc(Q_0,Q_1,s)$ does not attain a global maximum at $s=s^*$.}  Then, by moving slightly in an increasing direction from $s^*$, we can find some $s^{**}$ such that $\dc(Q_0, Q_1, s^{**}) > \dc(Q_0, Q_1,s^*)$, while maintaining $\dc(P_0, P_1, s^{**})>\dc(Q_0,Q_1,s^*)$ (due to continuity), so that $E_{s^{**}} > E_{s^*}$.  This contradicts the assumption that ${s^*}$ is the maximizer of $E_s$.
    \end{itemize}
    \item Case 2: $\dc(P_0, P_1, s^*) < \dc(Q_0, Q_1,s^*)$. Similarly to Case 1, we can conclude that $E^* = E(P)$.
    \item Case 3: $\dc(P_0, P_1, s^*) = \dc(Q_0, Q_1,s^*)$. 
    \begin{itemize}
        \item Case 3a: If $\dc(P_0, P_1,s)$ or $\dc(Q_0, Q_1,s)$ attains a global maximum at $s=s^*$, then we immediately have $E^* = E(P)$ or $E^* =E(Q)$ by \eqref{eq:e1}.
        \item Case 3b: If $s^*=0$ and neither $\dc(P_0, P_1,s)$ nor $\dc(Q_0, Q_1,s)$ attains a global maximum at $s=0$, then for all $s^{**}$ sufficiently close to $0$, $\dc(Q_0, Q_1, s^{**}) > \dc(Q_0, Q_1,0)$ and $\dc(P_0, P_1, s^{**})>\dc(Q_0,Q_1,0)$, so that $E_{s^{**}} > E_{0}$, contradicting the assumption that $s^*=0$. Therefore, if $s^*=0$, then $\dc(P_0, P_1,s)$ or $\dc(Q_0, Q_1,s)$ attains a global maximum at $s=0$, so that $E^* = E(P)$ or $E^* =E(Q)$.
        \item Case 3c: Suppose that neither of Cases 3a and 3b apply, and thus $s^* \neq 0$.  If $\dc(P_0, P_1, s)$ and $\dc(Q_0, Q_1, s)$ are both balanced or both skewed, then by Fact \ref{fact:monotone}, there exists some $s^{**}$ (obtained by shifting $s^*$ very slightly in the increasing direction) such that $\dc(P_0,P_1,s^{**})>\dc(P_0,P_1,s^*)$ and $\dc(Q_0,Q_1,s^{**})>\dc(Q_0,Q_1,s^*)$, implying 
        \begin{align}
            E_{s^{**}} \geq \min&(\dc(P_0, P_1, s^{**}, \dc(Q_0, Q_1, s^{**})) \nn \\ > \min&(\dc(P_0, P_1, s^{*})
            , \dc(Q_0, Q_1, s^{*}))  \nn \\ \stackrel{\text{Assump.~\ref{assump:s_star}}}=&E_{s^*}.
        \end{align}
     If $\dc(P_0, P_1,s)$ is neutral, then $\dc(P_0, P_1, s^*)$ is a global maximum by definition, and $E^* = E(P)$. Similarly, if $\dc(Q_0, Q_1, s)$ is neutral, then $E^* = E(Q)$.
    Therefore, we conclude that $\dc(P_0, P_1, s^*)$ and $\dc(Q_0, Q_1, s^*)$ must be of strictly opposite type.
    \end{itemize}
\end{itemize}
\end{proof}

Recall that Lemma \ref{lem:opp_assump} is based on the condition \eqref{eq:perr_opp}.  In the following lemma, we establish a useful scenario under which such a condition provably does \emph{not} hold. 
 This lemma will allow us to translate ``type'' information on $\dc(P_0,P_1, s)$ to that of $\dc(\vec{x}_0, \vec{x}_1, P^n, s)$, and Lemma \ref{lem:opp_q} will similarly relate $\dc(Q_0,Q_1,s)$ to $\dc(\w_1^c(\emptyset), \w_1(\emptyset), Q^n, s)$.

\begin{lemma}
    If $\dc(\vec{x}_0, \vec{x}_1, P^n,s)$ and $\dc(P_0,P_1,s)$ are of weakly opposite type, then
    \begin{equation}
        -\log (p_{e,0} + p_{e,1}) \leq n \dc(P_0, P_1, s^*) + \sqrt {2n} \log \frac{1}{p_{\min}} + \log 4.
        \label{eq:opp_p}
    \end{equation}
    \label{lem:opp_p}
\end{lemma}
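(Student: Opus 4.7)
The plan is to use a data-processing step to reduce to one-hop hypothesis testing on the channel $P$, and then control the resulting maximum-Chernoff quantity via the weakly-opposite-type hypothesis. First, I would observe that $\Theta\to\vec{Y}\to\hat\Theta$ forms a Markov chain (since $\vec{X}$ is deterministic given $\Theta$ and every downstream random variable is a randomized function of $\vec{Y}$), so $p_{e,0}+p_{e,1}$ is lower-bounded by the sum of errors of the optimal $\vec{Y}$-based binary hypothesis test between $\vec{x}_0$ and $\vec{x}_1$. Applying Corollary \ref{cor:2codewords} to the channel $P$ with the codewords $(\vec{x}_0,\vec{x}_1)$ then yields
\[
-\log(p_{e,0}+p_{e,1}) \leq \max_{0\leq s\leq 1}\dc(\vec{x}_0,\vec{x}_1,P^n,s) + \sqrt{2n}\log\frac{1}{p_{\min}} + \log 4,
\]
reducing the problem to establishing $\max_s\dc(\vec{x}_0,\vec{x}_1,P^n,s)\leq n\dc(P_0,P_1,s^*)$.

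Second, I would tensorize via Lemma \ref{lem:iid} to write
\[
g(s) := \dc(\vec{x}_0,\vec{x}_1,P^n,s) = n_{01}f(s) + n_{10}f(1-s),
\]
where $f(s):=\dc(P_0,P_1,s)$, and $n_{01}$, $n_{10}$ count positions where $(\vec{x}_0^{(i)},\vec{x}_1^{(i)})$ equals $(0,1)$ and $(1,0)$ respectively (so $n_{01}+n_{10}\leq n$, with the remaining positions contributing $0$). Under Assumption \ref{assump:s_star}, we have $s^*\leq \tfrac12$ and $f(s^*)\geq f(1-s^*)$; concavity of $f$ together with uniqueness of its argmax then forces $\argmax f\in[0,s^*]$ whenever $f$ is weakly skewed, since a peak at or past $1-s^*$ would contradict $f(s^*)\geq f(1-s^*)$.

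Third, I would split into the two cases matching the weakly-opposite-type hypothesis. If $f$ is weakly skewed and $g$ weakly balanced, then $f$ is non-increasing on $[s^*,1]$, and both $\argmax g$ and $1-\argmax g$ lie in $[s^*,1-s^*]$, so each of $f(\argmax g)$ and $f(1-\argmax g)$ is at most $f(s^*)$; hence $\max g \leq (n_{01}+n_{10})f(s^*) \leq nf(s^*)$. If $f$ is weakly balanced and $g$ weakly skewed, then by Fact \ref{fact:monotone} $f$ is non-decreasing on $[0,s^*]$ and non-increasing on $[1-s^*,1]$; for $\argmax g\in[0,s^*]$ we bound $f(\argmax g)\leq f(s^*)$ and $f(1-\argmax g)\leq f(1-s^*)\leq f(s^*)$, and the case $\argmax g\in[1-s^*,1]$ is symmetric.

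The main obstacle is ensuring that the case analysis closes cleanly at the neutral boundaries (where $\argmax f$ or $\argmax g$ equals $s^*$ or $1-s^*$): here the inequalities $f(s^*)\geq f(1-s^*)$ and $s^*\leq \tfrac12$ from Assumption \ref{assump:s_star} are precisely what is needed to pin down the orientation of $f$ relative to $s^*$ and make each pointwise comparison $f(\cdot)\leq f(s^*)$ valid.
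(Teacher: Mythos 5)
Your proof is correct and takes essentially the same route as the paper's---the data-processing reduction to Corollary~\ref{cor:2codewords} on $P$, the tensorization $\dc(\vec{x}_0,\vec{x}_1,P^n,s)=n_{01}\dc(P_0,P_1,s)+n_{10}\dc(P_0,P_1,1-s)$, and the type-based case split using Fact~\ref{fact:monotone} and Assumption~\ref{assump:s_star} to force $\max_s\dc(\vec{x}_0,\vec{x}_1,P^n,s)\le n\dc(P_0,P_1,s^*)$. One minor wrinkle: your intermediate claim that $\argmax_s\dc(P_0,P_1,s)\le s^*$ whenever this function is weakly skewed can fail at $s^*=\tfrac12$ (e.g.\ $\dc(P_0,P_1,\cdot)$ linear increasing has argmax $1>1-s^*$), but this is harmless since $[s^*,1-s^*]$ degenerates to a single point there and the needed bound $f(s)\le f(s^*)$ for $s\in[s^*,1-s^*]$ still holds trivially.
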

\begin{proof}
    Using data processing inequalities and Corollary \ref{cor:2codewords}, we have
    \begin{align}
        &-\log (p_{e,0} + p_{e,1}) \leq -\log (\hat{p}_{e,0} + \hat{p}_{e,1}) \nn \\ &\leq \max_{0\leq s \leq 1}\dc(\vec{x}_0, \vec{x}_1, P^n, s) + \sqrt {2n} \log \frac{1}{p_{\min}} + \log 4,
        \label{eq:dpi}
    \end{align}
    where $\hat{p}_{e,0}$ and $\hat{p}_{e,1}$ refer to the error probabilities associated with the relay's (hypothetical) best estimate of $\Theta$, which is an clearly easier task than that of the decoder.  It therefore suffices to show that
    \begin{equation}
        \max_{0\leq s \leq 1}\dc(\vec{x}_0, \vec{x}_1, P^n, s) \leq n\dc(P_0, P_1, s^*).
    \end{equation}
    We proceed with three cases:
\begin{itemize}
    \item Case 1: Suppose that $\dc(P_0, P_1, s)$ is neutral. This means that $\dc(P_0, P_1, s)$ attains a global maximum at $s^*$, and hence
    \begin{align}
        \max_{0\leq s \leq 1}\dc(\vec{x}_0, \vec{x}_1, P^n, s) &\stackrel{\text{Lem.}~\ref{lem:iid}}{\leq} n \max_{0\leq s \leq 1} \dc(0, 1, P, s) \nn \\ &= n \dc(P_0, P_1, s^*).
    \end{align}
    \item Case 2: If $\dc(\vec{x}_0, \vec{x}_1, P^n, s)$ is weakly balanced and $\dc(P_0, P_1, s)$ is skewed, then
    \begin{align}
        &\max_{0\leq s \leq 1}\dc(\vec{x}_0, \vec{x}_1, P^n, s) \nn \\
        &~~= \max_{s^*\leq s \leq 1-s^*}\dc(\vec{x}_0, \vec{x}_1, P^n, s) \\ 
        &\stackrel{\text{Lem.}~\ref{lem:iid}}{\leq} \sum_{i=1}^n \max_{s^*\leq s \leq 1-s^*} \dc(\vec{x}_0^{(i)}, \vec{x}_1^{(i)}, P, s).
    \end{align}
On the other hand, since $\dc(P_0, P_1, s)$ is skewed, by Fact \ref{fact:monotone}, we have
    \begin{align}
        &\max_{s^*\leq s \leq 1-s^*} \dc(\vec{x}_0^{(i)}, \vec{x}_1^{(i)}, P, s) \nn \\
        &~~~\,= \max (\dc(0,1,P,s^*), \dc(0,1,P,1-s^*))  \nn \\ 
        &\stackrel{\text{Assump.~\ref{assump:s_star}}}{=} \dc(P_0, P_1, s^*).
    \end{align} 
    Thus, $\max_{0\leq s \leq 1}\dc(\vec{x}_0, \vec{x}_1, P^n, s) \leq n\dc(P_0, P_1, s^*)$, and substitution into \eqref{eq:dpi} gives 
\eqref{eq:opp_p}.
\item Case 3: If $\dc(\vec{x}_0, \vec{x}_1, P^n, s)$ is weakly skewed and $\dc(P_0, P_1, s)$ is balanced, then 
    \begin{align}
        &\max_{0\leq s \leq 1}\dc(\vec{x}_0, \vec{x}_1, P^n, s) \nn \\
        &~~= \max_{s \in [0,s^*] \cup [1-s^*,1]}\dc(\vec{x}_0, \vec{x}_1, P^n, s) \nn \\ &\stackrel{\text{Lem.}~\ref{lem:iid}}{\leq} n \cdot \max_{s \in [0,s^*] \cup [1-s^*,1]} \dc(0, 1, P, s).
    \end{align}
Since $\dc(P_0, P_1, s)$ is balanced, by Fact \ref{fact:monotone}, we have
\begin{align}
    &\max_{t \in [0,s^*] \cup [1-s^*,1]} \dc(0, 1, P, s^*) \nn \\
    &\quad\,\, \leq \max (\dc(0,1,P,s^*), \dc(0,1,P,1-s^*)) \nn \\ &\stackrel{\text{Assump.~\ref{assump:s_star}}}{=} \dc(P_0, P_1, s^*)
\end{align}
so that $\max_{0\leq s \leq 1}\dc(\vec{x}_0, \vec{x}_1, P^n, s) \leq n \dc(P_0, P_1, s^*)$, and \eqref{eq:opp_p} holds.
\end{itemize}
\end{proof}

To complement Lemma \ref{lem:opp_p}, we also derive a similar statement for the channel $Q$.  Recall that $\w_1(\cdot)$ is defined leading up to \eqref{eq:w1_def}, and that $\w_1(\emptyset)$ corresponds to $k=0$ therein. 

\begin{lemma}
Let $\w_1^c(\emptyset)$ be the complement string of $\w_1(\emptyset)$. If $\dc(\w_1^c(\emptyset), \w_1(\emptyset), Q^n, s)$ and $\dc(Q_0, Q_1, s)$ are of weakly opposite type, then
    \begin{equation}
        -\log (p_{e,0} + p_{e,1}) \leq n \dc(Q_0, Q_1, s^*) + \sqrt {2n} \log \frac{1}{p_{\min}} + \log 4.
        \label{eq:opp_q}
    \end{equation}
    \label{lem:opp_q}
\end{lemma}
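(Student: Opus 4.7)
The plan is to mirror the structure of Lemma \ref{lem:opp_p} as closely as possible, but with one extra bookkeeping step to convert a bound involving $\w_0(\emptyset)$ into a bound involving $\w_1^c(\emptyset)$, since it is the latter pair that appears in the ``weakly opposite type'' hypothesis.

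First, I would apply the defining properties \eqref{eq:w1pe1}--\eqref{eq:w0pe0}, which give $\p_{\vec z \sim Q^n(\cdot|\w_0(\emptyset))}(\vec z \in D_1) \leq p_{e,0}$ and $\p_{\vec z \sim Q^n(\cdot|\w_1(\emptyset))}(\vec z \in D_0) \leq p_{e,1}$, together with Corollary \ref{cor:2codewords} for the deterministic pair $(\w_0(\emptyset), \w_1(\emptyset))$ under the decoding partition $(D_0,D_1)$, to conclude that
\[
-\log(p_{e,0}+p_{e,1}) \leq \max_{0\leq s\leq 1} \dc(\w_0(\emptyset), \w_1(\emptyset), Q^n, s) + \sqrt{2n}\log\tfrac{1}{p_{\min}} + \log 4.
\]

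Next, I would replace $\w_0(\emptyset)$ by $\w_1^c(\emptyset)$ on the right-hand side. Per coordinate, each Chernoff divergence $\dc(w, w', Q, s)$ is non-negative (by \eqref{eq:dc_var}, since it is a variance up to sign) and vanishes when $w=w'$, so flipping any agreement position into a disagreement position can only preserve or strictly increase the per-coordinate summand. By Lemma \ref{lem:iid}, $\dc(\w_0(\emptyset), \w_1(\emptyset), Q^n, s) \leq \dc(\w_1^c(\emptyset), \w_1(\emptyset), Q^n, s)$ for every $s$, and the bound above weakens to one involving $\max_s \dc(\w_1^c(\emptyset), \w_1(\emptyset), Q^n, s)$.

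Finally, I would bound this maximum by $n\dc(Q_0, Q_1, s^*)$ by repeating the three-case analysis used in Lemma \ref{lem:opp_p}, noting that (via Lemma \ref{lem:reverse1s} and the definition of $\dc$) each per-letter contribution to $\dc(\w_1^c(\emptyset), \w_1(\emptyset), Q^n, s)$ is either $\dc(0,1,Q,s) = \dc(Q_0, Q_1, s)$ or $\dc(1,0,Q,s) = \dc(Q_0, Q_1, 1-s)$, so both letters' contributions have the same maximum over $s$. The three cases are: (i) $\dc(Q_0, Q_1, s)$ is neutral, in which case tensorization via Lemma \ref{lem:iid} together with $\max_s \dc(Q_0, Q_1, s) = \dc(Q_0, Q_1, s^*)$ (from Assumption \ref{assump:s_star}) finishes the argument directly; (ii) $\dc(\w_1^c(\emptyset), \w_1(\emptyset), Q^n, s)$ is weakly balanced and $\dc(Q_0, Q_1, s)$ is skewed, so the outer maximum restricts to $[s^*, 1-s^*]$, and monotonicity on that interval from Fact \ref{fact:monotone} combined with Assumption \ref{assump:s_star} bounds each per-letter contribution by $\dc(Q_0,Q_1,s^*)$; (iii) $\dc(\w_1^c(\emptyset), \w_1(\emptyset), Q^n, s)$ is weakly skewed and $\dc(Q_0, Q_1, s)$ is balanced, so the maximum restricts to $[0,s^*]\cup[1-s^*,1]$, where Fact \ref{fact:monotone} again gives the same per-letter bound. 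Summing the per-letter bounds and substituting yields \eqref{eq:opp_q}.

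The main obstacle, and the only real departure from the proof of Lemma \ref{lem:opp_p}, is the complement substitution in the middle step: one has to verify that replacing $\w_0(\emptyset)$ by the bit-wise complement of $\w_1(\emptyset)$ still gives a valid upper bound on $-\log(p_{e,0}+p_{e,1})$. This is fine because the substitution only inflates the right-hand side of an already established upper bound, but it is essential to make this direction explicit since the hypothesis of the lemma refers to $\w_1^c(\emptyset)$ rather than the genuine best-case sequence $\w_0(\emptyset)$. Once this is clear, the three-case bookkeeping is a near-verbatim symmetric copy of the argument for $P$.
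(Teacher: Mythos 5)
Your proof is correct and follows essentially the same route as the paper's: you use Corollary \ref{cor:2codewords} with \eqref{eq:w1pe1}--\eqref{eq:w0pe0}, then bound $\dc(\w_0(\emptyset), \w_1(\emptyset), Q^n, s) \leq \dc(\w_1^c(\emptyset), \w_1(\emptyset), Q^n, s)$ by the non-negativity of $\dc$, then reuse the three-case ``type'' argument from Lemma \ref{lem:opp_p}. One small slip: you cite \eqref{eq:dc_var} to justify non-negativity of $\dc$, but that identity gives concavity (it bounds $\dc''$); non-negativity of $\dc(w',w,Q,s)$ instead follows from H\"older's inequality applied to $\sum_x Q_{w'}(x)^{1-s} Q_w(x)^s \le 1$ (combined with $\dc(w,w,Q,s) = 0$), which is the argument the paper implicitly relies on.
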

\begin{proof}
    Combining Corollary \ref{cor:2codewords} with \eqref{eq:w1pe1} and \eqref{eq:w0pe0} (with $k=0$), we have
    \begin{align}
        -\log (p_{e,0} + p_{e,1}) &\leq \max_{0\leq s \leq 1}\dc(\w_0(\emptyset), \w_1(\emptyset), Q^n, s) \nn \\ &\qquad + \sqrt {2n} \log \frac{1}{p_{\min}} + \log 4. \label{eq:opp_q_init}
    \end{align}
    Observe that for all $w,w'$, we have
    \begin{equation}
        \dc(w',w,Q,s) \leq \dc(1-w,w,Q,s) \label{eq:ww}
    \end{equation}
    where $1-w$ is the complement of $w \in \{0,1\}$. This can be checked by considering the cases $w=w'$ (in which the LHS is 0), or $w\neq w'$ (in which $1-w=w'$). 
    Applying \eqref{eq:ww} symbol-by-symbol, taking the sum over $i=1,\dotsc,n$, and recalling the tensorization of $\dc$ (Lemma \ref{lem:iid}), we obtain
    \begin{eqnarray}
        \dc(\w_0(\emptyset), \w_1(\emptyset), Q^n, s) \leq \dc(\w_1^c(\emptyset), \w_1(\emptyset), Q^n, s).
    \end{eqnarray}
    Thus, \eqref{eq:opp_q_init} can be weakened to
    \begin{align}
        -\log (p_{e,0} + p_{e,1}) &\leq \max_{0\leq s \leq 1}\dc(\w_1^c(\emptyset), \w_1(\emptyset), Q^n, s) \nn \\ &\qquad + \sqrt {2n} \log \frac{1}{p_{\min}} + \log 4.
    \end{align}
    It remains to show that
    \begin{equation}
        \max_{0\leq s \leq 1}\dc(\w_1^c(\emptyset), \w_1(\emptyset), Q^n, s)\leq n \dc(Q_0, Q_1, s^*).
        \label{eq:w1ne}
    \end{equation}
    This follows from an identical argument to that of Lemma \ref{lem:opp_p}, with $Q$ in place of $P$.
\end{proof}

We are now ready to prove Lemma \ref{lem:opp_assump}.
\begin{proof}[Proof of Lemma \ref{lem:opp_assump}]
    We proceed with a sequence of observations: 
    \begin{itemize}
    \item If $E^* = E(P)$, then we have
    \begin{align}
        \max_{0\leq s \leq 1} \dc(\vec{x}_0, \vec{x}_1, P^n, s) &\stackrel{\text{Lem.}~\ref{lem:iid}}{\leq} n \max_{0\leq s \leq 1} \dc(P_0, P_1, s) \nn \\ &~~ = n \cdot E(P) = nE^*,
    \end{align}
    and thus \eqref{eq:dpi} (which does not rely on any of the conditions of Lemma \ref{lem:opp_p}) gives
    \begin{equation}
        -\log(p_{e,0} + p_{e,1}) \leq nE^* + \sqrt{2n} \log \frac{1}{p_{\min}} + \log 4.
    \end{equation}
    A similar analysis holds if $E^* = E(Q)$, so we can proceed by assuming $ E^* \neq E(P)$ and $E^* \neq E(Q)$.
    \item Since $ E^* \neq E(P)$ and $E^* \neq E(Q)$, by Lemma \ref{lem:dcp}, $\dc(P_0, P_1, s)$ and $\dc(Q_0, Q_1, s)$ are of strictly opposite type, and we have $E^* = \dc(P_0, P_1, s^*) = \dc(Q_0, Q_1, s^*)$ and $s^* \neq 0$.
    \item Using Lemma \ref{lem:opp_p} and \eqref{eq:perr_opp}, along with the fact that $E^* = \dc(P_0, P_1, s^*)$, we find that 
    $\dc(\vec{x}_0, \vec{x}_1, P^n,s)$ and $\dc(P_0, P_1, s)$ are of the same non-neutral type.
    \item Similarly, using Lemma \ref{lem:opp_q} and \eqref{eq:perr_opp}, along with the fact that $E^* = \dc(Q_0, Q_1, s^*)$, we find that  $\dc(\w_1^c(\emptyset), \w_1(\emptyset), Q^n, s)$ and $\dc(Q_0, Q_1, s)$ are of the same non-neutral type.

    \item The above three points also allow us to conclude that $\dc(\vec{x}_0, \vec{x}_1, P^n, s)$ and $\dc(\w_1^c(\emptyset), \w_1(\emptyset), Q^n, s)$ are of strictly opposite type.
    \item Among $\dc(\vec{x}_0, \vec{x}_1, P^n, s)$ and $\dc(\w_1^c(\emptyset), \w_1(\emptyset), Q^n, s)$, let $g_1(s)$ be the one which is balanced and $g_2(s)$ be the one which is skewed.  Since $g_2$ is skewed, we know that $g_2'(s^*)$ and $g_2'(1-s^*)$ have the same sign, while while $g_1$ being balanced implies that $g_1'(s^*)>0$ and $g_1'(1-s^*)<0$. Therefore, for one choice of $t \in \{s^*, 1-s^*\}$, $g_1'(t)$ and $g_2'(t)$ must have opposite sign, as desired.  Regarding the remaining conditions, the condition $t \notin \{0,1\}$ follows directly from $s^*\neq 0$ and $s^* \le \frac{1}{2}$, and \eqref{eq:maxpq} follows directly from $E^* = \dc(P_0, P_1, s^*) = \dc(Q_0, Q_1, s^*)$.
    \end{itemize}
\end{proof}

\section{Completion of the Proof of Theorem \ref{thm:binary_optimal_exact}}
\label{sec:main}
Obsere that if the condition \eqref{eq:perr_opp} in Lemma \ref{lem:opp_assump} fails, then Theorem \ref{thm:binary_optimal_exact} is immediately true. We thus proceed under the assumptions of Lemma \ref{lem:opp_assump}, which we make explicit as follows.

\begin{assump}
    Letting $\w_1^c(\emptyset)$ denote the bit-wise complement string of $\w_1(\emptyset)$, 
    there exists $t \in \{s^*,1-s^*\}$ (with $s^* \in \big(0,\frac{1}{2}\big]$) such that $\dc'(\vec{x}_0, \vec{x}_1, P^n, t)$ and $\dc'(\w_1^c(\emptyset), \w_1(\emptyset), Q^n, t)$ are of strictly opposite sign and
    \begin{align}
    &\max(\dc(P_0, P_1,t), \dc(P_0, P_1,1-t)) \nn \\ &= \max(\dc(Q_0, Q_1,t), \dc(Q_0, Q_1,1-t)) = E^*.
    \label{eq:dcpe}
    \end{align}
    \label{assump:opp}
\end{assump}

Before proceeding, the analysis in this section is briefly outlined as follows:
\begin{itemize}
    \item In Lemma \ref{lem:l_count}, we use Lemma \ref{lem:transition_bound} to show that at least one of two upper bounds must hold -- one on $-\log p_{e,0}(\yk)$ and one on $-\log p_{e,1}(\yk)$ (see \eqref{eq:pe0_def}--\eqref{eq:pe1_def}) -- expressed in terms of the number of 1s in ``best-case subsequences'' sent by the relay after receiving $\yk$ or $y_{1 \dotsc k+1}$.
    \item In Lemma \ref{lem:perr_likelihood}, we use Lemma \ref{lem:l_count} to identify three conditions such that at least one of them must hold for each $\yn$.  The first two of these conditions are the ones of primary interest, ensuring that certain partial log-likelihood ratios (for $P$, under $\Theta=0$ vs.~$\Theta=1$) associated with the first $k$ symbols (for some $0 \le k \le n$) are sufficiently ``well-behaved''.  The third concerns the sign of $\dc'(\vec{x}_0, \vec{x}_1, P^n, t)$ and the sign of a ``full'' (length-$n$) log-likleihood ratio; this condition is ``less desirable'' for proving a converse, but we will eventually show that its probability is sufficiently small in a suitably-defined sense.
    \item In Lemma \ref{lem:prefix_free}, we use Lemma \ref{lem:perr_likelihood} to identify a prefix-free set $A$ of strings over the output alphabet of $P$ that each satisfy one of the three preceding conditions, and such that the third condition never holds for strings of length less than $n$.
    \item In Lemma \ref{lem:pe01_lb}, we decompose the conditional error probabilities $p_{e,0}$ and $p_{e,1}$ into averages over $A$, with the prefix-free property ensuring that any $\yn$ has at most one prefix in $A$.
    \item Next, we lower bound $p_{e,0} + p_{e,1}$ by only summing over prefixes satisfying the first two of the three conditions mentioned above, and we find that we are left with $e^{-nE^* + o(n)}$ along with a multiplicative term equaling one minus the probability of the third condition holding under a suitably-defined tilted distribution, which we characterize in Lemma \ref{lem:tilted_distr}.
    \item In Lemma \ref{lem:finish}, we show that the probability just mentioned is at most $\frac{1}{2}$, which completes the proof.
\end{itemize}

Proceeding with the above outline, we first state a useful consequence of Lemma \ref{lem:transition_bound}; recall that $\w_0(\cdot)$ and $\w_1(\cdot)$ are defined following \eqref{eq:w_def}, and represent ``best-case'' sequences (given $\Theta=0$ and $\Theta=1$) following the first $k$ symbols received at the relay.  In addition, $p_{e,0}(\yk)$ and $p_{e,1}(\yk)$ are defined in \eqref{eq:pe0_def}--\eqref{eq:pe1_def}.

\begin{lemma}
    Fix a sequence $\yn$ with $P^n(\yn|\vec{x}_0)>0$ and $P^n(\yn|\vec{x}_1)>0$, and define
    \begin{equation}
        \ell_k = \hbox{number of 1s in the last $n-k$ symbols of $\w_1(\yk)$}.
        \label{eq:lk}
    \end{equation}
    For any non-negative integer $k \le n$, and any $\ell'$ (which need not be an integer) between $\ell_{k+1}$ and $\ell_k$, one of the following must hold: 
    \begin{align} 
        -\log p_{e,0}(\yk)
        &\leq \ell'\dt0 +  (n-k-\ell')\dqt1 \nn \\ & \quad + (\sqrt{2n}+1) \log \frac{1}{p_{\min}} + \log 4 \\
        -\log p_{e,1}(\yk)
        &\leq \ell'\dt1 +  (n-k-\ell')\dqt0 \nn \\ & \quad +  (\sqrt{2n}+2) \log \frac{1}{p_{\min}} + \log 4.
    \end{align}
    \label{lem:l_count}
\end{lemma}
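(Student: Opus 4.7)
The plan is to apply Lemma \ref{lem:transition_bound} to the three length-$n$ sequences $\w_0(\yk)$, $\w_1(\yk)$, and $\w_1(y_{1\ldots k+1})$, retaining the decoding regions $D_0, D_1$ from the original problem. The crucial observation is that all three sequences share the same first $k$ symbols, namely $\w(\yk)$: by causality, the $(k{+}1)$-st symbol sent by the relay depends only on $\yk$, so $\w(y_{1\ldots k+1})$ merely extends $\w(\yk)$ by one deterministic symbol, and the best-case extensions $\w_0(\cdot)$ and $\w_1(\cdot)$ only alter coordinates strictly after this common prefix.

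First, I would instantiate Lemma \ref{lem:transition_bound} with its parameter $\ell$ chosen as $a + \ell''$, where $a = n_1(\w(\yk))$ and $\ell''$ lies between $\ell_k$ and $\tilde{\ell}_{k+1}$; here $\tilde{\ell}_{k+1}$ denotes the number of $1$s in positions $k+1,\ldots,n$ of $\w_1(y_{1\ldots k+1})$, so that $|\tilde{\ell}_{k+1}-\ell_{k+1}|\leq 1$. Tracing through the proof of Lemma \ref{lem:transition_bound}, the shared-prefix property forces the counts $n_{01}, n_{10}$ used inside the invocation of Lemma \ref{lem:nbound} to pick up contributions only from positions $k+1,\ldots,n$, and the generic bounds $n_{01}\leq \ell$, $n_{10}\leq n-\ell$ sharpen to $n_{01}\leq \ell''$, $n_{10}\leq n-k-\ell''$. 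Hence at least one of
\begin{align*}
-\log \p_{\vec{z}\sim Q^n(\cdot|\w_0(\yk))}(\vec{z}\in D_1)
&\leq \ell''\dt0 + (n-k-\ell'')\dqt1 \\
&\qquad + \sqrt{2n}\log\frac{1}{p_{\min}} + \log 4, \\
-\log \p_{\vec{z}\sim Q^n(\cdot|\w_1(y_{1\ldots k+1}))}(\vec{z}\in D_0)
&\leq \ell''\dt1 + (n-k-\ell'')\dqt0 \\
&\qquad + \sqrt{2n}\log\frac{1}{p_{\min}} + \log 4
\end{align*}
must hold, and (since Lemma \ref{lem:transition_bound} actually bounds a max over $\w_1$ and $\w_1'$) the second inequality is also valid with $\w_1(\yk)$ in place of $\w_1(y_{1\ldots k+1})$.

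Next, I would pass to bounds on $p_{e,0}(\yk)$ and $p_{e,1}(\yk)$ and replace $\ell''$ by the prescribed $\ell'$. The $p_{e,0}$ side is immediate from \eqref{eq:w0pe0}. For $p_{e,1}$, combining \eqref{eq:w1pe1} with Lemma \ref{lem:pmin_bound} gives $p_{e,1}(y_{1\ldots k+1})\leq p_{e,1}(\yk)/p_{\min}$, paying an additional $\log\frac{1}{p_{\min}}$. Finally, since $|\tilde{\ell}_{k+1}-\ell_{k+1}|\leq 1$, every $\ell'$ between $\ell_k$ and $\ell_{k+1}$ is within distance $1$ of an admissible $\ell''$, and a unit shift in $\ell''$ alters either right-hand side by at most $|\dt0 - \dqt1| = |\dc'(0,1,Q,t)| \leq \log\frac{1}{p_{\min}}$, via Lemma \ref{lem:dc_kl} and the derivative bound in Lemma \ref{lem:d2c}. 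These two corrections account exactly for the $+1$ and $+2$ in the $(\sqrt{2n}+1)$ and $(\sqrt{2n}+2)$ constants.

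The main obstacle is the careful bookkeeping of these two off-by-one corrections: one from the mismatch between the ranges of lengths $n-k$ and $n-k-1$ over which $\ell_k$ and $\ell_{k+1}$ are defined, and the other from translating between the error probabilities conditioned on $\yk$ versus $y_{1\ldots k+1}$ via Lemma \ref{lem:pmin_bound}. Once the shared-prefix structure of the three length-$n$ sequences is exploited, the remainder of the argument is a mechanical application of Lemma \ref{lem:transition_bound}.
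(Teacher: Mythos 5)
Your proposal follows essentially the same route as the paper: apply Lemma \ref{lem:transition_bound} to the triple $\w_0(\yk)$, $\w_1(\yk)$, $\w_1(y_{1\ldots k+1})$, exploit the common first-$k$-symbol prefix, use Lemma \ref{lem:pmin_bound} to absorb the $p_{e,1}(\yk)$ vs.\ $p_{e,1}(y_{1\ldots k+1})$ discrepancy at the cost of one extra $\log\frac{1}{p_{\min}}$, and pay a second $\log\frac{1}{p_{\min}}$ for the off-by-one between the number of 1s in the last $n-k$ symbols of $\w_1(y_{1\ldots k+1})$ and $\ell_{k+1}$. The structure, the citations, and the accounting for the $+1$ and $+2$ constants all match the paper's proof.

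The one slip is the identity $|\dt0 - \dqt1| = |\dc'(0,1,Q,t)|$. Lemma \ref{lem:dc_kl} (equation \eqref{eq:dq10}) gives $\dc'(0,1,Q,t) = \dt1 - \dt0$, which is a different quantity; $\dt0 - \dqt1 = D(Q_t\|Q_0) - D(Q_{1-t}\|Q_1)$ does not reduce to $\dc'(0,1,Q,t)$ in general. Fortunately your conclusion is still correct: since $Q_t$ and $Q_{1-t}$ are supported where both $Q_0,Q_1 > 0$, each of $\dt0$, $\dqt1$, $\dt1$, $\dqt0$ is a KL divergence lying in $\bigl[0, \log\frac{1}{p_{\min}}\bigr]$, so $|\dt0 - \dqt1| \le \log\frac{1}{p_{\min}}$ and $|\dt1 - \dqt0| \le \log\frac{1}{p_{\min}}$ hold directly, which is all the argument needs. (For what it is worth, the paper's own write-up of this step has a matching typographical slip, writing $\dt1$, $\dt0$ in place of $\dqt1$, $\dqt0$ in the displayed shift; your confusion may have originated there.) With this small correction the proof is complete.
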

\begin{proof}
    By definition, the first $k$ symbols of $\w_0(\yk), \w_1(\yk), \w_1(y_{1\ldots k+1})$ are identical. Therefore, any optimal hypothesis test distinguishing these strings will ignore the first $k$ symbols. Note that the number of 1s in the last $n-k$ symbols of  $\w_1(y_{1\ldots k+1})$ can be either $\ell_{k+1}$ or $\ell_{k+1}+1$, (depending on the $(k+1)$-th character).

The idea is to use Lemma \ref{lem:transition_bound} with $\w_0(\yk)$ in place of $\w_0$, $\w_1(\yk)$ in place of $\w_1$, $\w_1(y_{1\ldots k+1})$ in place of $\w'_1$, and $\ell'$ in place of $\ell$. Using \eqref{eq:w1pe1} and \eqref{eq:w0pe0}, the LHS of \eqref{eq:transition_bound1} will be replaced with 
\begin{equation}
    \max(p_{e,1}(\yk), p_{e,1}(y_{1\ldots k+1})),
\end{equation}
but we easily attain from Lemma \ref{lem:pmin_bound} that
\begin{equation}
    \max(p_{e,1}(\yk), p_{e,1}(y_{1\ldots k+1})) \leq p_{e,1}(\yk)/p_{\min}. \label{eq:pe1_min}
\end{equation}
If the number of 1s in the last $n-k$ symbols of  $\w_1(y_{1\ldots k+1})$ is $\ell_{k+1}$, or if $\ell'$ is between $\ell_{k+1}+1$ and $\ell_k$, we can directly apply Lemma \ref{lem:transition_bound} to the last $n-k$ symbols of the strings $\w_0(\yk)$, $\w_1(\yk)$, $\w_1(y_{1\ldots k+1})$ and using $\ell'$ in place of $\ell$. 
However, it is possible that there are actually $\ell_{k+1}+1$ 1s and $\ell'$ is not between $\ell_{k+1}+1$ and $\ell_k$. In this case, $\ell'$ must be between $\ell_k$ and $\ell_{k+1}$, with $0\leq \ell' - \ell_{k+1} < 1$.

Then, we can apply Lemma \ref{lem:transition_bound} with $\ell_{k+1}$ in place of $\ell$ to conclude that either
\begin{align}
    &-\log p_{e,0}(\yk) \nn \\ 
    &~~\leq \ell_{k+1} \dt0 + (n-k-\ell_{k+1})\dt1 \nn \\
        & \qquad +\sqrt{2n}\log\frac{1}{p_{\min}} + \log 4 \nonumber \\
     &~~=\ell' \dt0 + (n-k-\ell')\dt1 +\sqrt{2n}\log\frac{1}{p_{\min}} \nn \\
        & \qquad  + \log 4 + (\ell_{k+1}-\ell')(\dt0-\dt1) \nonumber \\
     &~~\stackrel{\eqref{eq:dq10}, \eqref{eq:dc_d1}}{\leq} \ell' \dt0 + (n-k-\ell')\dt1 \nn \\ &\qquad +(\sqrt{2n}+1)\log\frac{1}{p_{\min}} + \log 4
\end{align}
or (recalling \eqref{eq:pe1_min})
\begin{align}
    & -\log (p_{e,1}(\yk)/p_{\min})  \nn \\ 
    &\leq \ell_{k+1} \dt1 + (n-k-\ell_{k+1})\dt0 \nn \\ &\qquad +\sqrt{2n}\log\frac{1}{p_{\min}} + \log 4 \nonumber \\
     &=\ell' \dt1 + (n-k-\ell')\dt0 +\sqrt{2n}\log\frac{1}{p_{\min}} \nn \\ &\qquad  + \log 4 + (\ell_{k+1}-\ell')(\dt1-\dt0) \nonumber \\
     &\stackrel{\eqref{eq:dq10}, \eqref{eq:dc_d1}}{\leq} \ell' \dt1 + (n-k-\ell')\dt0 \nn \\ &\qquad+(\sqrt{2n}+1)\log\frac{1}{p_{\min}} + \log 4.
\end{align}
Combining these inequalities with $-\log p_{e,1}(\yk) = -\log (p_{e,1}(\yk)/p_{\min}) + \log \frac{1}{p_{\min}}$ completes the proof of Lemma \ref{lem:l_count}.
\end{proof}

Using Lemma \ref{lem:l_count} as a build block, we can reduce our analysis to 3 cases, two of which come down to analyzing log-likelihood ratios, and the third of which states that two $n$-letter quantities have differing signs.  As noted above, the first two cases are the ones of primary interest, and the third is ``less desirable'' for proving a converse, but will eventually be shown to have sufficiently low probability in a sense to be defined later.

\begin{lemma}
    For any $\yn$ satisfying $P^n(\yn|\vec{x}_0)>0$ and $P^n(\yn|\vec{x}_1)>0$, at least one of the following three statements is true:
    \begin{align}
        &-\log p_{e,0}(\yk) - t \sum_{i=1}^k \log \frac{P(y_i|\vec{x}_0^{(i)})}{P(y_i|\vec{x}_1^{(i)})} \nn \\ &\qquad \leq E^* \cdot (n-k)  +(\sqrt{2n}+3)\log \frac{1}{p_{\min}} + \log 4  \nn \\ &\hspace*{4.5cm} \hbox{ for some $0\leq k \leq n$}
        \label{eq:typical_bound0}
    \end{align}
    \begin{align}
        &-\log p_{e,1}(\yk) + (1-t) \sum_{i=1}^k \log \frac{P(y_i|\vec{x}_0^{(i)})}{P(y_i|\vec{x}_1^{(i)})} \nn \\
        &\qquad \leq E^* \cdot (n-k)  +(\sqrt{2n}+4)\log \frac{1}{p_{\min}} + \log 4 \nn \\ &\hspace*{4.5cm} \hbox{ for some $0\leq k \leq n$}
        \label{eq:typical_bound1}
    \end{align}
    \begin{equation}
        \left(\sum_{i=1}^n \log \frac{P(y_i|\vec{x}_0^{(i)})}{P(y_i|\vec{x}_1^{(i)})}\right) \dc'(\vec{x}_0, \vec{x}_1, P^n, t)< 0.
        \label{eq:p0_more}
    \end{equation}
    \label{lem:perr_likelihood}
\end{lemma}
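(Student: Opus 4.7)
The plan is to suppose that \eqref{eq:p0_more} fails and then to invoke Lemma \ref{lem:l_count} at a carefully selected pair $(k, \ell')$. Writing $a = \dc(Q_0, Q_1, t)$, $b = \dc(Q_0, Q_1, 1-t)$, $a' = \dc'(Q_0, Q_1, t)$, $b' = \dc'(Q_0, Q_1, 1-t)$, and $L(k) = \sum_{i=1}^k \log \frac{P(y_i|\vec{x}_0^{(i)})}{P(y_i|\vec{x}_1^{(i)})}$, Lemma \ref{lem:dc_kl} gives
\[
\ell' \dt0 + (n-k-\ell')\dqt1 = \bigl[\ell' a + (n-k-\ell') b\bigr] + t \bigl[-\ell' a' + (n-k-\ell') b'\bigr],
\]
and an analogous identity holds for $\ell' \dt1 + (n-k-\ell')\dqt0$ with $-(1-t)$ in place of $t$ on the second bracket. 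Since $a, b \le E^*$ by \eqref{eq:dcpe}, the first bracket is at most $(n-k)E^*$ in both identities. Introducing the ``discrepancy'' $F(k, \ell') := -\ell' a' + (n-k-\ell') b' - L(k)$, a short rearrangement shows that the $p_{e,0}$-alternative of Lemma \ref{lem:l_count} delivers exactly \eqref{eq:typical_bound0} whenever $t F(k, \ell') \le 2\log(1/p_{\min})$, and the $p_{e,1}$-alternative delivers exactly \eqref{eq:typical_bound1} whenever $-(1-t) F(k, \ell') \le 2\log(1/p_{\min})$; both are implied by $|F(k, \ell')| \le 2\log(1/p_{\min})$. Thus the task reduces to locating a valid $(k, \ell')$ at which $|F|$ is that small.

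Next I would compute the two extremes. By Lemma \ref{lem:iid} and Lemma \ref{lem:reverse1s} (splitting $\w_1(\emptyset)$ into its $\ell_0$ ones and $n-\ell_0$ zeros), $F(0, \ell_0) = -\dc'(\w_1^c(\emptyset), \w_1(\emptyset), Q^n, t)$, while $F(n, 0) = -L(n)$, where I treat $k=n$ as an auxiliary waypoint only, not as an application point of Lemma \ref{lem:l_count}. Assumption \ref{assump:opp} forces $\dc'(\w_1^c(\emptyset), \w_1(\emptyset), Q^n, t)$ to have sign strictly opposite to $\dc'(\vec{x}_0, \vec{x}_1, P^n, t)$, while the failure of \eqref{eq:p0_more} forces $L(n)$ to have the same sign as $\dc'(\vec{x}_0, \vec{x}_1, P^n, t)$ (the degenerate subcase $L(n)=0$ is handled directly by taking $(k, \ell') = (n-1, 0)$). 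Consequently $F(0, \ell_0)$ and $F(n, 0)$ have strictly opposite signs.

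To locate a valid $(k, \ell')$ with $|F|$ small, I will traverse the ``zig-zag'' sequence
\[
F(0, \ell_0),\, F(0, \ell_1),\, F(1, \ell_1),\, F(1, \ell_2),\, \dotsc,\, F(n-1, \ell_{n-1}),\, F(n-1, 0),\, F(n, 0),
\]
noting that within each ``fixed-$k$'' segment $F(k, \ell_k) \to F(k, \ell_{k+1})$ the function $F(k, \cdot)$ is linear in $\ell'$ and every intermediate value is admissible in Lemma \ref{lem:l_count}, while each ``fixed-$\ell'$'' step $F(k, \ell_{k+1}) \to F(k+1, \ell_{k+1})$ (and similarly the final step $F(n-1,0) \to F(n,0)$) has jump $-b' - [L(k+1)-L(k)]$, of absolute value at most $2\log(1/p_{\min})$ by Lemma \ref{lem:d2c} together with the hypothesis $P^n(\yn|\vec{x}_0), P^n(\yn|\vec{x}_1) > 0$. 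Since the first and last waypoints have opposite signs, a sign change must occur somewhere along the sequence: if it lies within a fixed-$k$ segment, the intermediate value theorem produces an $\ell'$ with $F(k, \ell') = 0$; if it lies across a bounded jump, then at least one of the two adjacent waypoints has $|F| \le \log(1/p_{\min})$ (since two opposite-sign values separated by at most $2\log(1/p_{\min})$ cannot both exceed that bound in magnitude). Either way I obtain a valid $(k, \ell')$ with $|F(k, \ell')| \le 2\log(1/p_{\min})$, and by the first paragraph this yields one of \eqref{eq:typical_bound0} or \eqref{eq:typical_bound1}.

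The hard part will be the zig-zag bookkeeping itself: since $\ell_k$ can vary wildly from one $k$ to the next, the ``raw'' sequence $F(0, \ell_0), F(1, \ell_1), \dotsc$ need not admit small jumps, so I must split each step into a continuous $\ell'$-sweep followed by a discrete $k$-jump of $p_{\min}$-controlled size; only with this splitting do Lemma \ref{lem:l_count}'s admissibility constraint on $\ell'$ and the intermediate-value argument combine correctly.
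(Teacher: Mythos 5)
Your proposal is correct and is essentially the paper's proof in different notation: your single quantity $F(k,\ell')$ coincides exactly with the paper's $f_0(k,\ell') - f_1(k,\ell')$; your endpoint identities $F(0,\ell_0) = -\dc'(\w_1^c(\emptyset), \w_1(\emptyset), Q^n, t)$ and $F(n,0) = -L(n)$ reproduce the paper's \eqref{eq:f1_minus_f0} and the $k=n$ observation; and your splitting of each $k \to k+1$ transition into a linear $\ell'$-sweep at fixed $k$ followed by a $p_{\min}$-controlled $k$-jump at fixed $\ell'$ is precisely the dichotomy the paper runs around \eqref{eq:l_intermediate}. The algebraic reduction in your first paragraph (using $a,b \le E^*$ from \eqref{eq:dcpe} and $0 \le \ell' \le n-k$) matches the paper's use of \eqref{eq:E_ub1}--\eqref{eq:ne_bound}.

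One small imprecision worth tightening: in the ``bounded jump'' case you conclude that at least one adjacent waypoint has $|F| \le \log(1/p_{\min})$ and then assert that you thereby ``obtain a valid $(k,\ell')$.'' But if the small waypoint is $(n,0)$ --- which you correctly flagged as not an application point of Lemma \ref{lem:l_count} --- the argument as stated does not hand you a valid pair. The remedy is immediate and is what the paper implicitly does: when two values have strictly opposite signs and differ in absolute value by at most $2\log(1/p_{\min})$, \emph{both} have magnitude at most $2\log(1/p_{\min})$; so you may always take the earlier waypoint $(k,\ell_{k+1})$ (and in particular $(n-1,0)$ when the sign change occurs at the final step), which is a legitimate application point.
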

\begin{proof}
    In this proof, $\yn$ is fixed, and other variables may implicitly depend on $\yn$.  
    
    Recall the definition of $\ell_k$ in \eqref{eq:lk}.  For all $0\leq k\leq n$, consider the functions
    \begin{multline}
        f_0(k) = -t \sum_{i=1}^k \log \frac{P(y_i|\vec{x}_0^{(i)})}{P(y_i|\vec{x}_1^{(i)})} \\ + \ell_{k}\dt0 + (n-k-\ell_{k})\dqt1
        \label{eq:def_f0}
    \end{multline}
    and
    \begin{multline}
        f_1(k) = (1-t) \sum_{i=1}^k \log \frac{P(y_i|\vec{x}_0^{(i)})}{P(y_i|\vec{x}_1^{(i)})} \\ + \ell_{k}\dt1 + (n-k-\ell_{k})\dqt0.
        \label{eq:def_f1}
    \end{multline}
    In addition, recalling that $\w_1^c(\emptyset)$ denotes the element-wise complement of $\w_1(\emptyset)$, we observe that
    \begin{align}
        &\dc'(\w_1^c(\emptyset), \w_1(\emptyset), Q^n, t) \nn \\ &\stackrel{\text{Lem.}~\ref{lem:iid}}{=} \sum_{i=1}^n \dc'(\w_1^{c(i)}(\emptyset), \w_1^{(i)}(\emptyset), Q^n, t)\\
        &~\,= \sum_{\w_1^{(i)}=1} \dc'(0,1,Q^n,t) + \sum_{\w_1^{(i)}=0} \dc'(1,0,Q^n,t)\\
        &~\,\stackrel{\eqref{eq:reverse1s}}{=} \sum_{\w_1^{(i)}=1} \dc'(0,1,Q^n,t) - \sum_{\w_1^{(i)}=0} \dc'(0,1,Q^n,1-t)\\
        &~\stackrel{\eqref{eq:lk}}{=} \ell_0 \dc'(0,1,Q^n,t) - (n-\ell_0)\dc'(0,1,Q^n,1-t)\\
        &~\,\stackrel{\eqref{eq:dq10}}{=} \ell_0 (\dt1 - \dt0) \nn \\
            &\qquad - (n-\ell_0)(\dqt1 - \dqt0)\\
        &~\,= f_1(0) - f_0(0). \label{eq:f1_minus_f0}
    \end{align}
    Now, by Assumption \ref{assump:opp}, $\dc'(\w_1^c(\emptyset), \w_1(\emptyset), Q^n, t)$ and $\dc'(\vec{x}_0, \vec{x}_1, P^n, t)$ have strictly opposite sign. Therefore, $f_1(0)-f_0(0)$ has strictly opposite sign with $\dc'(\vec{x}_0, \vec{x}_1, P^n, t)$.
    Moreover, substituting $k=n$ and $\ell_n=0$ in \eqref{eq:def_f0}--\eqref{eq:def_f1} gives
    \begin{equation}
        f_1(n)-f_0(n) = \sum_{i=1}^n \log \frac{P(y_i|\vec{x}_0^{(i)})}{P(y_i|\vec{x}_1^{(i)})}.
    \end{equation}
    If $f_1(0)-f_0(0)$ and $f_1(n)-f_0(n)$ are both positive or both negative, then by Assumption \ref{assump:opp} and \eqref{eq:f1_minus_f0}, $\dc'(\vec{x}_0, \vec{x}_1, P^n, t)$ and $f_1(n)-f_0(n)$ have opposite sign and thus \eqref{eq:p0_more} holds.
    
    Otherwise, there exists some integer $k$ such that $f_0(k)-f_1(k)$ and $f_0(k+1)-f_1(k+1)$ have opposite sign. 
    For this $k$ value, we define
    \begin{multline}
        f_0(k,\ell') = -t \sum_{i=1}^k \log \frac{P(y_i|\vec{x}_0^{(i)})}{P(y_i|\vec{x}_1^{(i)})}  + \ell'\dt0  \\ + (n-k-\ell')\dqt1 \label{eq:f0_def}
    \end{multline}
    and
    \begin{multline}
        f_1(k,\ell') = (1-t) \sum_{i=1}^k \log \frac{P(y_i|\vec{x}_0^{(i)})}{P(y_i|\vec{x}_1^{(i)})}  + \ell'\dt1 \\ + (n-k-\ell')\dqt0, \label{eq:f1_def}
    \end{multline}
    noting that $f_0(k, \ell_k) = f_0(k)$ and $f_1(k,\ell_k) = f_1(k)$.

    We claim that there exists $\ell'$ between $\ell_k$ and $\ell_{k+1}$ (possibly including the endpoints) such that
    \begin{equation}
        |f_0(k, \ell') - f_1(k, \ell')| \leq 2 \log \frac{1}{p_{\min}}
        \label{eq:l_intermediate}
    \end{equation}
    To see this, observe that when $\ell'=\ell_k$, 
    \begin{equation}
        f_0(k, \ell') - f_1(k, \ell') = f_0(k) - f_1(k)
    \end{equation}
    and when $\ell' = \ell_{k+1}$,
    \begin{multline}
        f_0(k, \ell') - f_1(k, \ell') = f_0(k+1) - f_1(k+1) -  \log \frac{P(y_{k+1}|\vec{x}_0^{(i)})} {P(y_{k+1}|\vec{x}_1^{(i)})} \\ \hspace*{-5ex} + \dqt1 - \dqt0
        \label{eq:small_diff}
    \end{multline}
    If $f_0(k,\ell_{k})-f_1(k,\ell_{k})$ and $f_0(k,\ell_{k+1}) - f_1(k, \ell_{k+1})$ are of opposite sign, then by continuity, we can find some $\ell'$ such that $f_0(k,\ell')-f_1(k,\ell')=0$ and $\ell'$ between $\ell_{k}, \ell_{k+1}$ so that \eqref{eq:l_intermediate} is satisfied. 
    Otherwise, because $f_0(k)-f_1(k)$ and $f_0(k+1)-f_1(k+1)$ have opposite sign, $f_0(k,\ell_{k+1})-f_1(k,\ell_{k+1})$ and $f_0(k+1)-f_1(k+1)$ also have opposite sign. By \eqref{eq:small_diff}, they differ in absolute value by at most
    \begin{align}
        &\left|\log \frac{P(y_{k+1}|\vec{x}_0^{(i)})} {P(y_{k+1}|\vec{x}_1^{(i)})}\right| + \big|\dqt1 - \dqt0\big| \nn \\ &~~~ \stackrel{\eqref{eq:dq10}}{\leq} \log \frac{1}{p_{\min}} + |\dc'(Q_0, Q_1, t)| \stackrel{\eqref{eq:dc_d1}}{\leq} 2 \log \frac{1}{p_{\min}}.
    \end{align}
    Combining the fact of differing signs with this upper bound, we obtain
    \begin{equation}
        |f_0(k,\ell_{k+1})-f_1(k,\ell_{k+1})| \leq 2 \log \frac{1}{p_{\min}}
    \end{equation}
    and again \eqref{eq:l_intermediate} holds.

    We note from Lemma \ref{lem:dc_kl} and Assumption \ref{assump:opp} that
    \begin{equation}
        (1-t)\dt0 + t\dt1 = \dc(0,1,Q,t) \leq E^*, \label{eq:E_ub1}
    \end{equation}
    and similarly
    \begin{equation}
        (1-t)\dqt1 + t\dqt0 = \dc(0,1,Q,1-t) \leq E^*. \label{eq:E_ub2}
    \end{equation}
    We now add $(1-t) \times \eqref{eq:f0_def}$ to $t \times  \eqref{eq:f1_def}$ and substitute \eqref{eq:E_ub1}--\eqref{eq:E_ub2} to obtain
    \begin{equation}
        (1-t)f_0(k,\ell') + tf_1(k,\ell') \leq (n-k)E^*,
        \label{eq:ne_bound}
    \end{equation}
    and combining this with \eqref{eq:l_intermediate}, we can conclude that
    \begin{align}
        & f_0(k,\ell') \nn \\ &= (1-t)f_0(k,\ell') + tf_1(k,\ell') + t(f_0(k,\ell') - f_1(k, \ell')) \nn \\ &\leq (n-k)E^* + 2\log \frac{1}{p_{\min}}.
        \label{eq:f0_bound}
    \end{align}
    By a similar argument, we also have
    \begin{equation}
        f_1(k,\ell') \leq (n-k)E^* + 2\log \frac{1}{p_{\min}}.
        \label{eq:f1_bound}
    \end{equation}
    
    Next, since $\ell'$ is between $\ell_k$ and $\ell_{k+1}$, we may apply Lemma \ref{lem:l_count} to conclude that at least one of the following must hold:
    \begin{align}
        -\log p_{e,0}(\yk) &\leq  \ell'\dt0 + (n-k-\ell') \dqt1 \nn \\ &\quad +(\sqrt{2n}+1)\log \frac{1}{p_{\min}} + \log 4
        \label{eq:alt_l0} \\ 
        -\log p_{e,1}(\yk) &\leq   \ell'\dt1 + (n-k-\ell') \dqt0 \nn \\ &\quad +(\sqrt{2n}+2)\log \frac{1}{p_{\min}} + \log 4.
        \label{eq:alt_l1}
    \end{align}
    In the case that \eqref{eq:alt_l0} holds, we can combine it with \eqref{eq:f0_bound} to obtain
    \begin{align}
       & -\log p_{e,0}(\yk) -t \sum_{i=1}^k \log \frac{P(y_i|\vec{x}_0^{(i)})}{P(y_i|\vec{x}_1^{(i)})}\\
        &\stackrel{\eqref{eq:alt_l0}}{\leq} \ell'\dt0 + (n-k-\ell') \dqt1 \nn \\
            &\qquad +(\sqrt{2n}+1)\log \frac{1}{p_{\min}} + \log 4 -t \sum_{i=1}^k \log \frac{P(y_i|\vec{x}_0^{(i)})}{P(y_i|\vec{x}_1^{(i)})}\\
        & \stackrel{\eqref{eq:def_f0}}{=} f_0(k,\ell') +(\sqrt{2n}+1)\log \frac{1}{p_{\min}} + \log 4\\
        &\stackrel{\eqref{eq:f0_bound}}{\leq} (n-k)E^* +(\sqrt{2n}+3)\log \frac{1}{p_{\min}} + \log 4.
    \end{align}
    Similarly, when the condition \eqref{eq:alt_l1} holds, we can combine it with \eqref{eq:f1_bound} to obtain
    \begin{align}
       & -\log p_{e,1}(\yk) +(1-t) \sum_{i=1}^k \log \frac{P(y_i|\vec{x}_0^{(i)})}{P(y_i|\vec{x}_1^{(i)})}\\
        &\stackrel{\eqref{eq:alt_l1}}{\leq}  \ell'\dt1 + (n-k-\ell') \dqt0 + \log 4 \nn \\ 
            &\qquad + (\sqrt{2n}+2)\log \frac{1}{p_{\min}} +(1-t) \sum_{i=1}^k \log \frac{P(y_i|\vec{x}_0^{(i)})}{P(y_i|\vec{x}_1^{(i)})}\\
        & \stackrel{\eqref{eq:def_f1}}{=} f_1(k,\ell') + (\sqrt{2n}+2)\log \frac{1}{p_{\min}} + \log 4\\
        &\stackrel{\eqref{eq:f1_bound}}{\leq} (n-k)E^* +(\sqrt{2n}+4)\log \frac{1}{p_{\min}} + \log 4.
    \end{align}
    This completes the proof. 
\end{proof}

Next, using the fact that at least one of \eqref{eq:typical_bound0}--\eqref{eq:p0_more} holds for all $\yn$ (subject to the positive probability constraints), we able to identify a prefix-free subset of $\calY^n$ associated with these three conditions.  Notably, the ``less desirable'' condition \eqref{eq:p0_more} is restricted to the case that $k=n$.

\begin{lemma}
    There exists a set of strings $A$ over the output alphabet of $P$ (these strings can be of different lengths), such that
    \begin{itemize}
        \item For each $\yk \in A$ with $k<n$, at least one of \eqref{eq:typical_bound0} or \eqref{eq:typical_bound1} holds.
        \item For each $\yk \in A$ with $k=n$, at least one of \eqref{eq:typical_bound0}, \eqref{eq:typical_bound1}, or \eqref{eq:p0_more} holds
        \item For any $\yn$ with $P(\yn|\Theta=0)>0$ and $P(\yn|\Theta=1)>0$, there exists some $k$ such that the sub-string $\yk$ lies in $A$.
        \item $A$ is prefix-free (i.e. for any $\yk \in A$ and all $ k'<k$, $(y_1,\ldots, y_{k'}) \notin A$).
    \end{itemize}
    \label{lem:prefix_free}
\end{lemma}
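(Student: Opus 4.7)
The plan is to construct $A$ as a ``stopping set'': for each valid $\yn$, I include in $A$ the earliest prefix at which one of the first two conditions of Lemma \ref{lem:perr_likelihood} holds, with a fallback for strings where only the third condition ever applies. The crucial observation enabling this is that \eqref{eq:typical_bound0} and \eqref{eq:typical_bound1}, when read at a specific value of $k$, depend only on the prefix $\yk$ (through $p_{e,0}(\yk)$ or $p_{e,1}(\yk)$ and the partial log-likelihood sum $\sum_{i=1}^{k} \log \frac{P(y_i|\vec{x}_0^{(i)})}{P(y_i|\vec{x}_1^{(i)})}$), not on the remaining symbols $y_{k+1},\ldots,y_n$. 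I will therefore call a prefix $\yk$ \emph{good} if at least one of \eqref{eq:typical_bound0} or \eqref{eq:typical_bound1} holds at that particular $k$; this is a property of $\yk$ alone.

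With this notion in hand, I would let $A$ consist of two types of strings. Type (i) is every \emph{critical} prefix, meaning a good $\yk$ such that no strictly shorter prefix $y_{1\ldots k'}$ with $k' < k$ is good. Type (ii) is every length-$n$ string $\yn$ satisfying $P^n(\yn|\vec{x}_0) > 0$ and $P^n(\yn|\vec{x}_1) > 0$ for which no prefix (including $\yn$ itself) is good. Intuitively, type (i) elements correspond to the ``typical'' case where the partial-likelihood converse succeeds at some stopping time, while type (ii) elements capture the ``atypical'' $\yn$'s for which we must fall back on the sign condition \eqref{eq:p0_more}.

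Verification of the four required properties is then routine. Prefix-freeness is immediate: any critical $\yk$ precludes every strict extension from being critical (the extension already has a good prefix, namely $\yk$) and from being of type (ii) (since it also has a good prefix); type (ii) elements have length $n$ and cannot be extended. Coverage of any valid $\yn$ follows by cases from Lemma \ref{lem:perr_likelihood}: if some prefix of $\yn$ is good, the shortest such prefix is critical and lies in $A$; otherwise neither \eqref{eq:typical_bound0} nor \eqref{eq:typical_bound1} holds at any $k$, so the lemma forces \eqref{eq:p0_more}, and $\yn$ itself enters $A$ as a type (ii) element. The first required property (for $\yk \in A$ with $k < n$) is automatic because any such element must be of type (i), hence good. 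The second property (for $\yk \in A$ with $k = n$) splits into the same two cases: a critical length-$n$ string is good, while a type (ii) length-$n$ string satisfies \eqref{eq:p0_more} by the existential argument above.

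The main obstacle is conceptual rather than computational: one has to recognize that the ``there exists $k$'' quantifiers in Lemma \ref{lem:perr_likelihood} decompose into single-prefix tests, so the shortest successful stopping time is well-defined per $\yn$. Once that is in place, the construction is a standard stopping-rule argument with essentially no calculation to perform.
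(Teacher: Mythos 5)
Your proposal is correct and takes essentially the same approach as the paper: both construct $A$ as a stopping set by taking, for each valid $\yn$, the shortest prefix at which \eqref{eq:typical_bound0} or \eqref{eq:typical_bound1} holds (falling back to $\yn$ itself when neither ever holds, at which point Lemma \ref{lem:perr_likelihood} forces \eqref{eq:p0_more}). Your write-up is a bit more explicit than the paper's in pointing out that the two inequality conditions are functions of the prefix alone, which is what makes ``shortest good prefix'' a well-defined, $\yn$-independent notion and streamlines the prefix-freeness argument; the paper states this construction and leaves the verification as ``immediate''.
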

\begin{proof}
    Define $A$ as follows:
    
    For each $\yn$ such that $P(\yn|\Theta=0)>0$ and $P(\yn|\Theta=1)>0$:
    \begin{itemize}
        \item If there is some $k$ such that either \eqref{eq:typical_bound0} or \eqref{eq:typical_bound1} holds, we pick the smallest such $k$ and add $\yk$ to $A$
        \item If \eqref{eq:typical_bound0} and \eqref{eq:typical_bound1} are false for all $k$, in which \eqref{eq:p0_more} must be true by Lemma \ref{lem:perr_likelihood}, and we add $\yn$ to $A$.
    \end{itemize}
    If $\yk \in A$ and some prefix $(y_1, y_2,\ldots, y_{k'})$ were also in $A$ for some $k'<k$, then we would not have included $\yk$ as it requires $k$ to be the minimal such value. This shows that $A$ is prefix-free. The remaining properties are immediate by construction.
\end{proof}

For the remainder of this section, we let $A$ be a set satisfying the conditions of Lemma \ref{lem:prefix_free}.  The following error probability lower bounds readily follow from the prefix-free property of $A$.

\begin{lemma} \label{lem:pe01_lb}
The overall conditional error probabilities (given $\Theta=0$ or $\Theta=1$) satisfy
    \begin{equation}
        p_{e,0} \geq \sum_{\vec{y} \in A} P(\vec{y}|\Theta=0)p_{e,0}(\vec{y}) \label{eq:pe0_lb}
    \end{equation}
    and
    \begin{equation}
        p_{e,1} \geq \sum_{\vec{y} \in A} P(\vec{y}|\Theta=1)p_{e,1}(\vec{y}),
    \end{equation}
    where $P(\vec{y}|\Theta=0)$ refers to the probability that the first $|\vec{y}|$ symbols observed by the relay match $\vec{y}$, and $p_{e,0}(\vec{y})$ and $p_{e,1}(\vec{y})$ are defined in \eqref{eq:pe0_def}--\eqref{eq:pe1_def}.
\end{lemma}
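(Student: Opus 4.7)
The plan is to exploit the prefix-free property of $A$ to decompose each conditional error probability as a disjoint partial sum indexed by prefixes in $A$. First, I would expand
\begin{equation*}
p_{e,0} = \sum_{\yn \in \calY^n} P(\yn \mid \Theta = 0) \, \p(\hat{\Theta} = 1 \mid \yn, \Theta = 0),
\end{equation*}
and an analogous identity for $p_{e,1}$, where $P(\yn \mid \Theta = 0)$ is the probability that the relay receives exactly $\yn$ in the first $n$ time steps given $\Theta = 0$.

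Next, for each $\vec{y} \in A$ of length $k$, the definition \eqref{eq:pe0_def} of $p_{e,0}(\vec{y})$ together with the chain rule yields
\begin{equation*}
P(\vec{y} \mid \Theta = 0) \, p_{e,0}(\vec{y}) = \sum_{y_{k+1},\dotsc,y_n} P(\yn \mid \Theta = 0) \, \p(\hat{\Theta} = 1 \mid \yn, \Theta = 0),
\end{equation*}
since the left-hand side is exactly the joint probability that the first $k$ received symbols at the relay agree with $\vec{y}$ and $\hat{\Theta} = 1$, given $\Theta = 0$.

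Now I would sum this identity over $\vec{y} \in A$. The crucial point is that because $A$ is prefix-free (last bullet of Lemma \ref{lem:prefix_free}), the sets $\{\yn : \vec{y} \text{ is a prefix of } \yn\}$ for $\vec{y} \in A$ are pairwise disjoint. Hence the summation collects a subset of the non-negative terms in the expansion of $p_{e,0}$, which immediately gives \eqref{eq:pe0_lb}. The bound on $p_{e,1}$ follows by the same argument with $\Theta = 0$ replaced by $\Theta = 1$ and \eqref{eq:pe0_def} replaced by \eqref{eq:pe1_def}.

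This step is a straightforward bookkeeping reduction and I do not anticipate any real obstacle; the only ``missing'' terms correspond to $\yn$ having no prefix in $A$, which by Lemma \ref{lem:prefix_free} can only happen when $P(\yn \mid \Theta = 0) = 0$ or $P(\yn \mid \Theta = 1) = 0$, and discarding such terms only weakens one of the two inequalities. The substantive work remains in the downstream combination of this lemma with Lemma \ref{lem:perr_likelihood} and the tilted-distribution argument that bounds the probability of the undesirable case \eqref{eq:p0_more}.
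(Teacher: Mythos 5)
Your proof is correct and follows essentially the same approach as the paper's: both expand the conditional error probability as a sum over full-length strings $\yn$, identify $P(\vec{y}\mid\Theta=0)\,p_{e,0}(\vec{y})$ as the joint probability that $\vec{y}$ is a prefix and $\hat\Theta=1$, and invoke the prefix-free property of $A$ to ensure these events are pairwise disjoint so the sum over $A$ is a lower bound.
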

\begin{proof}
    We focus on $\Theta = 0$; the argument for $\Theta = 1$ is identical. Observe that for all $\vec{y} \in A$, 
    \begin{align}
         &P(\vec{y}|\Theta=0)p_{e,0}(\vec{y}) \nn \\ &= \p(\hat{\Theta}=1 \hbox{ and } \vec{y} \hbox{ appears as a prefix of }\yn|\Theta=0).
    \end{align}
For each $\vec{y}\in A$, consider the event given by
    \begin{equation}
        \vec{y} \hbox{ appears as a prefix of $\yn$}. \label{eq:prefix_events}
    \end{equation}
    By the above properties of $A$, each $\yn$ has at most one prefix in $A$ (note that if $P(\yn|\Theta=0)=0$ or $P(\yn|\Theta=1)=0$, then $\yn$ may not have a prefix in $A$), so among the sub-strings $\{\yk\}_{k=1}^n$ associated with a given $\yn$, at most one of them can satisfy \eqref{eq:prefix_events}.  Using these findings and the law of total probability, we obtain the desired result \eqref{eq:pe0_lb}.
\end{proof}

In the following, we partition $A$ into disjoint subsets $A_0, A_1, A_2$:
\begin{itemize}
    \item $A_0$ consists of the strings such that \eqref{eq:typical_bound0} holds;
    \item $A_1$  consists of the strings such that \eqref{eq:typical_bound0} is false and \eqref{eq:typical_bound1} holds;
    \item $A_2$ consists of the strings of length $n$ such that \eqref{eq:typical_bound0} and \eqref{eq:typical_bound1} are false, but \eqref{eq:p0_more} holds.
\end{itemize}
For each $\yk \in A_0$, \eqref{eq:typical_bound0} gives us a way to bound $-\log p_{e,0}(\yk)$ directly:
\begin{align}
    & -\log \big(P(\yk|\Theta=0)p_{e,0}(\yk)\big) \nn \\ &=
    \left(-\sum_{i=1}^k \log P(y_i|\vec{x}_0^{(i)})\right) - \log p_{e,0}(\yk)\\
    &= -\log p_{e,0}(\yk) - t \sum_{i=1}^k \log \frac{P(y_i|\vec{x}_0^{(i)})}{P(y_i|\vec{x}_1^{(i)})} \nn \\
        &~~~ - (1-t) \sum_{i=1}^k \log {P(y_i|\vec{x}_0^{(i)})} - t \sum_{i=1}^k \log {P(y_i|\vec{x}_1^{(i)})}\\
    &\stackrel{\eqref{eq:typical_bound0}}{\leq} (n-k)E^* - (1-t) \sum_{i=1}^k \log {P(y_i|\vec{x}_0^{(i)})} \nn \\ &~~~ - t \sum_{i=1}^k \log {P(y_i|\vec{x}_1^{(i)})}  +(\sqrt{2n}+3)\log \frac{1}{p_{\min}} + \log 4. \label{eq:logp0_lb}
\end{align}
Similarly, for each $(\yk) \in A_1$, we can use \eqref{eq:typical_bound1} as follows:
\begin{align}
    & -\log \big(p(\yk|\Theta=1)p_{e,1}(\yk)\big) \nn \\ &=
    \left(-\sum_{i=1}^k \log P(y_i|\vec{x}_1^{(i)})\right) - \log p_{e,1}(\yk)\\
    &= -\log p_{e,1}(\yk) + (1- t) \sum_{i=1}^k \log \frac{P(y_i|\vec{x}_0^{(i)})}{P(y_i|\vec{x}_1^{(i)})} \nn \\ 
    & ~~~
    - (1-t) \sum_{i=1}^k \log {P(y_i|\vec{x}_0^{(i)})} - t \sum_{i=1}^k \log {P(y_i|\vec{x}_1^{(i)})}\\
    &\stackrel{\eqref{eq:typical_bound1}}{\leq} (n-k)E^* - (1-t) \sum_{i=1}^k \log {P(y_i|\vec{x}_0^{(i)})} \nn \\ & ~~~ - t \sum_{i=1}^k \log {P(y_i|\vec{x}_1^{(i)})}  +(\sqrt{2n}+4)\log \frac{1}{p_{\min}} + \log 4 . \label{eq:logp1_lb}
\end{align}
We can now proceed to lower bound the total error probability; starting with Lemma \ref{lem:pe01_lb}, we have
\begin{align}
    & p_{e,0} + p_{e,1}\\
    &\ge \sum_{\vec{y} \in A} P(\vec{y}|\Theta=0)p_{e,0}(\vec{y}) + \sum_{\vec{y} \in A} P(\vec{y}|\Theta=1)p_{e,1}(\vec{y})\\
    & \geq  \sum_{\vec{y} \in A_0} P(\vec{y}|\Theta=0)p_{e,0}(\vec{y}) + \sum_{\vec{y} \in A_1} P(\vec{y}|\Theta=1)p_{e,1}(\vec{y})\\
    & \stackrel{\eqref{eq:logp0_lb}, \eqref{eq:logp1_lb}}{\geq}  \sum_{\vec{y} \in A_0 \cup A_1} e^{-(n-k)E^* - (\sqrt{2n}+4)\log \frac{1}{p_{\min}} - \log 4} \nn \\
        &\qquad\qquad \times  P(\vec{y}|\Theta=0)^{1-t} P(\vec{y}|\Theta=1)^t \label{eq:sum_a0_a1_} \\
    &\stackrel{\eqref{eq:dcpe}}{\geq} \sum_{\vec{y} \in A_0 \cup A_1} \exp\left( \sum_{i=1}^k \dc(\vec{x}_0^{(i)}, \vec{x}_1^{(i)}, P, t)\right) \nn \\ & \qquad\qquad \times P(\vec{y}|\Theta=0)^{1-t} P(\vec{y}|\Theta=1)^t \nn \\ & \qquad\qquad \times e^{-nE^*  - (\sqrt{2n}+4)\log \frac{1}{p_{\min}} - \log 4},
    \label{eq:sum_a0_a1}
\end{align}
where in \eqref{eq:sum_a0_a1_} and \eqref{eq:sum_a0_a1} we recall that $k$ depends on $\vec{y}$.

It will be useful to interpret the terms being summed in \eqref{eq:sum_a0_a1} (excluding the $e^{-nE^* - \dotsc}$ term) as a tilted probability distribution along similar lines to $P_s$ and $Q_s$ in Definition \ref{def:tilted}.  Specifically, the following lemma will be used for this purpose.

\begin{lemma} \label{lem:tilted_distr}
For each $1\leq i\leq n$, let $\Ptilde_{t}^{(i)}$ be the probability distribution defined by
\begin{align}
    \Ptilde_{t}^{(i)}(y) &= \frac{P(y_i|\vec{x}_0^{(i)})^{1-t}P(y_i|\vec{x}_1^{(i)})^{t}}{\sum_{y' \in \calY}    P(y'|\vec{x}_0^{(i)})^{1-t}P(y'|\vec{x}_1^{(i)})^{t}} \nn \\ &= P(y_i|\vec{x}_0^{(i)})^{1-t}P(y_i|\vec{x}_1^{(i)})^{t} \exp(\dc(\vec{x}_0^{(i)}, \vec{x}_1^{(i)}, P, t)). \label{eq:Ptilde_def}
\end{align}
Moreover, let $\Ptilde_t$ be the product distribution defined by $\Ptilde_t(\yn) = \prod_{i=1}^n \Ptilde_{t}^{(i)}(y_i)$;
 with a slight abuse of notation we also define $\Ptilde_t(\yk)$ similarly when $k < n$. Then, we have
    \begin{multline}
        \Ptilde_t(\yk) = \exp \left(\sum_{i=1}^k \dc(\vec{x}_0^{(i)}, \vec{x}_1^{(i)}, P, t)\right)  \\ \times P(\yk|\Theta=0)^{1-t} P(\yk|\Theta=1)^t
        \label{eq:specific_string}
    \end{multline}
\end{lemma}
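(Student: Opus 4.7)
The plan is to verify both equalities in Lemma \ref{lem:tilted_distr} via direct calculation using the definition of the Chernoff divergence; no new ideas are needed here since both claims are essentially bookkeeping.

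First, I would verify the second equality in \eqref{eq:Ptilde_def}. Recall from \eqref{eq:dc} and \eqref{eq:dc_channel} that
\begin{equation}
    \dc(\vec{x}_0^{(i)}, \vec{x}_1^{(i)}, P, t) = -\log \sum_{y' \in \calY} P(y'|\vec{x}_0^{(i)})^{1-t} P(y'|\vec{x}_1^{(i)})^{t},
\end{equation}
so the denominator appearing in the first expression for $\Ptilde_t^{(i)}(y)$ is exactly $\exp(-\dc(\vec{x}_0^{(i)}, \vec{x}_1^{(i)}, P, t))$. Dividing is the same as multiplying by $\exp(\dc(\vec{x}_0^{(i)}, \vec{x}_1^{(i)}, P, t))$, which gives the claimed second expression.

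Next, I would establish \eqref{eq:specific_string} by taking the product over $i = 1, \ldots, k$ of the second expression in \eqref{eq:Ptilde_def}:
\begin{align}
    \Ptilde_t(\yk) &= \prod_{i=1}^k \Ptilde_t^{(i)}(y_i) \nn \\
    &= \prod_{i=1}^k \Big( P(y_i|\vec{x}_0^{(i)})^{1-t} P(y_i|\vec{x}_1^{(i)})^{t} \exp(\dc(\vec{x}_0^{(i)}, \vec{x}_1^{(i)}, P, t)) \Big).
\end{align}
Grouping the exponential factors yields $\exp\big(\sum_{i=1}^k \dc(\vec{x}_0^{(i)}, \vec{x}_1^{(i)}, P, t)\big)$, and using the memoryless convention $P(\yk|\Theta=\nu) = \prod_{i=1}^k P(y_i|\vec{x}_\nu^{(i)})$ for $\nu \in \{0,1\}$ (which holds because we have reduced to deterministic encoders with fixed codewords $\vec{x}_0, \vec{x}_1$), the remaining product factors as $P(\yk|\Theta=0)^{1-t} P(\yk|\Theta=1)^{t}$, which completes the proof.

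There is no real obstacle here; the only point that requires a moment of care is to recognize that the denominator in the first expression for $\Ptilde_t^{(i)}$ is precisely $e^{-\dc}$, and that the product factorization is valid only because the encoder is deterministic (an assumption already made in Section \ref{sec:prelim}) and $P$ is memoryless, so that $P(\yk\mid \Theta=\nu)$ factors into a per-coordinate product.
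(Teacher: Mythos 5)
Your proof is correct and follows essentially the same route as the paper's: a direct computation that uses the definition of $\dc$ to identify the normalizing constant with $e^{-\dc}$, and then multiplies the per-coordinate tilted distributions, appealing to the memorylessness of $P$ and the deterministic codewords to factor $P(\yk\mid\Theta=\nu)$. The only cosmetic difference is that you expand the product and regroup, while the paper starts from the right-hand side of \eqref{eq:specific_string} and factorizes it; these are the same calculation read in opposite directions.
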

    \begin{proof}
        Since $P$ is memoryless, we have that \eqref{eq:specific_string} factorizes into
        \begin{align}
            & \prod_{i=1}^k \Big(\exp\left(\dc(\vec{x}_0^{(i)}, \vec{x}_1^{(i)}, P, t)\right) P(y_i|x_0^{(i)})^{1-t} P(y_i|x_1^{(i)})^t\Big)\\
            &=  
            \prod_{i=1}^k P(y_i|x_0^{(i)})^{1-t} P(y_i|x_1^{(i)})^t \exp(\dc(\vec{x}_0^{(i)}, \vec{x}_1^{(i)}, P, t))
        \end{align}
        as required.
    \end{proof}
For each $\vec{y}$ such that $\Ptilde_t(\vec{y})>0$, we must have $P^n(\yn|\vec{x}_0)>0$ and $P^n(\yn|\vec{x}_1)>0$, so by the properties established previously for $A$, exactly one string in $A$ must appear as a prefix when $\yn$ is distributed over $\Ptilde_t$.  Therefore,
\begin{equation}
    \sum_{\vec{y} \in A} \Ptilde_t(\vec{y}) = 1
\end{equation}
Equation \eqref{eq:sum_a0_a1} now simplifies to
\begin{align}
    &p_{e,0} + p_{e,1} \nn \\ &\geq e^{-nE^* - (\sqrt{2n}+4)\log \frac{1}{p_{\min}} - \log 4}  \sum_{\vec{y} \in A_0 \cup A_1} \Ptilde_t(\vec{y})\\
    &= e^{-nE^* - (\sqrt{2n}+4)\log \frac{1}{p_{\min}}-\log 4}\left(1- \sum_{\vec{y} \in A_2} \Ptilde_t(\vec{y})\right),
    \label{eq:perr_final}
\end{align}
where the summation of $A_2$ arises by recalling that $A_0,A_1,A_2$ forms a disjoint partition of $A$.

With the above in place, it is sufficient to show the following in order to complete the proof of Theorem \ref{thm:binary_optimal_exact}.

\begin{lemma} 
Under the preceding definitions, it holds that
\begin{equation}
    \sum_{\vec{y} \in A_2} \Ptilde_t(\vec{y}) \leq \frac12.
    \label{eq:finish}
\end{equation}
\label{lem:finish}
\end{lemma}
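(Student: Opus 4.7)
The plan is to apply Chebyshev's inequality to the log-likelihood ratio $L(\yn) := \sum_{i=1}^n \log \frac{P(y_i|\vec{x}_0^{(i)})}{P(y_i|\vec{x}_1^{(i)})}$ under $\Ptilde_t$, after first establishing a lower bound on $|L(\yn)|$ for $\yn \in A_2$ that follows from combining the two failure conditions at $k=n$ with the partition identity $p_{e,0}(\yn) + p_{e,1}(\yn) = 1$.

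First, I would reduce to the case $\dc'(\vec{x}_0, \vec{x}_1, P^n, t) > 0$ by symmetry: the opposite-sign case follows by swapping $\vec{x}_0 \leftrightarrow \vec{x}_1$ (with $t \mapsto 1-t$, via Lemma \ref{lem:reverse1s}), and the degenerate case $\dc'(\vec{x}_0, \vec{x}_1, P^n, t) = 0$ forces $A_2 = \emptyset$ immediately via \eqref{eq:p0_more}. Under this reduction, \eqref{eq:p0_more} implies $L(\yn) < 0$ for every $\yn \in A_2$.

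Second, I would extract a lower bound on $|L(\yn)|$. For $\yn \in A_2$, both \eqref{eq:typical_bound0} and \eqref{eq:typical_bound1} fail at $k = n$ (where $E^*(n-k) = 0$ and $\sum_{i=1}^n \log \tfrac{P(y_i|\vec{x}_0^{(i)})}{P(y_i|\vec{x}_1^{(i)})} = L(\yn)$). Exponentiating, and using $L(\yn) < 0$, these failures become
\[
    p_{e,0}(\yn) < \tfrac{1}{4}\, e^{t|L(\yn)|}\, p_{\min}^{\sqrt{2n}+3}, \qquad p_{e,1}(\yn) < \tfrac{1}{4}\, e^{-(1-t)|L(\yn)|}\, p_{\min}^{\sqrt{2n}+4}.
\]
Since $D_0$ and $D_1$ partition $\calZ^n$, we have $p_{e,0}(\yn) + p_{e,1}(\yn) = 1$. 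The second bound is trivially at most $\tfrac{1}{4}$ (using $p_{\min} \le 1$), so the first must exceed $\tfrac{3}{4}$, which rearranges to $e^{t|L(\yn)|} > 3\, p_{\min}^{-(\sqrt{2n}+3)}$ and hence (using $t \in (0,1)$ from Assumption \ref{assump:opp}) to $|L(\yn)| > \sqrt{2n}\log(1/p_{\min})$.

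Third, I would apply Chebyshev's inequality. Under $\Ptilde_t$, the product structure makes the summands $L_i := \log \tfrac{P(y_i|\vec{x}_0^{(i)})}{P(y_i|\vec{x}_1^{(i)})}$ independent, with $\e_{\Ptilde_t}[L] = \dc'(\vec{x}_0, \vec{x}_1, P^n, t) > 0$ by \eqref{eq:expected_s_log}, and ${\rm Var}_{\Ptilde_t}(L) = -\dc''(\vec{x}_0, \vec{x}_1, P^n, t) \le n (\log(1/p_{\min}))^2$ by \eqref{eq:dc_var} combined with Lemma \ref{lem:d2c} (summed over coordinates). Since $\e_{\Ptilde_t}[L] > 0$ and $A_2 \subseteq \{\yn : L(\yn) < -\sqrt{2n}\log(1/p_{\min})\}$, Chebyshev yields
\[
    \p_{\Ptilde_t}(A_2) \le \p_{\Ptilde_t}\!\bigl(|L - \e L| > \sqrt{2n}\log(1/p_{\min})\bigr) \le \frac{n(\log(1/p_{\min}))^2}{2n(\log(1/p_{\min}))^2} = \tfrac{1}{2}.
\]

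The main obstacle is the second step: the lower bound $|L(\yn)| > \sqrt{2n}\log(1/p_{\min})$ requires careful tracking of the constants in \eqref{eq:typical_bound0} and \eqref{eq:typical_bound1}, and crucially exploits the partition identity $p_{e,0}(\yn) + p_{e,1}(\yn) = 1$ to convert the two small-$p_{e,\nu}(\yn)$ bounds into a forced largeness of $|L(\yn)|$. Once this is in hand, the moment computations and Chebyshev step are routine.
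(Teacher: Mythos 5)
Your proof is correct and takes essentially the same approach as the paper: for $\yn \in A_2$, combine the failure of \eqref{eq:typical_bound0} and \eqref{eq:typical_bound1} at $k=n$ with the identity $p_{e,0}(\yn) + p_{e,1}(\yn) = 1$ to force the log-likelihood ratio to exceed $\sqrt{2n}\log\frac{1}{p_{\min}}$ in magnitude, then apply Chebyshev under $\Ptilde_t$ using the mean $\dc'(\vec{x}_0,\vec{x}_1,P^n,t)$ and the variance bound $n\big(\log\frac{1}{p_{\min}}\big)^2$. The one organizational difference is your upfront reduction to $\dc'(\vec{x}_0,\vec{x}_1,P^n,t) > 0$ via a relabeling symmetry (which then lets you trivially bound $p_{e,1}(\yn) < \tfrac14$ and deduce $p_{e,0}(\yn) > \tfrac34$); the paper avoids this by casing directly on which of $p_{e,0}(\yn), p_{e,1}(\yn)$ is at least $\tfrac12$ and reading the magnitude bound off the corresponding negated inequality, which keeps the derivation sign-agnostic and sidesteps having to verify that $A_2$, $\Ptilde_t$, and Assumption~\ref{assump:opp} all transform consistently under the swap.
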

\begin{proof}
Recall the definitions $p_{e,0}(\yk)$ and $p_{e,1}(\yk)$ from \eqref{eq:pe0_def}--\eqref{eq:pe1_def}.  Observe that when $k=n$, the conditioning on $\Theta$ becomes irrelevant due to the Markov chain relation $\Theta \to \yn \to \hat{\Theta}$.  Thus, for any sequence $\yn$, we have
\begin{equation}
    p_{e,0}(\yn) = \p_{\vec{z} \sim Q^n(\cdot | \w(\yn))} (\vec{z} \in D_1)
\end{equation}
and
\begin{equation}
    p_{e,1}(\yn) = \p_{\vec{z} \sim Q^n(\cdot | \w(\yn))} (\vec{z} \in D_0),
\end{equation}
where $D_0, D_1$ are decoding regions corresponding to $\Theta=0$ and $\Theta=1$ respectively.  Since $D_0$ and $D_1$ form a partition of the output space, we have $p_{e,0}(\yn) + p_{e,1}(\yn) = 1$, which implies that either $p_{e,0}(\yn) \ge \frac{1}{2}$ or $p_{e,1}(\yn) \ge \frac{1}{2}$.

By the definition of $A_2$, every element of $A_2$ has length exactly $n$, and \eqref{eq:typical_bound0}--\eqref{eq:typical_bound1} must both be false. If $p_{e,0}(\yn) \geq \frac{1}{2}$, then the negation of \eqref{eq:typical_bound0} (along with the crude bounds $t \le 1$ and $E(n-k) \ge 0$) implies
\begin{align}
    &\left| \sum_{i=1}^n \log \frac{P(y_i|\vec{x}_0^{(i)})}{P(y_i|\vec{x}_1^{(i)})}\right| \nn \\ &~~\geq (\sqrt{2n}+3)\log \frac{1}{p_{\min}} + \log 4 + \log p_{e,0}(\yn) \nn \\ &~~\geq \sqrt{2n} \log \frac{1}{p_{\min}}.
\end{align}
Similarly, if $p_{e,1}(\yn) \geq \frac{1}{2}$, then the negation of \eqref{eq:typical_bound1} implies
\begin{align}
    &\left| \sum_{i=1}^n \log \frac{P(y_i|\vec{x}_0^{(i)})}{P(y_i|\vec{x}_1^{(i)})}\right| \nn \\ &~~\geq (\sqrt{2n}+4)\log \frac{1}{p_{\min}} + \log 4 + \log p_{e,1}(\yn) \nn \\ &~~\geq \sqrt{2n} \log \frac{1}{p_{\min}}.
\end{align}
Since one of these two cases must hold, we conclude that for each $\yn \in A_2$,
\begin{equation}
    \left| \sum_{i=1}^n \log \frac{P(y_i|\vec{x}_0^{(i)})}{P(y_i|\vec{x}_1^{(i)})}\right| \geq \sqrt{2n} \log \frac{1}{p_{\min}}.
    \label{eq:large_likelihood}
\end{equation}
Let $B$ be the set of all strings of length $n$ such that \eqref{eq:p0_more} and \eqref{eq:large_likelihood} are true, noting that $A_2 \subseteq B$. Instead of doing the summation over $A_2$, it is easier to sum over $B$. Thus, we we will instead show that
\begin{equation}
    \sum_{\vec{y} \in B} \Ptilde_t(\vec{y}) \leq \frac12,
\end{equation}
or equivalently,
\begin{multline}
    \p_{\yn \sim \Ptilde_t} \Bigg(\left(\sum_{i=1}^n \log \frac{P(y_i|\vec{x}_0^{(i)})}{P(y_i|\vec{x}_1^{(i)})} \right) \dc'(\vec{x}_0, \vec{x}_1, P^n, t)< 0  \\ \hbox{ and } \left| \sum_{i=1}^n \log \frac{P(y_i|\vec{x}_0^{(i)})}{P(y_i|\vec{x}_1^{(i)})}\right| \geq \sqrt{2n} \log \frac{1}{p_{\min}}\Bigg) \leq \frac12.
    \label{eq:iid_sum}
\end{multline}

Observe that
\begin{align}
    &\e_{\yn \sim \Ptilde_t} \left(\log \frac{\pyxii0}{\pyxii1}\right) \nn \\ &\stackrel{\eqref{eq:Ptilde_def}}{=} \sum_{y \in \calY}  \pyxi0^{1-t}\pyxi1^{1-t}\exp(\dc(\vec{x}_0^{(i)}, \vec{x}_1^{(i)}, P,t)) \nn \\
        &\hspace*{4.5cm} \times \left(\log \frac{\pyxi0}{\pyxi1}\right)\\
    &\stackrel{\eqref{eq:expected_s_log}}{=} \dc'(\vec{x}_0^{(i)}, \vec{x}_1^{(i)}, P, t).
\end{align}
Under the distribution $\Ptilde_t$, the entries of $\yn$ are independent, and therefore
\begin{equation}
    \sum_{i=1}^n \log \frac{P(y_i|\vec{x}_0^{(i)})}{P(y_i|\vec{x}_1^{(i)})}
    \label{eq:iid_sum2}
\end{equation}
is a sum of independent variables with expected value
\begin{align}
    \sum_{i=1}^n \e \left(\log \frac{P(y_i|\vec{x}_0^{(i)})}{P(y_i|\vec{x}_1^{(i)})}\right) &= \sum_{i=1}^n \dc'(\vec{x}_0^{(i)}, \vec{x}_1^{(i)}, P, t) \nn \\ &\stackrel{\text{Lem.}~\ref{lem:iid}}{=} \dc'(\vec{x}_0, \vec{x}_1, P^n, t)
\end{align}
and the variance is bounded above by
\begin{align}
    &{\rm Var} \sum_{i=1}^n  \left(\log \frac{P(y_i|\vec{x}_0^{(i)})}{P(y_i|\vec{x}_1^{(i)})}\right) \nn \\ &\qquad\leq \sum_{i=1}^n \max_{y_i \,:\,\Ptilde_t^{(i)}(y_i)>0} \left( \log \frac{P(y_i|\vec{x}_0^{(i)})}{P(y_i|\vec{x}_1^{(i)})}\right)^2 \nn \\ &\qquad\leq n \left(\log \frac{1}{p_{\min}}\right)^2.
\end{align}
The event that $\vec{y} \in B$ is equivalent to \eqref{eq:iid_sum2} having an opposite sign to its expected value $\dc'(\vec{x}_0, \vec{x}_1, P^n, t)$ and having absolute value at least $\sqrt{2n}\log \frac{1}{p_{\min}}$, which means that it must deviate from its expected value by at least $\sqrt{2n}\log \frac{1}{p_{\min}}$. We can thus use Chevyshev inequality to obtain
\begin{equation}
    \sum_{\vec{y} \in B} \Ptilde_t(\vec{y}) \leq \frac{n \big(\log \frac{1}{p_{\min}}\big)^2}{(\sqrt{2n}\log \frac{1}{p_{\min}})^2} =\frac12.
\end{equation}
which completes the proof of Lemma \ref{lem:finish}, and thus also the proof of Theorem \ref{thm:binary_optimal_exact}.
\end{proof}

\section{Conclusion}
In this paper, we studied the error exponent of relaying a single bit over a tandem of channels. We proved a matching converse bound to the protocol in \cite{teachlearn} whenever the two channels have binary inputs. 

In \cite{teachlearn}, we also raised another open problem regarding channels with more than two inputs: \emph{Does there exist a pair of DMCs $P$, $Q$, such that every restriction of $Q$ to two inputs gives a strictly suboptimal learning rate?}  Since the present paper only deals with binary-input channels, this problem still remains open. The techniques used in this paper do not appear to generalize easily to channels with more than two inputs; for instance, the use of Lemma \ref{lem:nbound} requires bounding certain error probabilities by the number of positions in which the strings are 0 and 1 respectively, and it is unclear to what extent this can be extended to non-binary inputs.  At a more conceptual level, having all three inputs could, in principle, be beneficial compared to having only the ``best two'' of them; for example, a third input could be used by the relay to convey useful information such as the degree of uncertainty.

\section*{Acknowledgment}

This work was supported by the Singapore National Research Foundation (NRF) under grant number A-0008064-00-00.

\bibliographystyle{IEEEtran}
\bibliography{main.bbl}
    
\end{document}